\let\coloneqq\relax
\newcolumntype{x}[1]{>{\centering\arraybackslash}p{#1}}
\newtheorem{thm}{Theorem}
\newtheorem*{thm*}{Theorem}
\newtheorem{prop}[thm]{Proposition}
\newtheorem*{prop*}{Proposition}
\newtheorem{lemma}[thm]{Lemma}
\newtheorem*{lemma*}{Lemma}
\newtheorem{cor}[thm]{Corollary}
\newtheorem*{cor*}{Corollary}
\newtheorem{cj}[thm]{Conjecture}
\newtheorem*{cj*}{Conjecture}
\newtheorem{Def}[thm]{Definition}
\newtheorem*{Def*}{Definition}
\newtheorem*{question*}{Question}
\newtheorem*{problem*}{Problem}
\def\thmhead@plain#1#2#3{%
  \thmname{#1}\thmnumber{\@ifnotempty{#1}{ }\@upn{#2}}%
  \thmnote{ {\the\thm@notefont#3}}}
\let\thmhead\thmhead@plain
\theoremstyle{definition}
\newtheorem{rem}[thm]{Remark}
\newcommand{\bb}{\begin{equation}\begin{aligned}\hspace{0pt}}
\newcommand{\bbb}{\begin{equation*}\begin{aligned}}
\newcommand{\ee}{\end{aligned}\end{equation}}
\newcommand{\eee}{\end{aligned}\end{equation*}}
\newcommand\floor[1]{\left\lfloor#1\right\rfloor}
\newcommand\ceil[1]{\left\lceil#1\right\rceil}
\newcommand{\eqt}[1]{\stackrel{\mathclap{\mbox{\scriptsize #1}}}{=}}
\newcommand{\leqt}[1]{\stackrel{\mathclap{\mbox{\scriptsize #1}}}{\leq}}
\newcommand{\geqt}[1]{\stackrel{\mathclap{\mbox{\scriptsize #1}}}{\geq}}
\newcommand*{\spr}[2]{\langle #1 | #2 \rangle}
\newcommand{\ketbra}[1]{\ket{#1}\!\!\bra{#1}}
\newcommand{\ketbraa}[2]{\ket{#1}\!\!\bra{#2}}
\newcommand{\sumno}{\sum\nolimits}
\newcommand{\e}{\varepsilon}
\renewcommand{\epsilon}{\varepsilon}
\newcommand{\id}{\mathbbm{1}}
\DeclareMathOperator{\Tr}{Tr}
\DeclareMathAlphabet{\pazocal}{OMS}{zplm}{m}{n}
\DeclareMathOperator{\supp}{supp}
\newcommand{\HH}{\pazocal{H}}
\newcommand{\NN}{\pazocal{N}}
\newcommand{\XX}{\mathcal{X}}
\newcommand{\YY}{\mathcal{Y}}
\newcommand{\FF}{\mathcal{F}}
\newcommand{\lsmatrix}{\left(\begin{smallmatrix}}
\newcommand{\rsmatrix}{\end{smallmatrix}\right)}
\newcommand*\rel@kern[1]{\kern#1\dimexpr\macc@kerna}
\newcommand*\widebar[1]{%
  \begingroup
  \def\mathaccent##1##2{%
    \rel@kern{0.8}%
    \overline{\rel@kern{-0.8}\macc@nucleus\rel@kern{0.2}}%
    \rel@kern{-0.2}%
  }%
  \macc@depth\@ne
  \let\math@bgroup\@empty \let\math@egroup\macc@set@skewchar
  \mathsurround\z@ \frozen@everymath{\mathgroup\macc@group\relax}%
  \macc@set@skewchar\relax
  \let\mathaccentV\macc@nested@a
  \macc@nested@a\relax111{#1}%
  \endgroup
}
\tikzset{meter/.append style={draw, inner sep=10, rectangle, font=\vphantom{A}, minimum width=30, line width=.8, path picture={\draw[black] ([shift={(.1,.3)}]path picture bounding box.south west) to[bend left=50] ([shift={(-.1,.3)}]path picture bounding box.south east);\draw[black,-latex] ([shift={(0,.1)}]path picture bounding box.south) -- ([shift={(.3,-.1)}]path picture bounding box.north);}}}
\tikzset{roundnode/.append style={circle, draw=black, fill=gray!20, thick, minimum size=10mm}}
\tikzset{squarenode/.style={rectangle, draw=black, fill=none, thick, minimum size=10mm}}
\definecolor{Blues5seq1}{RGB}{239,243,255}
\definecolor{Blues5seq2}{RGB}{189,215,231}
\definecolor{Blues5seq3}{RGB}{107,174,214}
\definecolor{Blues5seq4}{RGB}{49,130,189}
\definecolor{Blues5seq5}{RGB}{8,81,156}
\definecolor{Greens5seq1}{RGB}{237,248,233}
\definecolor{Greens5seq2}{RGB}{186,228,179}
\definecolor{Greens5seq3}{RGB}{116,196,118}
\definecolor{Greens5seq4}{RGB}{49,163,84}
\definecolor{Greens5seq5}{RGB}{0,109,44}
\definecolor{Reds5seq1}{RGB}{254,229,217}
\definecolor{Reds5seq2}{RGB}{252,174,145}
\definecolor{Reds5seq3}{RGB}{251,106,74}
\definecolor{Reds5seq4}{RGB}{222,45,38}
\definecolor{Reds5seq5}{RGB}{165,15,21}
\renewenvironment{boxed}[1][white]%
{\expandafter\ifstrequal\expandafter{#1}{filled}{\begin{tcolorbox}[colback=gray!3,colframe=gray!20,breakable=false,enhanced,left=5.75pt,right=5.75pt,grow sidewards by=10pt]}{\begin{tcolorbox}[colback=MidnightBlue!70!black!70!TealBlue!4!white,colframe=MidnightBlue!70!black!70!TealBlue!50!white,breakable=false,enhanced,left=5.75pt,right=5.75pt,grow sidewards by=10pt]}}%
  {\end{tcolorbox}}
\newcommand{\uml}[1]{\ddot{#1}}
\newcommand{\opnsa}[1]{\ensuremath{E_0^{\rm NS,a}(#1)}} 
\newcommand{\errnsa}[1]{\ensuremath{E^{\rm NS,a}(0^+,#1)}}  
\newcommand{\errpl}[1]{\ensuremath{E^{\emptyset}(0^+,#1)}}             
\newcommand{\errL}[1]{\ensuremath{E_L^{\emptyset}(0^+,#1)}}
\newcommand{\old}[1]{}
\newcommand{\new}[1]{#1}
\begin{document}

\title{Quantum umlaut information}

\author{Filippo Girardi}
\email{filippo.girardi@sns.it}
\affiliation{Scuola Normale Superiore, Piazza dei Cavalieri 7, 56126 Pisa, Italy}
\affiliation{QuSoft, Science Park 123, 1098 XG Amsterdam, The Netherlands}
\affiliation{Korteweg--de Vries Institute for Mathematics, University of Amsterdam, Science Park 105-107, 1098 XG Amsterdam, The Netherlands}

\author{Aadil Oufkir}
\email{oufkir@physik.rwth-aachen.de}
\affiliation{Institute for Quantum Information, RWTH Aachen University,  Germany}

\author{Bartosz Regula}
\affiliation{Mathematical Quantum Information RIKEN Hakubi Research Team, RIKEN Pioneering Research Institute (PRI) and RIKEN Center for Quantum Computing (RQC), Wako, Saitama 351-0198, Japan}

\author{Marco~Tomamichel}
\affiliation{Centre for Quantum Technologies, National University of Singapore, Singapore}
\affiliation{Department of Electrical and Computer Engineering, National University of Singapore, Singapore}

\author{Mario Berta}
\affiliation{Institute for Quantum Information, RWTH Aachen University,  Germany}

\author{Ludovico Lami}
\affiliation{Scuola Normale Superiore, Piazza dei Cavalieri 7, 56126 Pisa, Italy}
\affiliation{QuSoft, Science Park 123, 1098 XG Amsterdam, The Netherlands}
\affiliation{Korteweg--de Vries Institute for Mathematics, University of Amsterdam, Science Park 105-107, 1098 XG Amsterdam, The Netherlands}

\begin{abstract}
We study the quantum umlaut information, a correlation measure defined for bipartite quantum states $\rho_{AB}$ as a reversed variant of the quantum mutual information: $U(A;B)_\rho = \min_{\sigma_B} D(\rho_A\otimes \sigma_B\|\rho_{AB})$ in terms of the quantum relative entropy $D$. As in the classical case~\href{https://arxiv.org/abs/2503.18910}{[Girardi et al., arXiv:2503.18910]}, this definition allows for a closed-form expression and has an operational interpretation as the asymptotic error exponent in the hypothesis testing task of deciding whether a given bipartite state is product or not. We generalise the umlaut information to quantum channels, where it also extends the notion of `oveloh information'~\href{https://arxiv.org/abs/2404.16101}{[Nuradha et al., J.\ Phys.\ A 
58, 165304 (2025)]}. We prove that channel umlaut information is additive for classical-quantum channels, while we observe additivity violations for fully quantum channels. Inspired by recent results in entanglement theory, we then show as our main result that the regularised umlaut information constitutes a fundamental measure of the \emph{quality} of classical information transmission over a quantum channel\,---\,as opposed to the capacity, which quantifies the \emph{quantity} of information that can be sent. This interpretation applies to coding assisted by activated non-signalling correlations, and the channel umlaut information is in general larger than the corresponding expression for unassisted communication as obtained by Dalai for the classical-quantum case \href{https://arxiv.org/abs/1201.5411}{[IEEE Trans.\ Inf.\ Theory 59, 8027 (2013)]}. Combined with prior works on non-signalling--assisted zero-error channel capacities, our findings imply a dichotomy between the settings of zero-rate error exponents and zero-error communication.
While our results are single-letter only for classical-quantum channels, we also give a single-letter bound for fully quantum channels in terms of the `geometric' version of umlaut information.
\end{abstract}

\maketitle


\section{Introduction}

\subsection{Motivation}

The quantification of the information transmission capabilities of quantum channels is a fundamental problem in quantum information theory, with prominent applications in quantum technologies. Most of the literature focused on the question of computing the asymptotic \emph{capacities} of channels, i.e.\ the maximum amount of information (either classical or quantum) that can be sent across a channel with asymptotically vanishing error.
In particular, in the study of communicating classical information through quantum channels, this has led to the expressions for the unassisted classical capacity of a channel as the regularised Holevo information~\cite{Holevo-S-W, H-Schumacher-Westmoreland}, and for entanglement- or non-signalling--assisted classical capacity as the quantum mutual information of the channel~\cite{entanglement-assisted, Bennett2002, leung_2015}. This provides a fundamental operational interpretation to these correlation measures.

Recently, however, there has been a resurgence of interest in a different approach to benchmarking the performance of asymptotic information-theoretic protocols. The main idea is to move the focus away from the \emph{quantity} of the resource that has been obtained (achievable yield with asymptotically vanishing error) to the \emph{quality} of the output (maximum error exponent achievable with arbitrarily high yield). An application of this quality-focused approach to the problem of quantum entanglement distillation by some of us~\cite{lami2024asymptotic} revealed a curious property: the shift of focus can lead to simpler asymptotic expressions, where despite the asymptotic yield rates being given by effectively uncomputable regularised expressions, the asymptotic exponent of the error is given by a single-letter formula. It is in fact given by a rather intriguing quantity: a relative entropy of entanglement where the arguments are reversed in comparison to conventional notions of entanglement entropies~\cite{Vedral1997}. 

This motivates two questions: first, whether this shift of perspective from quantity to quality can lead to new insights and simplifications in other quantum Shannon-theoretic tasks, and second, if other variants of such reversed quantum relative entropies could find operational applications in quantifying the performance of asymptotic protocols.
As for the first question, applications of this mindset can in fact be traced back to early seminal works in classical information theory, where investigation of the zero-rate limit of the error exponent was initiated by Shannon, Gallager, and Belerkamp~\cite{SHANNON1967522, Berlekamp1964BlockCW, Gallager1965}. Similar applications to quantum communications have been more sparse, however --- with two notable exceptions by Holevo~\cite{Holevo2000} and Dalai~\cite{Dalai_2013} --- and many questions remain unanswered. Concerning the second question, the study of reversed entropies has previously appeared in some contexts, including entanglement theory~\cite{eisert_reverse,lami2024asymptotic} as well as quantum statistics, where notably~\cite{Nuradha2024,ji_2024} introduced reverse variants of the Holevo information and other entropic quantities, although not in the context of exact expressions for asymptotic exponents in quantum Shannon theory.


\subsection{Approach}

Before we can investigate quantum channels, it is useful to take a step back and look at quantum states instead. A key quantifier of quantum correlations for quantum states is the quantum mutual information. Mathematically, for a quantum state $\rho_{AB}$ this can be expressed as 
\bb
I(A\!:\!B)_{\rho} = D(\rho_{AB}\| \rho_A \otimes \rho_B) = \min_{\sigma_B} D(\rho_{AB}\| \rho_A \otimes \sigma_B) \, ,
\label{mutual_information_intro}
\ee
where $D(\rho\|\sigma)\coloneqq \Tr \left[ \rho (\log \rho - \log \sigma) \right]$ is the Umegaki quantum relative entropy, and the minimisation is over all quantum states $\sigma_B$ on the $B$ system. The fact that the mutual information appears naturally in both the Holevo quantity~\cite{Holevo-S-W} and in the formula for the entanglement-assisted capacity~\cite{entanglement-assisted} suggests that it is the right starting point for us. Notably, the same quantity also governs the simulation cost of quantum channels~\cite{Bennett2014, Berta2011}, it determines the maximum amount of information that can be securely transmitted using a quantum one-time pad~\cite{Schumacher2006Oct}, and it can be used to bound the generalisation error of learning algorithms~\cite{Caro2024Jun}.

Motivated in particular by the results of~\cite{lami2024asymptotic} in entanglement theory, we can now ask ourselves: what happens if we flip the arguments of the relative entropy in~\eqref{mutual_information_intro}? In~\cite{Filippo25}, we studied this problem in the classical case, inspired also by the earlier ideas of Palomar and Verd\'{u}~\cite{Lautum_08}. There, we considered a measure of correlation between classical random variables $X$ and $Y$ called umlaut information,  defined as
\bb
U(X;Y) = \min_{Q_Y} D(P_X Q_Y \| P_{XY})\, ,
\ee 
where $P_{XY}$ is the joint probability distribution of marginal $P_X$, and $Q_Y$ is an arbitrary probability distribution on the alphabet where $Y$ is defined. The classical umlaut information is additive and has operational interpretation in hypothesis testing and channel coding. In this work, we generalise the umlaut information to the quantum setting and study its properties.  

For a bipartite state $\rho_{AB}$, the \emph{quantum umlaut information} is defined as
\bb \label{intro:umlaut}
U(A;B)_{\rho}&\coloneqq \min_{\sigma_B}D(\rho_A\otimes\sigma_B\|\rho_{AB})\, ,
\ee
where $\sigma_B$ is a quantum state, $\rho_A=\Tr_B[\rho_{AB}]$, and $D$ denotes the Umegaki quantum relative entropy~\cite{Umegaki1962, Hiai1991}. The geometric umlaut information $U_{BS}$ is defined similarly using the Belavkin--Staszewski quantum relative entropy~\cite{Belavkin-Staszewski}. Conceptually similar quantities have previously appeared in~\cite{Nuradha2024}, where a reversed variant of the Holevo information~\cite{Holevo1973} was introduced for a certain class of classical-quantum states and given the name of `oveloh information', as well as in~\cite{ji_2024}, where variants of the umlaut information based on the Petz--R\'enyi $\alpha$-divergences (for $\alpha\neq 1$) have been defined as the Petz--R\'enyi lautum information. Here, we prefer to avoid the terminology `lautum information', which, according to the original formulation by Palomar and Verd\'u~\cite{Lautum_08}, suggests the idea of an \emph{unoptimised} quantity where one sets $\sigma_B = \rho_B$ in~\eqref{intro:umlaut}. The corresponding lautum information and our umlaut information (optimised over $\sigma_B$) behave differently operationally, as already observed in the classical case~\cite{Filippo25}.


\subsection{Overview of results}

The quantum umlaut information can be shown to admit the closed-form expression 
\bb
U(A;B)_{\rho} = -S(\rho_A)-\log\Tr\left[\exp\left(\Tr_A[(\rho_A\otimes\id_B)\log\rho_{AB}]\right)\right] ,
\ee
where $S$ is the von Neumann entropy (Proposition~\ref{prop:closed_lautum}). This formulation implies that the quantum umlaut information is additive under tensor products. In Theorem~\ref{thm:interp_L2}, we show that the quantum umlaut information has an operational interpretation in a composite hypothesis testing problem  that consists in deciding whether a bipartite quantum system is in a product state or not. More precisely, for a quantum state $\rho_{AB}$, we establish that the quantum umlaut information $U(A;B)_{\rho}$ governs the exponential decaying rate of the type-I error while maintaining a type-II error smaller than a constant $\epsilon>0$ for the problem of testing the two hypotheses $H_0: \tau_{A^nB^n}=\rho_{AB}^{\otimes n}$ vs $H_1: \tau_{A^nB^n}=\rho_{A}^{\otimes n}\otimes \sigma_{B^n}$ where $\sigma_{B^n}$ is an arbitrary quantum state. 

For a quantum channel $\pazocal{N}_{A\to B}$, the quantum umlaut information is defined by optimising the corresponding state quantity over all input states
\bb
U(\pazocal{N}) \coloneqq \sup_{\Psi_{A'A}} U(A';B)_{(\mathrm{Id}\otimes\NN)(\Psi)} = \sup_{\rho_{A'}} \min_{\sigma_B} D\left(\rho_{A'}\otimes \sigma_B \,\middle\|\, \rho_{A'}^{1/2} J_{A'B}^{(\pazocal{N})} \rho_{A'}^{1/2}\right) ,
\ee
where $\Psi_{A'A}$ is an arbitrary input state, which can, without loss of generality, assumed to be pure, the states $\rho_{A'}, \sigma_B$ in the rightmost expression are in general mixed, with $\rho_{A'}$ representing the $A'$ marginal of $\Psi_{A'A}$, and $J_{A'B}^{(\pazocal{N})}$ is the (un-normalised) Choi matrix of $\pazocal{N}$ (Definition~\ref{def:channel_lautum}). Extending the classical results of Ref.~\cite{Filippo25}, we show that the channel quantum umlaut information stays additive for classical-quantum channels --- that is, $U(\pazocal{M} \otimes \pazocal{M}) = 2 U(\pazocal{M})$ for all channels $\pazocal{M}$ that take as an input a classical probability distribution but output quantum states. However, the umlaut information is in general non-additive for fully quantum channels. In particular, while super-additivity, expressed by the inequality $U(\pazocal{N}\otimes \pazocal{N}) \geq 2 U (\pazocal{N}) $, always holds (Lemma~\ref{lem:super_add}), in Proposition~\ref{prop:non_add} we exhibit a quantum channel $\pazocal{N}$ such that 
\bb
U(\pazocal{N} \otimes \pazocal{N}) > 2U(\pazocal{N})\, . 
\ee
We then give an operational interpretation of the regularised channel quantum umlaut information in noisy channel coding. For a quantum channel $\pazocal{N}$ and a rate $r$ strictly below the channel capacity $C(\pazocal{N})$, the activated non-signalling error probability of sending a number of messages $M=\floor{\exp(nr)}$ through $n$ i.i.d.\ copies of the channel decays exponentially fast~\cite{Oufkir2024Oct} as
\bb
\epsilon\big(\floor{\exp(nr)}\!,\, \pazocal{N}^{\otimes n}\big) \approx \exp\left[-n E^{\rm NS, a}(r, \pazocal{N})+o(n)\right] ,
\ee
where $E^{\rm NS, a}(r, \pazocal{N})$ denotes the error exponent. As our main result, we show that in the zero-rate regime $\frac{1}{n}\log M \to 0$, the activated non-signalling error exponent is exactly equal to the regularised channel quantum umlaut information $U^\infty(\pazocal{N})\coloneqq\lim_{n\to\infty}\frac{1}{n}U(\pazocal{N}^{\otimes n})$ (Theorem~\ref{thm:exact_tho}). The regularisation is not needed for classical-quantum channels, and in the general case the exponent can also be bounded by the single letter channel geometric umlaut information $U_{BS}(\pazocal{N})$ (Proposition~\ref{prop:hierarchy_umlauts}). We also show that our results characterising the setting of zero-rate communication can be closely connected, in a complementary fashion, with the setting of zero-error non-signalling--assisted coding~\cite{duan_2016}: the umlaut information $U(\pazocal{N})$ of a general quantum channel diverges if and only if its zero-error capacity is positive.

The remainder of our manuscript is structured as follows. In Section~\ref{sec:states} we present the umlaut information and its operational interpretation for bipartite states of a quantum system; in Section~\ref{sec:channels} we discuss the channel umlaut information and its operational interpretation; finally, in Section~\ref{sec:outlook} we discuss some features of our results and raise relevant open questions.


\section{Quantum umlaut information}\label{sec:states}

\subsection{Notation and Preliminaries}

We will denote by $D(p\|q)$ the classical relative entropy between the probability distribution $p$ and $q$, and we will also denote by $D(\rho\|\sigma)$ the Umegaki quantum relative entropy between the quantum states $\rho$ and $\sigma$, defined as
\bb
    D(\rho\|\sigma)=\begin{cases}
        \Tr[\rho(\log\rho-\log\sigma)] & {\rm supp}(\rho)\subseteq {\rm supp}(\sigma)\, ,\\
        +\infty & \text{otherwise.}
    \end{cases}
\ee
The classical mutual information between two random variables $X$ and $Y$ is defined as 
\bb
I(X\!:\!Y)\coloneqq D(P_{XY}\| P_XP_Y)\, , \qquad X\sim P_X\, , \ \ Y\sim P_Y\, , \ \ (X,Y)\sim P_{XY}\, ,
\ee
where $P_X$ and $P_Y$ are the marginal probability distributions of $P_{XY}$. 
Its quantum generalisation for a bipartite state $\rho_{AB}\in\mathcal{D}(\mathcal{H}_A\otimes\mathcal{H}_B)$ is given by
\begin{equation}
I(A\!:\!B)_{\rho}\coloneqq D(\rho_{AB}\| \rho_A\otimes\rho_B)  \qquad\qquad \rho_A=\Tr_B[\rho_{AB}],\, \rho_B=\Tr_A[\rho_{AB}].
\end{equation}
Inspired by the classical mutual information, Palomar and Verd\'{u}~\cite{Lautum_08} defined the lautum information, given by
\begin{equation}
L(X\!:\!Y)\coloneqq D(P_XP_Y\|P_{XY})\, , \qquad X\sim P_X\,,\ \ Y\sim P_Y\,,\ \ (X,Y)\sim P_{XY}\, ,
\end{equation}
and discussed its properties. They also noticed that, although the mutual information can be equivalently rewritten in the variational forms
\bb
    I(X\!:\!Y)= D(P_{XY}\| P_XP_Y)=\min_{Q_Y}D(P_{XY}\| P_XQ_Y)=\min_{Q_X}\min_{Q_Y}D(P_{XY}\| Q_XQ_Y)\, ,
\ee
this property is no longer true for the lautum information, as~\cite{Lautum_08}
\bb
    L(X\!:\!Y)>\min_{Q_Y}D(P_XQ_Y\| P_{XY}).
\ee
for some $P_{XY}$.
In~\cite{Filippo25}, we study the properties and the operational interpretations of the asymmetric version given by
\bb
    U(X;Y)=\min_{Q_Y}D(P_XQ_Y\| P_{XY})\, ,
\ee
which was not considered in detail by Palomar and Verd\'{u}. The aim of this paper is to generalise our classical asymmetric formulation to quantum systems and to explore the differences emerging when we take into account the non-commutative nature of the quantum setting. Such formulation will provide a new tile in the mosaic of quantum communication; in particular, it will allow us to characterise the reliability function of a quantum channel with non-signalling assistance in the limit of vanishing rate of communication.


\subsection{Definition and basic properties}

In what follows, for a Hilbert space $\mathcal{H}$ we will denote by $\mathcal{D}(\mathcal{H})$ the set of density operators on $\mathcal{H}$. A density operator is any positive semi-definite operator with unit trace. All Hilbert spaces we consider, with the exception of those appearing in Section~\ref{sec:gaussian}, are assumed to be finite dimensional.

\begin{Def}[(Quantum umlaut information)]\label{def:lautum}
Given a bipartite state $\rho_{AB}\in\mathcal{D}(\mathcal{H}_A\otimes\mathcal{H}_B)$, the {\rm quantum umlaut information} is defined as
\bb\label{eq:def_umlaut}
U(A;B)_{\rho}&\coloneqq \min_{\sigma_B}D(\rho_A\otimes\sigma_B\|\rho_{AB}),
\ee
where $\sigma_B\in \mathcal{D}(\mathcal{H}_B)$ and $\rho_A=\Tr_B[\rho_{AB}]$.
\end{Def}

The straightforward generalisation to the quantum setting of the Palomar--Verd\'u lautum information would be
    \bb
        L(A\!:\!B)_{\rho}&\coloneqq D(\rho_A\otimes\rho_B\|\rho_{AB})
    \ee
where $\rho_A=\Tr_B[\rho_{AB}]$ and $\rho_B=\Tr_A[\rho_{AB}]$. It is clear that $U\leq L$. Furthermore, while $L$ is symmetric under exchange of $A$ and $B$, in general $U$ is not: because of this, for the latter we use the semicolon instead of the colon to separate the two systems. Some elementary properties of the quantum umlaut information are collected in the proposition below.

\begin{lemma} \label{lem:quantum_lautum_prop}
The quantum umlaut information satisfies the following properties:
\begin{enumerate}[(1)]
\item Positive definiteness: $U(A;B)_\rho \geq 0$ for all states $\rho_{AB}$, with equality if and only if $\rho_{AB} = \rho_A \otimes \rho_B$.
\item Boundedness: $U(A;B)_\rho < \infty$ if and only if there exists a pure state $\psi_B = \ketbra{\psi}_B$ such that $\supp(\rho_A \otimes \psi_B) \subseteq \supp(\rho_{AB})$. 
\item Data-processing inequality: for all states $\rho_{AB}$ and all pairs of quantum channels $\Lambda_{A\to A'}$ and $\Lambda_{B\to B'}$, setting $\omega_{A'B'} \coloneqq (\Lambda_{A\to A'}\otimes\Lambda_{B\to B'})(\rho_{AB})$ we have
\bb\label{eq:data_proc_u}
U(A';B')_{\omega}\leq U(A;B)_{\rho}\, .
\ee
\end{enumerate}    
\end{lemma}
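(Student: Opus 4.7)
The three items are of increasing depth, but all follow from standard facts about the Umegaki relative entropy $D$.

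For part (1), non-negativity is immediate from Klein's inequality: every term $D(\rho_A\otimes\sigma_B\|\rho_{AB})$ is already $\geq 0$. For the equality case, I would argue first that the infimum in \eqref{eq:def_umlaut} is attained: the set of states $\sigma_B$ is compact and $\sigma_B\mapsto D(\rho_A\otimes\sigma_B\|\rho_{AB})$ is lower semicontinuous, so a minimiser $\sigma_B^\star$ exists whenever the infimum is finite. If $U(A;B)_\rho=0$, then $D(\rho_A\otimes\sigma_B^\star\|\rho_{AB})=0$, hence by the equality condition in Klein's inequality $\rho_{AB}=\rho_A\otimes\sigma_B^\star$; taking the partial trace over $A$ identifies $\sigma_B^\star=\rho_B$. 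Conversely, if $\rho_{AB}=\rho_A\otimes\rho_B$, choosing $\sigma_B=\rho_B$ in \eqref{eq:def_umlaut} yields $U(A;B)_\rho=0$.

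For part (2), the key observation is that $\supp(\rho_A\otimes\sigma_B)=\supp(\rho_A)\otimes\supp(\sigma_B)$, and that $D(\rho_A\otimes\sigma_B\|\rho_{AB})$ is finite precisely when the support inclusion $\supp(\rho_A\otimes\sigma_B)\subseteq \supp(\rho_{AB})$ holds. If such a $\sigma_B$ exists, pick any unit vector $\ket{\psi}\in\supp(\sigma_B)$; then $\supp(\rho_A)\otimes\Span\{\ket{\psi}\}\subseteq\supp(\rho_A)\otimes\supp(\sigma_B)\subseteq\supp(\rho_{AB})$, showing that the pure state $\psi_B=\ketbra{\psi}_B$ satisfies the required inclusion. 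The converse direction is immediate, since the pure-state choice $\sigma_B=\psi_B$ is feasible in \eqref{eq:def_umlaut} and yields a finite value.

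For part (3), it suffices to restrict the minimisation defining $U(A';B')_\omega$ to states of the form $\sigma_{B'}=\Lambda_{B\to B'}(\sigma_B)$ with $\sigma_B$ an arbitrary state on $B$. Using $\omega_{A'}=\Lambda_{A\to A'}(\rho_A)$ and the tensor-product structure $\Lambda_{A\to A'}\otimes\Lambda_{B\to B'}$, one gets
\begin{align}
U(A';B')_\omega
&\leq \min_{\sigma_B} D\!\left((\Lambda_{A\to A'}\otimes\Lambda_{B\to B'})(\rho_A\otimes\sigma_B)\,\middle\|\,(\Lambda_{A\to A'}\otimes\Lambda_{B\to B'})(\rho_{AB})\right)\\
&\leq \min_{\sigma_B} D(\rho_A\otimes\sigma_B\|\rho_{AB}) = U(A;B)_\rho,
\end{align}
where the second inequality is the standard data-processing inequality for the Umegaki relative entropy under the CPTP map $\Lambda_{A\to A'}\otimes\Lambda_{B\to B'}$.

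None of the three items is genuinely difficult; the only mild subtlety is the attainment of the minimum in part (1), which I would justify once (compactness of states plus lower semicontinuity of $D$) and reuse implicitly throughout.
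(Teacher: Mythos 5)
Your proof is correct and follows essentially the same route as the paper's: non-negativity and the equality case of the relative entropy for (1), the support-inclusion characterisation of finiteness reduced to pure states for (2), and restriction of the minimisation followed by data processing for (3). The only difference is that you make explicit the attainment of the minimum in (1) via compactness and lower semicontinuity, which the paper leaves implicit; this is a harmless (indeed welcome) extra bit of rigour.
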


\begin{proof}
Non-negativity follows from the definition~\eqref{eq:def_umlaut} via the non-negativity of the Umegaki relative entropy. Since $D(\rho\|\sigma)=0$ if and only if $\rho=\sigma$, we also deduce that $U(A;B)_\rho=0$ if and only if $\rho_{AB} = \rho_A\otimes \sigma_B$ for some state $\sigma_B$ on $B$. Taking the partial trace on $A$ of this equality shows that in fact $\rho_B = \sigma_B$, completing the proof of~(1). Property~(2) follows from the observation that $D(\rho\|\sigma) < \infty$ if and only if $\supp(\rho) \subseteq \supp(\sigma)$, once one notes that $\supp(\rho_A \otimes \sigma_B) \subseteq \supp(\rho_{AB})$ for some $\sigma_B$ if and only if this happens for some pure state $\sigma_B = \ketbra{\psi}_B$, namely, any $\ket{\psi}_B\in \supp(\sigma_B)$. Finally, for~(3) we write
\bb
U(A';B')_{(\Lambda\otimes\Lambda)(\rho)}
&=\min_{\sigma_{B'}}D\big(\Lambda_{A\to A'}(\rho_A)\otimes\sigma_{B\to B'}\,\big\|\,(\Lambda_{A\to A'}\otimes\Lambda_{B\to B'})(\rho_{AB})\big)\\
&\leqt{(i)}\min_{\sigma_B}D\big(\Lambda_{A\to A'}(\rho_A)\otimes\Lambda_{B\to B'}(\sigma_B)\,\big\|\,(\Lambda_{A\to A'}\otimes\Lambda_{B\to B'})(\rho_{AB})\big)\\
&\leqt{(ii)}\min_{\sigma_B}D\big(\rho_A\otimes\sigma_B\,\big\|\,\rho_{AB}\big)\\
&=U(A;B)_{\rho},
\ee
where in~(i) we have restricted the minimisation to the set of states $\sigma_{B'}$ which can be written as $\sigma_{B'}=\Lambda_{B\to B'}(\sigma_B)$ for an arbitrary $\sigma_B\in\mathcal{D}(\mathcal{H}_B)$, and in~(ii) we have leveraged the data processing inequality for the quantum relative entropy~\cite{Lindblad-monotonicity}.
\end{proof}

As a simple consequence of Lemma~\ref{lem:quantum_lautum_prop}(2), we observe the (elementary) fact that $U(A;B)_\rho < \infty$ holds for all full-rank states $\rho_{AB}$, and in particular almost everywhere in a measure-theoretic sense.

Definition~\ref{def:lautum} can be generalised by means of the Petz--R\'{e}nyi $\alpha$-relative entropies. The quantity below has been also recently considered in~\cite[Eq.~(E3)]{ji_2024}.

\begin{Def}[(Petz--R\'{e}nyi $\alpha$-umlaut information)]\label{def:petz_lautum} 
Let $\alpha\in(0,1)\cup(1,\infty)$. Given a bipartite state $\rho_{AB}\in\mathcal{D}(\mathcal{H}_A\otimes\mathcal{H}_B)$, we define its {\rm Petz--Rényi} $\alpha$-{\rm umlaut information} as
\bb
U_\alpha(A;B)_{\rho}&\coloneqq \min_{\sigma_B}D_\alpha(\rho_A\otimes\sigma_B\|\rho_{AB})
\ee
where $\sigma_B\in \mathcal{D}(\mathcal{H}_B)$ and $\rho_A=\Tr_B[\rho_{AB}]$ and $D_\alpha$ is the Petz--Rényi $\alpha$-relative entropy:
\bb
    D_{\alpha}(\rho\|\sigma)\coloneqq\frac{1}{\alpha-1}\log\Tr\left[\rho^\alpha\sigma^{1-\alpha}\right].
\ee
\end{Def}

The data processing inequality for the umlaut information in eq.~\eqref{eq:data_proc_u} can be generalised to the Petz--Rényi $\alpha$-umlaut information for $\alpha\in(0,1)\cup(1,2)$ with the same strategy leveraging the data processing inequality for the Petz--Rényi $\alpha$-relative entropy~\cite{Tomamichel_2009}.

\begin{lemma}\label{lem:alpha_to_1} Given a bipartite state $\rho_{AB}\in\mathcal{D}(\mathcal{H}_A\otimes\mathcal{H}_B)$, it holds that
    \bb
        \limsup_{\alpha\to 1^-}U_\alpha(A;B)_{\rho}=U(A;B)_{\rho}.
    \ee
\end{lemma}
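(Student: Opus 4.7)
The plan is to prove the identity by showing both directions of the inequality, leveraging two standard properties of the Petz--R\'enyi relative entropy: monotonicity of $\alpha \mapsto D_\alpha(\rho\|\sigma)$ on $(0,\infty)\setminus\{1\}$, and the monotone convergence $D_\alpha(\rho\|\sigma) \nearrow D(\rho\|\sigma)$ as $\alpha\to 1^-$. Since $U_\alpha$ inherits monotonicity from $D_\alpha$ (the pointwise order in $\alpha$ is preserved under minimisation in $\sigma_B$), the $\limsup$ is actually a monotone limit, so the equality can equivalently be written as $\lim_{\alpha\to 1^-} U_\alpha(A;B)_\rho = U(A;B)_\rho$.

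For the upper bound, I would note that monotonicity gives $D_\alpha(\rho_A\otimes\sigma_B\|\rho_{AB}) \le D(\rho_A\otimes\sigma_B\|\rho_{AB})$ for every $\alpha\in(0,1)$ and every $\sigma_B$. Taking the minimum over $\sigma_B$ on both sides yields $U_\alpha(A;B)_\rho \le U(A;B)_\rho$, and passing to the $\limsup$ gives the desired inequality. This direction is essentially tautological.

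For the lower bound, the idea is a standard compactness-plus-lower-semicontinuity argument. For each $\alpha\in(0,1)$, pick a minimiser $\sigma_\alpha^\star\in\mathcal{D}(\mathcal{H}_B)$ of $\sigma_B\mapsto D_\alpha(\rho_A\otimes\sigma_B\|\rho_{AB})$, which exists by lower semicontinuity of $D_\alpha$ and compactness of the state space. By compactness again, extract a subsequence $\alpha_k\to 1^-$ along which $\sigma_{\alpha_k}^\star\to\sigma^\star$ for some $\sigma^\star\in\mathcal{D}(\mathcal{H}_B)$. Fix any $\alpha_0\in(0,1)$; then for all sufficiently large $k$ we have $\alpha_k>\alpha_0$, and monotonicity of the Petz--R\'enyi family yields
\bb
U_{\alpha_k}(A;B)_\rho \;=\; D_{\alpha_k}\!\left(\rho_A\otimes\sigma_{\alpha_k}^\star\,\big\|\,\rho_{AB}\right) \;\ge\; D_{\alpha_0}\!\left(\rho_A\otimes\sigma_{\alpha_k}^\star\,\big\|\,\rho_{AB}\right).
\ee
Sending $k\to\infty$ and invoking the joint lower semicontinuity of $D_{\alpha_0}$ on its arguments (in fact, continuity, since $\alpha_0\in(0,1)$ makes $\sigma\mapsto\sigma^{1-\alpha_0}$ continuous on the PSD cone) gives $\liminf_k U_{\alpha_k}(A;B)_\rho \ge D_{\alpha_0}(\rho_A\otimes\sigma^\star\|\rho_{AB})$. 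Finally, letting $\alpha_0\to 1^-$ and using monotone convergence $D_{\alpha_0}\nearrow D$ yields
\bb
\limsup_{\alpha\to 1^-}U_\alpha(A;B)_\rho \;\ge\; D(\rho_A\otimes\sigma^\star\|\rho_{AB}) \;\ge\; U(A;B)_\rho,
\ee
where the last inequality simply reinstates the minimum over $\sigma_B$.

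The main technical point to be careful about is the $U(A;B)_\rho=\infty$ case, where no $\sigma_B$ achieves the support-containment condition of Lemma~\ref{lem:quantum_lautum_prop}(2). The argument above still goes through: by monotone convergence the right-hand side $D(\rho_A\otimes\sigma^\star\|\rho_{AB})$ is then forced to equal $+\infty$ as well, consistent with the claim. A second, minor subtlety is the existence of the minimiser $\sigma_\alpha^\star$ for $\alpha\in(0,1)$; this is straightforward because $D_\alpha$ is finite-valued and continuous on pairs of states in that range, so the minimum is attained. Beyond these points, the proof is a standard interchange-of-limits argument made possible by the monotonicity structure of the Petz--R\'enyi divergences.
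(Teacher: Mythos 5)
Your proof is correct, but it takes a genuinely different route from the paper. The paper's argument is: by monotonicity of $\alpha\mapsto D_\alpha$, the $\limsup$ equals $\sup_{\alpha<1}\min_{\sigma_B}D_\alpha(\rho_A\otimes\sigma_B\|\rho_{AB})$; it then invokes the Mosonyi--Hiai minimax theorem to exchange the supremum over $\alpha$ with the minimum over $\sigma_B$, and concludes via $\sup_{\alpha<1}D_\alpha=D$. You instead prove the nontrivial direction by hand: extract a convergent subsequence of minimisers $\sigma_{\alpha_k}^\star\to\sigma^\star$, use monotonicity to pass from $D_{\alpha_k}$ down to a fixed $D_{\alpha_0}$, apply lower semicontinuity of $D_{\alpha_0}$ in its first argument to take $k\to\infty$, and only then send $\alpha_0\to1^-$. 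This is in effect a self-contained re-derivation of the relevant special case of the minimax lemma, trading the citation of a somewhat specialised result for an elementary compactness-plus-semicontinuity argument; the paper's route is shorter on the page but leans on heavier machinery. Your handling of the two potential pitfalls is sound: the minimiser $\sigma_\alpha^\star$ exists because for $\alpha\in(0,1)$ the map $\sigma_B\mapsto\Tr[(\rho_A\otimes\sigma_B)^{\alpha}\rho_{AB}^{1-\alpha}]$ is continuous on the compact state space (and is strictly positive for full-rank $\sigma_B$, so the minimum is finite), and the $U(A;B)_\rho=\infty$ case is covered because $\sup_{\alpha_0<1}D_{\alpha_0}(\rho_A\otimes\sigma^\star\|\rho_{AB})=D(\rho_A\otimes\sigma^\star\|\rho_{AB})$ holds also when the right-hand side is $+\infty$. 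The only cosmetic imprecision is calling $D_{\alpha_0}$ \emph{continuous}: when the trace overlap vanishes the value jumps to $+\infty$, so one should say continuous into the extended reals, or simply lower semicontinuous --- which is all your inequality needs.
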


\begin{proof}
    Since $\alpha\mapsto D_\alpha(\rho\|\sigma)$ is a monotonically increasing function, we can write
    \bb
        \limsup_{\alpha\to 1^-}U_\alpha(A;B)_{\rho}=\sup_{\alpha<1}\min_{\sigma_B}D_\alpha(\rho_A\otimes\sigma_B\|\rho_{AB})
    \ee
    By the Mosonyi--Hiai minimax theorem~\cite[Corollary A2]{MosonyiHiai}, we can rewrite
    \bb
        \limsup_{\alpha\to 1^-}U_\alpha(A;B)_{\rho}=\min_{\sigma_B}\sup_{\alpha<1}D_\alpha(\rho_A\otimes\sigma_B\|\rho_{AB})\eqt{(i)}\min_{\sigma_B}D(\rho_A\otimes\sigma_B\|\rho_{AB})=U(A;B)_{\rho},
    \ee
    where in (i) we have used that the Petz--Rényi relative entropies converge to the Umegaki quantum relative entropy as $\alpha\to 1^-$.
\end{proof}


\subsection{Closed form and optimal marginal}

It turns out that the variational problem defining the quantum umlaut information has an explicit solution. We can therefore provide an analytical formula for $U$. As an immediate corollary, we can prove that $U$ is additive, a property which was not evident looking just at the definition of $U$. In order to do this, we need to recall an elementary inequality and introduce a definition which will be often used in the rest of the paper. 

\begin{lemma}[(Gibbs variational principle)] \label{lemma:Gibbs}
Let $H$ be a self-adjoint operator. For any quantum state $\gamma$:
    \begin{equation}
       -S(\gamma) +\Tr[\gamma H] \geq - \log \Tr[\exp[-H]],
    \end{equation}
    with equality if and only if
    \begin{equation}
        \gamma=\frac{\exp[-H]}{\Tr[\exp[-H]]}.
    \end{equation}
\end{lemma}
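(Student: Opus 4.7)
The plan is to prove the Gibbs variational principle by rewriting the left-hand side as the quantum relative entropy between $\gamma$ and the Gibbs state $\tau_H \coloneqq \exp[-H]/\Tr[\exp[-H]]$ plus the right-hand side, and then invoking non-negativity of the Umegaki relative entropy (Klein's inequality).

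Concretely, I would first note that $\tau_H$ is a well-defined full-rank density operator, since $\exp[-H]$ is positive definite for any self-adjoint $H$ acting on a finite-dimensional Hilbert space. Taking the operator logarithm of $\tau_H$, I obtain
\begin{equation}
\log \tau_H = -H - \log \Tr[\exp[-H]] \cdot \id .
\end{equation}
Substituting this into the definition of the Umegaki relative entropy yields
\begin{equation}
D(\gamma \| \tau_H) = \Tr[\gamma \log \gamma] - \Tr[\gamma \log \tau_H] = -S(\gamma) + \Tr[\gamma H] + \log \Tr[\exp[-H]] ,
\end{equation}
where the second step uses $\Tr[\gamma] = 1$. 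Rearranging, the claimed inequality is equivalent to $D(\gamma \| \tau_H) \geq 0$, which holds by Klein's inequality (i.e.\ the non-negativity of the quantum relative entropy). Since $\tau_H$ is full-rank, the support condition $\supp(\gamma)\subseteq\supp(\tau_H)$ is automatically satisfied, so there is no issue of the relative entropy being infinite.

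For the equality case, I would invoke the standard fact that $D(\rho\|\sigma) = 0$ if and only if $\rho = \sigma$; applied to $\rho = \gamma$ and $\sigma = \tau_H$, this immediately gives $\gamma = \tau_H = \exp[-H]/\Tr[\exp[-H]]$ as the unique minimiser. There is essentially no hard step here: the proof is a one-line computation once one has recognised the algebraic identity $\log \tau_H = -H - (\log Z_H) \id$, so the only thing to be careful about is ensuring that all traces and logarithms are well-defined, which in finite dimensions follows automatically from $\exp[-H] > 0$.
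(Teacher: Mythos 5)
Your proof is correct: the identity $\log\tau_H = -H - \log\Tr[\exp[-H]]\,\id$ and the reduction to $D(\gamma\|\tau_H)\geq 0$ with equality iff $\gamma=\tau_H$ is the standard derivation of the Gibbs variational principle, and all steps (full-rank of $\tau_H$, the support condition, the equality case via Klein's inequality) are handled properly in the finite-dimensional setting the paper works in. The paper simply recalls this lemma as a known elementary fact without proof, so there is nothing to compare against; your argument is exactly the one a reader would be expected to supply.
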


\begin{Def}[(Umlaut-marginal)]
Given a bipartite state $\rho_{AB}\in\mathcal{D}(\mathcal{H}_A\otimes\mathcal{H}_B)$, the umlaut-marginal $\uml{\rho}_{B}\in\mathcal{D}(\mathcal{H}_B)$ of the state $\rho_{AB}$ on the system $B$ is defined as
    \bb
        \uml{\rho}_{B}\coloneqq \frac{\exp[\Tr_A[(\rho_A\otimes\id_B)\log\rho_{AB}]]}{\Tr\left[\exp[\Tr_A[(\rho_A\otimes\id_B)\log\rho_{AB}]]\right]}.
    \ee
\end{Def}

In the following theorem we are going to show that the umlaut-marginal of a bipartite state is the optimiser of the variational problem appearing in the definition of $U$.

\begin{prop}[(A closed-form expression for the quantum umlaut information)]\label{prop:closed_lautum}
Given a bipartite state $\rho_{AB}\in\mathcal{D}(\mathcal{H}_A\otimes\mathcal{H}_B)$, its quantum umlaut information can be written as
    \bb\label{eq:formula}
        U(A;B)_{\rho}=-S(\rho_A)-\log\Tr\left[\exp\left(\Tr_A[(\rho_A\otimes\id_B)\log\rho_{AB}]\right)\right].
    \ee
    In particular, the unique minimiser in the definition of $U$ is the umlaut-marginal $\uml{\rho}_{B}$ of the state $\rho_{AB}$ on the system $B$, i.e.
    \bb
        U(A;B)_{\rho}=D(\rho_A\otimes\uml{\rho}_{B}\|\rho_{AB}).
    \ee
\end{prop}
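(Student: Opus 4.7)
The plan is to reduce the variational problem in~\eqref{eq:def_umlaut} to a minimization over $\sigma_B$ of an expression in the standard Gibbs form $-S(\sigma_B) + \Tr[\sigma_B H_B]$, and then invoke Lemma~\ref{lemma:Gibbs}. First I would dispose of the degenerate case where no $\sigma_B$ satisfies $\supp(\rho_A\otimes\sigma_B)\subseteq\supp(\rho_{AB})$: by Lemma~\ref{lem:quantum_lautum_prop}(2), the left-hand side of~\eqref{eq:formula} is then $+\infty$, and the right-hand side is also $+\infty$ since the exponent inside the trace on the right has support orthogonal to $\rho_B$ in an appropriate sense, making $\Tr[\exp(\cdot)]$ equal to $\dim \HH_B$ while the $-S(\rho_A)$ term is finite — actually, the cleanest way is to handle only the finite case and note that the degenerate case follows by taking a support-projection limit, or to restrict the analysis throughout to the support of $\rho_{AB}$, on which $\log\rho_{AB}$ is well-defined.

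Next, I would expand the Umegaki relative entropy by writing $\log(\rho_A\otimes\sigma_B)=\log\rho_A\otimes\id_B+\id_A\otimes\log\sigma_B$ and using the cyclicity of the trace together with the definition of the partial trace, obtaining
\bb
D(\rho_A\otimes\sigma_B\|\rho_{AB})
= -S(\rho_A)-S(\sigma_B)-\Tr\!\left[(\rho_A\otimes\sigma_B)\log\rho_{AB}\right]
= -S(\rho_A)-S(\sigma_B)+\Tr[\sigma_B H_B]\, ,
\ee
where I have introduced the self-adjoint operator
\bb
H_B \coloneqq -\Tr_A\!\left[(\rho_A\otimes\id_B)\log\rho_{AB}\right].
\ee

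At this point, the minimization over $\sigma_B$ decouples from the $-S(\rho_A)$ term, and applying the Gibbs variational principle of Lemma~\ref{lemma:Gibbs} gives
\bb
\min_{\sigma_B}\bigl(-S(\sigma_B)+\Tr[\sigma_B H_B]\bigr)=-\log\Tr[\exp(-H_B)]\, ,
\ee
with the unique optimizer being $\sigma_B^\star=\exp(-H_B)/\Tr[\exp(-H_B)]$. By construction, $\sigma_B^\star$ is precisely the umlaut-marginal $\uml{\rho}_B$, and substituting back yields the closed-form expression~\eqref{eq:formula}, as well as the identification $U(A;B)_\rho=D(\rho_A\otimes\uml{\rho}_B\|\rho_{AB})$.

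The only genuine obstacle is ensuring that the rearrangements leading to the Gibbs form are valid when $\rho_{AB}$ is rank-deficient: one must verify that $\sigma_B^\star$ indeed satisfies the support condition $\supp(\rho_A\otimes\sigma_B^\star)\subseteq\supp(\rho_{AB})$, so that the minimum is attained by an admissible state. This can be checked by noting that $H_B$ is finite exactly on the subspace of $\HH_B$ spanned by vectors $\ket{\psi}_B$ with $\supp(\rho_A\otimes\ketbra{\psi}_B)\subseteq\supp(\rho_{AB})$, so $\exp(-H_B)$ is automatically supported there; all remaining manipulations are then straightforward algebraic identities on the relevant support subspaces.
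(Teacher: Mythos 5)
Your proof is correct and follows essentially the same route as the paper's: expand the relative entropy so that the $\sigma_B$-dependence takes the Gibbs form $-S(\sigma_B)+\Tr[\sigma_B H_B]$ with $H_B=-\Tr_A[(\rho_A\otimes\id_B)\log\rho_{AB}]$, then apply Lemma~\ref{lemma:Gibbs} to identify the unique optimiser as the umlaut-marginal. The extra remarks on the rank-deficient case are a reasonable (if slightly tentative) addition that the paper itself glosses over, but they do not change the substance of the argument.
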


\begin{proof}
This proof generalises the one provided in the classical case~\cite[Proposition 7]{Filippo25}. Let us start by rewriting
\bb
    \Tr[(\rho_A\otimes\sigma_B)\log\rho_{AB}]&=\Tr[(\id_A\otimes\sigma_B)(\rho_A\otimes\id_B)\log\rho_{AB}]\\
    &=\Tr\left[\sigma_B\Tr_A[(\rho_A\otimes\id_B)\log\rho_{AB}]\right]\\
    &=\Tr\left[\sigma_B X_B\right],
\ee
where we have defined $X_B\coloneq \Tr_A[(\rho_A\otimes\id_B)\log\rho_{AB}]$.
Now, let us compute
\bb
    U(A;B)_{\rho}&= \inf_{\sigma_B}D(\rho_A\otimes\sigma_B\|\rho_{AB})\\
    &=-S(\rho_A)+\inf_{\sigma_B}\left\{ \Tr[\sigma_B\log\sigma_B]-\Tr[\sigma_B X_B] \right\}\\
    &=-S(\rho_A)-S(\uml{\rho}_{B})-\Tr[\uml{\rho}_{B} X_B]\\
    &=-S(\rho_A)-\log \Tr[\exp[X_B]],
\ee
where we have used the Gibbs variational principle (Lemma~\ref{lemma:Gibbs}) and introduced 
\bb
    \uml{\rho}_{B}\coloneq\frac{\exp[X_B]}{\Tr[\exp[X_B]]}.
\ee
The uniqueness is ensured by the statement of the Gibbs variational principle. This concludes the proof.
\end{proof}

Using the explicit formula for the quantum umlaut information, it is elementary to prove that it is additive.

\begin{cor}[(Additivity of  the quantum umlaut information)]\label{cor:additivity}
Given two bipartite states $\rho_{AB}\in\mathcal{D}(\mathcal{H}_A\otimes\mathcal{H}_B)$ and $\sigma_{A'B'}\in\mathcal{D}(\mathcal{H}_{A'}\otimes\mathcal{H}_{B'})$, it holds that
    \begin{equation}
        U(AA';BB')_{\rho\otimes\sigma}=U(A;B)_{\rho}+U(A';B')_{\sigma}.
    \end{equation}
\end{cor}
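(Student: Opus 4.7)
The plan is to derive additivity as a direct computation from the closed-form expression in Proposition~\ref{prop:closed_lautum}, since the variational problem has been eliminated and only tensor-product manipulations remain. First I would note that the marginal of the product state on $AA'$ is simply $\rho_A \otimes \sigma_{A'}$, so that the additivity of the von Neumann entropy gives $S((\rho\otimes\sigma)_{AA'}) = S(\rho_A) + S(\sigma_{A'})$, taking care of the first term in the closed-form formula.

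The main step would then be to evaluate the exponential term. Using $\log(\rho_{AB}\otimes\sigma_{A'B'}) = (\log\rho_{AB})\otimes \id_{A'B'} + \id_{AB}\otimes(\log\sigma_{A'B'})$ together with linearity of the partial trace, the relevant operator factors as
\begin{equation*}
\Tr_{AA'}\!\left[(\rho_A\otimes\sigma_{A'}\otimes\id_{BB'})\log(\rho_{AB}\otimes\sigma_{A'B'})\right] = X_B\otimes \id_{B'} + \id_B\otimes Y_{B'},
\end{equation*}
where $X_B \coloneqq \Tr_A[(\rho_A\otimes\id_B)\log\rho_{AB}]$ and $Y_{B'} \coloneqq \Tr_{A'}[(\sigma_{A'}\otimes\id_{B'})\log\sigma_{A'B'}]$. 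Since $X_B\otimes\id_{B'}$ and $\id_B\otimes Y_{B'}$ commute, their exponential splits as $\exp(X_B)\otimes\exp(Y_{B'})$, and hence $\Tr\exp(\cdot) = \Tr\exp(X_B)\cdot\Tr\exp(Y_{B'})$. Taking $\log$ converts the product into a sum, and plugging everything back into~\eqref{eq:formula} yields the claimed additivity.

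Alternatively (and essentially equivalently), one can observe that the same computation shows $\uml{(\rho\otimes\sigma)}_{BB'} = \uml{\rho}_B\otimes\uml{\sigma}_{B'}$, so additivity follows from the known additivity of the Umegaki relative entropy on product states applied to the optimal marginals.

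No step is really hard here: once the closed-form expression is in hand, additivity reduces to the elementary identities $\log(A\otimes B) = (\log A)\otimes\id + \id\otimes(\log B)$ for full-rank positive operators and $\exp(X\otimes \id + \id\otimes Y) = \exp(X)\otimes\exp(Y)$ for commuting summands. The only minor subtlety to mention is the support condition: if $U(A;B)_\rho$ or $U(A';B')_\sigma$ is $+\infty$, then by Lemma~\ref{lem:quantum_lautum_prop}(2) the same holds for $U(AA';BB')_{\rho\otimes\sigma}$, so additivity continues to hold in the extended sense.
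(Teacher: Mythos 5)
Your proposal is correct and follows essentially the same route as the paper's proof: both exploit the closed-form expression of Proposition~\ref{prop:closed_lautum} via additivity of the logarithm under tensor products, and both note the alternative phrasing through the factorisation $\uml{(\rho\otimes\sigma)}_{BB'}=\uml{\rho}_B\otimes\uml{\sigma}_{B'}$. Your remark on the infinite case is a small addition the paper omits, but it does not change the substance of the argument.
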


\begin{proof}
Similarly to~\cite[Corollary 8]{Filippo25}, knowing the closed-form expression provided by Proposition~\ref{prop:closed_lautum}, it is straightforward to directly show that the umlaut information is additive: indeed, it is sufficient to use the additivity of the logarithm under tensor products in both the terms appearing in~\eqref{eq:formula}.
As an alternative proof, it is easy to see that the umlaut-marginal $\xi_{BB'}$ of ${\rho_{AB}\otimes\sigma_{A'B'}}$ factorises, once again due to the additivity of the logarithm:
\bb
    \xi_{BB'}&=\frac{\exp\left[\Tr_{AA'}[(\rho_A\otimes\sigma_{A'})\log(\rho_{AB}\otimes\sigma_{A'B'})]\right]}{\Tr\left[\exp\left[\Tr_{AA'}[(\rho_A\otimes\sigma_{A'})\log(\rho_{AB}\otimes\sigma_{A'B'})]\right]\right]}\\
    &=\frac{\exp\left[\Tr_{AA'}[(\rho_A\otimes\sigma_{A'})(\log\rho_{AB}+\log\sigma_{A'B'})]\right]}{\Tr\left[\exp\left[\Tr_{AA'}[(\rho_A\otimes\sigma_{A'})(\log\rho_{AB}+\log\sigma_{A'B'})]\right]\right]}\\
    &=\uml{\rho}_{B}\otimes\uml{\sigma}_{B'},
\ee
where $\uml{\rho}_{B}$ and $\uml{\sigma}_{B'}$ are the umlaut-marginals of $\rho_{AB}$ and $\sigma_{A'B'}$. Therefore,
\bb
    U(AA';BB')_{\rho\otimes\sigma}&=D(\rho_A\otimes\sigma_B\otimes\xi_{BB'}\|\rho_{AB}\otimes\sigma_{A'B'})\\
    &=D(\rho_A\otimes\sigma_B\otimes\uml{\rho}_{B}\otimes\uml{\sigma}_{B'}\|\rho_{AB}\otimes\sigma_{A'B'})\\
    &=D(\rho_A\otimes\uml{\rho}_{B}\|\rho_{AB})+D(\sigma_{A'}\otimes\uml{\sigma}_{B'}\|\sigma_{A'B'})\\
    &=U(A;B)_{\rho}+U(A';B')_{\sigma}
\ee
which proves the claim of additivity.
\end{proof}

Also for the Petz--Rényi $\alpha$-umlaut information one can prove a closed-form expression which turns out to be additive. The existence of such a formula was established in~\cite[Lemma~35]{ji_2024}, and one can deduce additivity from there by direct inspection. 

\begin{prop}[{\cite[Lemma~38]{ji_2024}}]
\label{prop:closed_alpha}
    For any $\alpha\in(0,1)$, given a bipartite state $\rho_{AB}\in\mathcal{D}(\mathcal{H}_A\otimes\mathcal{H}_B)$, it holds that
    \bb
        U_\alpha(A;B)_{\rho}= D_\alpha\left(\rho_A\otimes\uml{\rho}^{(\alpha)}_B\,\middle\|\,\rho_{AB}\right) = -\log \left\|\Tr_A\left[\rho_A^\alpha\rho_{AB}^{1-\alpha}\right]\right\|_{\frac{1}{1-\alpha}}^{\frac{1}{1-\alpha}}\,,
    \ee
    where $\rho_A=\Tr_B[\rho_{AB}]$, the Schatten $p$-norm is defined for $p\geq 1$ by the formula $\|X\|_p \coloneqq \left( \Tr\left[ (X^\dag X)^{p/2}\right] \right)^{1/p}$, and
    \bb
        \uml{\rho}^{(\alpha)}_B\coloneqq \frac{\left(\Tr_A\left[\rho_A^\alpha\rho_{AB}^{1-\alpha}\right]\right)^{\frac{1}{1-\alpha}}}{\Tr\left[\left(\Tr_A\left[\rho_A^\alpha\rho_{AB}^{1-\alpha}\right]\right)^{\frac{1}{1-\alpha}}\right]}.
    \ee
    Furthermore, given two bipartite states $\rho_{AB}\in\mathcal{D}(\mathcal{H}_A\otimes\mathcal{H}_B)$ and $\sigma_{A'B'}\in\mathcal{D}(\mathcal{H}_{A'}\otimes\mathcal{H}_{B'})$, it holds that
    \begin{equation}
        U_\alpha(AA';BB')_{\rho\otimes\sigma}=U_\alpha(A;B)_{\rho}+U_\alpha(A';B')_{\sigma}.
    \end{equation}
\end{prop}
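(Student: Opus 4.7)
My plan is to generalise the strategy from Proposition~\ref{prop:closed_lautum} by reducing the inner minimisation to a trace-norm maximisation that can be solved via Hölder's inequality, which here replaces the Gibbs variational principle used in the Umegaki case.

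First I would exploit the product structure of $\rho_A\otimes\sigma_B$ together with the cyclicity of the trace to write
\bb
D_\alpha(\rho_A\otimes\sigma_B\|\rho_{AB})=\frac{1}{\alpha-1}\log\Tr\!\left[\sigma_B^\alpha\, M_B^{(\alpha)}\right] ,
\ee
where $M_B^{(\alpha)}\coloneqq \Tr_A[\rho_A^\alpha\,\rho_{AB}^{1-\alpha}]$ is a positive operator independent of $\sigma_B$. Since $\alpha\in(0,1)$ the prefactor $\frac{1}{\alpha-1}$ is negative, so minimising $D_\alpha$ is equivalent to \emph{maximising} $\Tr[\sigma_B^\alpha M_B^{(\alpha)}]$ over density operators $\sigma_B$.

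The key step is Hölder's inequality for Schatten norms with conjugate exponents $p=1/\alpha$ and $q=1/(1-\alpha)$, which yields
\bb
\Tr\!\left[\sigma_B^\alpha\, M_B^{(\alpha)}\right]\le \big\|\sigma_B^\alpha\big\|_{1/\alpha}\,\big\|M_B^{(\alpha)}\big\|_{1/(1-\alpha)}=\big\|M_B^{(\alpha)}\big\|_{1/(1-\alpha)} ,
\ee
after noting that $\|\sigma_B^\alpha\|_{1/\alpha}=(\Tr\sigma_B)^\alpha=1$. The equality condition in Hölder's inequality for positive operators is $\sigma_B^{\alpha p}\propto (M_B^{(\alpha)})^q$, i.e.\ $\sigma_B\propto (M_B^{(\alpha)})^{1/(1-\alpha)}$; normalising produces $\uml{\rho}^{(\alpha)}_B$, and substituting back into the logarithm recovers the advertised closed form.

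For additivity, I would simply feed the tensor-product state into the closed form. Since $(\rho_A\otimes\sigma_{A'})^\alpha=\rho_A^\alpha\otimes\sigma_{A'}^\alpha$ and $(\rho_{AB}\otimes\sigma_{A'B'})^{1-\alpha}=\rho_{AB}^{1-\alpha}\otimes\sigma_{A'B'}^{1-\alpha}$, the partial trace over $AA'$ factorises and the operator inside the Schatten norm for the joint state becomes $M_B^{(\alpha)}\otimes N_{B'}^{(\alpha)}$; multiplicativity of the Schatten $p$-norm under tensor products then gives additivity. Equivalently, one sees that the umlaut-marginal of the product state factorises as $\uml{\rho}^{(\alpha)}_B\otimes\uml{\sigma}^{(\alpha)}_{B'}$. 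I do not anticipate any serious obstacle beyond the bookkeeping of supports when $\rho_{AB}$ is rank-deficient, which is handled automatically because $\rho_{AB}^{1-\alpha}$ is unambiguously defined via the functional calculus for $\alpha\in(0,1)$; the only genuine insight is the correct choice of conjugate exponents that makes the $\|\sigma_B^\alpha\|_{1/\alpha}$ factor trivially equal to one.
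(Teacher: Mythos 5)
Your proposal is correct and follows essentially the same route as the paper's proof in Appendix~\ref{app:proof_closed_alpha}: reduce the minimisation to maximising $\Tr[\sigma_B^\alpha\,\Tr_A[\rho_A^\alpha\rho_{AB}^{1-\alpha}]]$, apply H\"older's inequality with conjugate exponents $1/\alpha$ and $1/(1-\alpha)$, identify the optimiser from the equality condition, and deduce additivity from the factorisation of the $\alpha$-umlaut-marginal together with multiplicativity of Schatten norms under tensor products. No gaps.
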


For completeness, we provide a self-contained proof of Proposition~\ref{prop:closed_alpha} in Appendix~\ref{app:proof_closed_alpha}.


\subsection{Operational interpretation in composite quantum hypothesis testing}

Let $\mathcal{H}$ be a Hilbert space and let $\mathcal{D}(\mathcal{H})$ be the set of quantum states of $\mathcal{H}$. Let us consider two fixed states $\rho$ and $\sigma$ in $\mathcal{D}(\mathcal{H})$. Let us suppose to have a device that produces $n$ copies of an unknown states $\tau\in \mathcal{D}(\mathcal{H})$ according to one of two possible hypotheses: if the null hypothesis $H_0$ holds, then $\tau=\rho$; if the alternative hypothesis $H_1$ holds, then $\tau =\sigma$.  In the task of simple asymmetric hypothesis testing, one performs a binary measurement on the available copies of $\tau$ in order to decide whether $H_0$ or $H_1$ holds. The asymmetry stems from the role of the two hypotheses: for instance, we may want to detect $H_1$ as efficiently as possible when it is the case, provided that under $H_0$ the probability of a false alarm is under a fixed threshold, or vice-versa. Let us call $\{M_n,\id-M_n\}$ the POVM representing the measurement chosen to guess the right hypothesis. In particular,
\begin{itemize}
    \item $\Tr[M_n\tau^{\otimes n}]$ is the probability of guessing $H_0$ given $n$ copies of $\tau$,
    \item $\Tr[(\id-M_n)\tau^{\otimes n}]$ is the probability of guessing $H_1$ given $n$ copies of $\tau$.
\end{itemize}
Two kinds of error could occur:
\begin{itemize}
    \item $H_0$ holds, but the outcome of the measurement is $H_1$ (error of type I, or false positive), which happens with probability $\Tr[(\id-M_n)\rho^{\otimes n}]$
    \item $H_1$ holds, but the outcome of the measurement is $H_0$ (error of type II, or missed detection), which happens with probability $\Tr[M_n\sigma^{\otimes n}]$
\end{itemize}
Given any $\epsilon\in (0,1)$. The hypothesis testing relative entropy is defined as
\bb
    D^{\epsilon}_H(\rho\|\sigma)\coloneqq -\log\min\left\{\Tr[M\sigma]\,:\,0\leq M\leq \id,\, \Tr[(\id-M)\rho]\leq \epsilon\right\},
    \label{D_H}
\ee
where $\{M,\id-M\}$ is a POVM for the states of $\mathcal{H}$.
The type-II error exponent will asymptotically decay as~\cite{Hiai1991}
\bb
    \lim_{\epsilon\to 0}\liminf_{n\to \infty}\frac{1}{n}D^\epsilon_H(\rho^{\otimes n}\|\sigma^{\otimes n}) = D(\rho\|\sigma),
\ee
which is known as the quantum Stein's lemma, with the exponent on the left-hand side often called Stein's exponent. 
This means that, when the type-I error is below a fixed threshold, the optimal type-II error asymptotically decays as $\exp(-nD(\rho\|\sigma))$, i.e.\ the distinguishability between $\rho$ and $\sigma$ in this asymmetric setting is quantified by their relative entropy. On the other hand, when we require the type-II error to be below a fixed threshold and we minimise the type-I error, we consider the hypothesis testing relative entropy with the role of the quantum states reversed, so that
\bb
    D^{\epsilon}_H(\sigma\|\rho)= -\log\min\left\{\Tr[(\id-M)\rho]\,:\,0\leq M\leq \id,\, \Tr[M\sigma]\leq \epsilon\right\},
\ee
is the quantity determining the exponential decay of the type-I error.

\begin{figure}
    \centering
    \includegraphics[width=0.45\linewidth]{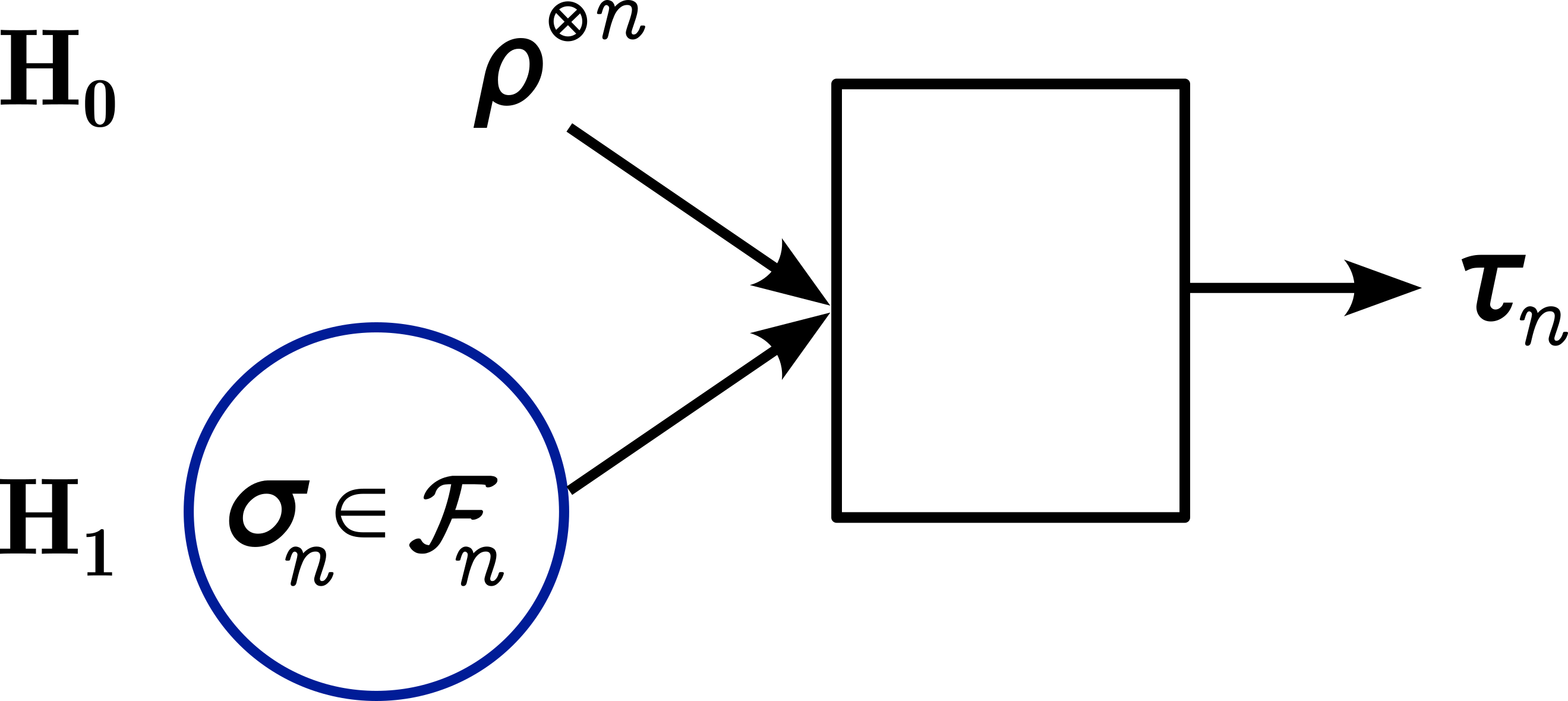}
    \caption{Composite hypothesis testing}
    \label{fig:composite_hp}
\end{figure}

In this paper, we are interested in the following extension of the hypothesis testing setting, where the state $\tau_n\in\mathcal{D}(\mathcal{H}^{\otimes n})$ in consideration is not necessarily an i.i.d.\ state. More precisely, let us suppose that the alternative hypothesis $H_1$ is composite, i.e.\ the output is a state $\sigma_n$ belonging to a family $\mathcal{F}_n\subseteq \mathcal{D}(\mathcal{H}^{\otimes n})$. On the contrary, according to the null hypothesis $H_0$, the output of the device are $n$ copies of a fixed (and known) state $\rho\in\mathcal{D}(\mathcal{H})$, as in Figure~\ref{fig:composite_hp}.  We will call $\mathcal{F}=(\mathcal{F}_n)_{n\in\mathbb{N}}$ the sequence of families related to the alternative hypothesis. 
We will be interested in the optimal decay rate of the type-I error probability (under a fixed constraint on the type-II error probability), which --- following the naming conventions of~\cite{Bjelakovic2005,Noetzel_2014}, recently adapted also in~\cite{lami2024asymptotic} (but see the discussion in~\cite{Hayashi24}) --- we call the (generalised) quantum Sanov exponent $\text{Sanov}(\rho\|\mathcal{F})$. Precisely,
\bb\label{eq:Sanov_def}
    \text{Sanov}(\rho\|\mathcal{F})\coloneqq \lim_{\epsilon\to 0}\liminf_{n\to \infty}\frac{1}{n}D^\epsilon_H(\mathcal{F}_n\|\rho^{\otimes n}),
\ee
where
\bb
    D^\epsilon_H(\mathcal{F}_n\|\rho^{\otimes n})\coloneqq \min_{\sigma_n\in\mathcal{F}_n}D^\epsilon_H(\sigma_n\|\rho^{\otimes n}).
\ee
 Were the roles of the null and alternative hypotheses to be reversed (i.e., if we focused on the decay rate of the type-II error), this would instead correspond to the generalised quantum Stein's exponent~\cite{Brandao2010}.

We can also consider the strong converse Sanov exponent, given by a modified version of~\eqref{eq:Sanov_def} in which we allow the type II error to be arbitrarily close to $1$, instead of arbitrarily small, and we replace the limit inferior in $n$ with a limit superior:
\bb\label{eq:stein_sc_def}
    \mathrm{Sanov}^\dag(\rho\|\FF)\coloneqq \lim_{\epsilon\to 1^-}\limsup_{n\to \infty}\frac{1}{n} D^\epsilon_H(\FF_n\|\rho^{\otimes n}) .
\ee
Clearly, in general $\mathrm{Sanov}^\dag(\rho\|\FF) \geq \mathrm{Sanov}(\rho\|\FF)$. In many interesting cases, however, equality holds; equivalently, $\lim_{n\to \infty}\frac{1}{n} D^\epsilon_H(\FF_n\|\rho^{\otimes n})$ exists for all $\e\in (0,1)$, and its value is independent of $\e$. This holds, for example, when both hypotheses are simple and i.i.d., in which case we have indeed~\cite{Hiai1991,Ogawa2000}
\bb
\label{eq:strong_converse_property_iid}
D(\sigma\|\rho) = \mathrm{Sanov}(\rho \| \sigma) = \mathrm{Sanov}^\dagger(\rho\|\sigma ) = \lim_{n\to \infty}\frac{1}{n}D^\epsilon_H(\sigma^{\times n}\|\rho^{\otimes n})\qquad \forall\ \e\in (0,1)\, .
\ee

The operational interpretation of the umlaut information can be found in the following composite hypothesis testing problem. Given a fixed bipartite state $\rho_{AB}\in\mathcal{D}(\mathcal{H}_A\otimes\mathcal{H}_B)$, for any $n\geq 1$ the device produces a state $\tau_{A^nB^n}\in\mathcal{D}\big((\mathcal{H}_A\otimes\mathcal{H}_B)^{\otimes n}\big)$ according to one of the following hypotheses:\bb
    &H_0\,:\, \tau_{A^nB^n}=(\rho_{AB})^{\otimes n}\\
    &H_1\,:\, \tau_{A^nB^n}=\rho_{A}^{\otimes n}\otimes \sigma_{B^n}\quad \text{where}\quad\sigma_{B^n}\in \mathcal{D}(\mathcal{H}_B^{\otimes n}).
\ee
The sequence of families defining the alternative hypothesis is
\bb
    \mathcal{F}^{\rho_A}_n\coloneqq &\left\{\sigma_{A^nB^n} \in \mathcal{D}\big((\mathcal{H}_A\otimes\mathcal{H}_B)^{\otimes n}\big)\,:\,
    \sigma_{A^nB^n}=\rho_{A}^{\otimes n}\otimes\sigma_{B^n}, \,\sigma_{B^n}\in\mathcal{D}(\mathcal{H}_B^{\otimes n})\right\}.
\ee
The Stein setting of this problem was considered in~\cite{tomamichel_2018,hayashi_2016-1}.
Introducing the corresponding Sanov exponent as
\bb
    \text{Sanov}(\rho_{AB}\|\mathcal{F}^{\rho_A})\coloneqq \lim_{\epsilon\to 0}\liminf_{n\to \infty}\frac{1}{n}D^\epsilon_H(\mathcal{F}^{\rho_A}_n\|\rho_{AB}^{\otimes n}),
\ee
the following theorem identifies such exponent with the umlaut information of $\rho_{AB}$.

\begin{boxed}{}
\begin{thm}[(Operational interpretation of the quantum umlaut information)] \label{thm:interp_L2} 
Let $\rho_{AB}$ be a bipartite state in $\mathcal{D}(\mathcal{H}_A\otimes\mathcal{H}_B)$. Then,
\bb
U(A;B)_\rho=\lim_{n\to \infty}\frac{1}{n} D^\epsilon_H\big(\FF^{\rho_{A}}_n\,\big\|\, \rho_{AB}^{\otimes n}\big) \qquad \forall\ \e\in (0,1)\, ;
\ee
equivalently,
\bb
U(A;B)_\rho=\mathrm{Sanov}\big(\rho_{AB}\,\big\|\,\FF^{\rho_A}\big) = \mathrm{Sanov}^\dag\big(\rho_{AB}\,\big\|\,\FF^{\rho_A}\big)\, .
\ee
where $\rho_{A}=\Tr_B[\rho_{AB}]$ and $\mathcal{F}^{\rho_A}$ is the family given by $\mathcal{F}_n^{\rho_A} = \left\{\rho_{A}^{\otimes n}\otimes \sigma_{B^n}\,:\,\sigma_{B^n}\in \mathcal{D}(\mathcal{H}_B^{\otimes n})\right\}$.
\end{thm}
\end{boxed}

\begin{proof} 
    Let $\alpha\in(0,1)$. It is known that~\cite{Hayashi_2007,Audenaert2012_quantum}
    \bb\label{eq:lowerDalpha}
        D^\epsilon_H(\rho\|\sigma)\geq D_\alpha(\rho\|\sigma)+\frac{\alpha}{1-\alpha}\log\frac{1}{\epsilon}.
    \ee
    We can use that inequality as follows:
    \bb
         \text{Sanov}(\rho\|\mathcal{F}^{\rho_A})&= \lim_{\epsilon\to 0^+}\liminf_{n\to \infty}\frac{1}{n}\min_{\sigma_{B^n}}D^\epsilon_H(\rho_{A}^{\otimes n}\otimes\sigma_{B^n}\|\rho_{AB}^{\otimes n})\\
         &\geq\lim_{\epsilon\to 0^+}\liminf_{n\to \infty}\frac{1}{n}\min_{\sigma_{B^n}}\left(D_\alpha(\rho_{A}^{\otimes n}\otimes\sigma_{B^n}\|\rho_{AB}^{\otimes n})+\frac{\alpha}{1-\alpha}\log\frac{1}{\epsilon}\right)\\
         &=\liminf_{n\to \infty}\frac{1}{n}U_\alpha(A^n;B^n)_{\rho^{\otimes n}}\\
         &\eqt{(i)}U_\alpha(A;B)_{\rho},
    \ee
    where in (i) we have leveraged the additivity of $U_\alpha$ (as in Proposition~\ref{prop:closed_alpha}). In particular,
    \bb
         \text{Sanov}(\rho\|\mathcal{F}^{\rho_A})&\geq \limsup_{\alpha\to 1^-}U_\alpha(A;B)_{\rho}\eqt{(ii)}U(A;B)_{\rho},
    \ee
    where in (ii) we have used Lemma~\ref{lem:alpha_to_1}. For the upper bound we consider the ansatz $\sigma_{B^n}=\sigma_B^{\otimes n}$, where $\sigma_B$ is an arbitrary fixed state. Then
    \bb
        \text{Sanov}^\dagger(\rho\|\mathcal{F}^{\rho_A})&= \lim_{\epsilon\to 1^-}\liminf_{n\to \infty}\frac{1}{n}\min_{\sigma_{B^n}}D^\epsilon_H(\rho_{A}^{\otimes n}\otimes\sigma_{B^n}\|\rho_{AB}^{\otimes n})\\
        &\leq \lim_{\epsilon\to 1^-}\liminf_{n\to \infty}\frac{1}{n}D^\epsilon_H\left((\rho_{A}\otimes\sigma_B)^{\otimes n}\|\rho_{AB}^{\otimes n}\right)\\
        &\eqt{(iii)}D(\rho_A\otimes\sigma_B\|\rho_{AB}),
    \ee
    where (iii) follows from the strong converse to the quantum Stein's lemma~\cite{Hiai1991,Ogawa2000}. Minimising over $\sigma_B\in\mathcal{D}(\mathcal{H}_B)$ yields
\bb
\text{Sanov}^\dag(\rho\|\mathcal{F}^{\rho_A}) \leq \min_{\sigma_B}D(\rho_A\otimes\sigma_B\|\rho_{AB})=U(A;B)_{\rho}.
    \ee
    This concludes the proof.
\end{proof}

As an alternative proof of Theorem~\ref{thm:interp_L2}, we could verify that the specific testing problem considered in this section is part of a general class of composite asymmetric hypothesis testing problems that was solved in~\cite{lami2024asymptotic}.


\subsection{Umlaut information of Gaussian states}\label{sec:gaussian}

Although in the rest of the paper we restrict ourselves to finite-dimensional spaces, let us take a brief jaunt into infinite-dimensional spaces to showcase an illustrative example of how our definitions can be generalised to Gaussian states, which have a wide range of application in quantum communication, computing, sensing, and are commonly used in quantum optics  laboratories~\cite{BUCCO}.

In this subsection we are thus going to consider the natural extension of Definition~\ref{def:lautum} to continuous variable system,\footnote{We refer to~\cite{Lindblad1973} for a formal definition of the Umegaki relative entropy for pairs of states on a separable Hilbert space.}. We first briefly recall the basic notions about continuous variable systems and Gaussian state. Let $m$ be a positive integer. An $m$-mode bosonic system is a quantum system with Hilbert space $\mathcal{H}_m\coloneqq L^2(\mathbb{R}^m)$. The momentum and position operators acting on the mode $j\in\{1,\dots,n\}$, respectively denoted as $x_j$ and $p_j$, are called quadratures and satisfy the canonical commutation relation $[x_j,p_k]=i\delta_{jk}$. The quadrature vector $R$ is defined as $R=(x_1,p_1,\dots,x_m,p_m)^\intercal$. An $m$-mode state $\rho$ is called a Gaussian state if it can be written as a a Gibbs state of a quadratic Hamiltonian of the form $\frac{1}{2}(R-m)^\intercal H (R-m)$ for some vector $m\in\mathbb{R}^{2m}$ and some symmetric matrix $H\in\mathbb{R}^{2n\times 2n}$, namely $\rho\propto\exp(-\frac{1}{2}(R-m)^\intercal H (R-m))$, or as a limit of such states. Any Gaussian state $\rho$ is fully characterised by its mean vector $m=\Tr[R\rho]$ and its covariance matrix $V=\Tr[\{R-m,(R-m)^\intercal\}\,\rho]$, where $\{\,\cdot\,,\,\cdot\,\}$ is the anticommutator. We will denote by $\rho(m,V)$ the Gaussian state uniquely identified by the mean vector $m$ and the covariance matrix $V$. Given $k\in\mathbb{R}^{2m}$, we call $D_k=\exp(-iR^\intercal k)$ the displacement operator. The characteristic function of a state $\rho$ is defined as $\chi_{\rho}(k)=\Tr\left[\rho D_k\right]$. In particular, the form of the characteristic function of a Gaussian state is known and it involves the mean vector and the covariance matrix: $\chi_{\rho(m,V)}(k)=\exp\left(-\frac{1}{2}k^\intercal V k-ik^\intercal m\right)$. The covariance matrix $V$ can be directly written in terms of the matrix $H$ according to the identity $V(H)=\coth\left(\frac{i\Omega H}{2}\right)i\Omega$ (see, e.g.~\cite[Problem 3.2]{BUCCO}), where $\Omega$ is the symplectic form: 
\bb
\Omega=\bigoplus_{k=1}^m\begin{pmatrix} &1\\ -1 &\end{pmatrix}.
\ee
Now we have all the notions in order to state and prove the following proposition.

\begin{prop}[(Umlaut-marginal of a Gaussian state)] Let
\bb
    \rho_{AB}=\frac{\exp\left[-\frac{1}{2}(R-m)^\intercal H(R-m)\right]}{\Tr\left[\exp\left[-\frac{1}{2}(R-m)^TH(R-m)\right]\right]}
\ee
be a bipartite Gaussian state with
\bb
    R=(R_A,R_B),\qquad m=(m_A,m_B) \qquad \text{and}\qquad  H=\begin{pmatrix}H_{AA} & H_{AB} \\ H_{AB} & H_{BB}\end{pmatrix}.
\ee
Then the umlaut-marginal of $\rho_{AB}$ on the system $B$ is given by
\bb
    \uml{\rho}_{B}=\frac{\exp\left[-\frac{1}{2}(R_B-m_B)^\intercal H_{BB}(R_B-m_B)\right]}{\Tr\left[\exp\left[-\frac{1}{2}(R_B-m_B)^\intercal H_{BB}(R_B-m_B)\right]\right]}.
\ee
\end{prop}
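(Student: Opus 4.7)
The plan is to substitute the explicit form of $\log\rho_{AB}$ into the definition of the umlaut-marginal and show that the partial-trace expectation kills all dependence on $R_B$ except the $H_{BB}$ quadratic form. Starting from
\[
\log \rho_{AB} = -\tfrac12 (R-m)^\intercal H (R-m) - \log Z,\qquad Z \coloneqq \Tr\!\left[e^{-\frac12(R-m)^\intercal H(R-m)}\right],
\]
I would expand the quadratic form using the block decomposition of $H$ into a pure-$A$ quadratic in $R_A-m_A$, a cross term $2(R_A-m_A)^\intercal H_{AB}(R_B-m_B)$, and a pure-$B$ quadratic in $R_B-m_B$. Since $[R_A,R_B]=0$ on the tensor product, the cross term is unambiguously defined; within each subsystem the symmetric ordering implicit in $H$ being symmetric takes care of the operator-ordering issue.

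Next I would apply $\Tr_A[(\rho_A\otimes\id_B)\,\cdot\,]$ term by term. The essential input is the standard fact that the marginal of a Gaussian state $\rho(m,V)$ is again Gaussian with mean $m_A$, so $\Tr_A[\rho_A\,(R_A-m_A)_i]=0$ for every component $i$. This annihilates the cross term. The pure-$A$ quadratic reduces to a number,
\[
c\;\coloneqq\;\Tr_A\!\left[\rho_A\,(R_A-m_A)^\intercal H_{AA}(R_A-m_A)\right],
\]
times $\id_B$, and the pure-$B$ quadratic is untouched by the partial trace. Therefore
\[
\Tr_A[(\rho_A\otimes\id_B)\log\rho_{AB}] = -\tfrac12 (R_B-m_B)^\intercal H_{BB}(R_B-m_B) - \tfrac{c}{2} - \log Z,
\]
with the last two summands scalars.

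Finally, plugging this into the definition
\[
\uml\rho_B \;=\; \frac{\exp\!\bigl[\Tr_A[(\rho_A\otimes\id_B)\log\rho_{AB}]\bigr]}{\Tr\!\bigl[\exp[\Tr_A[(\rho_A\otimes\id_B)\log\rho_{AB}]]\bigr]},
\]
the additive constants $-c/2-\log Z$ drop out by cancellation between numerator and denominator, yielding exactly the Gibbs state of $\tfrac12(R_B-m_B)^\intercal H_{BB}(R_B-m_B)$, as claimed. I expect the only real obstacle to be the justification of the two facts used along the way, namely (i) that $\rho_A$ has mean $m_A$ (immediate from the characteristic-function representation $\chi_{\rho(m,V)}(k)=\exp(-\tfrac12 k^\intercal V k - i k^\intercal m)$ restricted to $k_B=0$) and (ii) that the manipulations of unbounded operators inside $\Tr_A$ are legitimate for Gaussian states, which follows from standard trace-class and analytic-continuation arguments for quadratic Hamiltonians; neither requires new ideas beyond what is already implicit in the paper's conventions.
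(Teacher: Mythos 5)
Your proposal is correct and follows essentially the same route as the paper's proof: expand $\log\rho_{AB}$ via the block decomposition of $H$, kill the cross term using $\Tr_A[R_A\rho_A]=m_A$ (established through the characteristic function of the marginal), absorb the pure-$A$ quadratic into an additive constant, and let the constants cancel in the normalisation. No gaps.
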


\begin{rem}
    It is interesting to notice that, in the Gaussian state, the reduced state of $\rho_{AB}=\rho(m,V(H))$ is given by the reduced covariance matrix, i.e.\ $\rho_B=\rho(m_B,V_{BB}(H))$, while the umlaut-marginal is given by the reduced Hamiltonian, i.e.\ $\uml{\rho}_{B}=\rho(m_B,V_B(H_{BB}))$.
\end{rem}

\begin{proof}
It is known that the mean of this state is $m$ and the covariance matrix $V$ is given by
\bb
    V(H)=\begin{pmatrix}V_{AA}(H) & V_{AB}(H) \\ V_{AB}(H) & V_{BB}(H)\end{pmatrix}=\coth\left(\frac{i\Omega_{AB} H}{2}\right)i\Omega_{AB},
\ee
where $\Omega_{AB}$ is the symplectic form on the system $AB$.
Now we compute the reduced state $\rho_A$ via its characteristic function, defining $k\coloneqq(k_A,k_B)$.
\bb
    \chi_{\rho_A}(k_A)&=\Tr_A\left[\rho_A\exp\left[-iR_A^\intercal k_A\right]\right]\\
    &=\Tr\left[\rho_{AB}\exp\left[-iR_A^\intercal k_A\right]\right]\Big|_{k_B=0}\\
    &=\chi_{\rho_{AB}}(k_A, k_B=0)\\
    &=\exp\left[-\frac{1}{2}k^\intercal V(H)k-ik^Tm\right]\bigg|_{k_B=0}\\
    &=\exp\left[-\frac{1}{2}k_A^TV_{AA}(H)k_A-ik_A^\intercal m_A\right].
\ee
Therefore $\rho_A$ is the Gaussian state with mean $m_A$ and covariance $V_{AA}(H)$. Defining
\bb
    h_A\coloneqq 2i\Omega_A\;\text{arc\,coth}\left(V_{AA}(H) i\Omega_A\right),
\ee
where $\Omega_A$ is the symplectic form on the system $A$, we have
\bb
    \rho_{A}=\frac{\exp\left[-\frac{1}{2}(R_A-m_A)^\intercal h_A(R_A-m_A)\right]}{\Tr\left[\exp\left[-\frac{1}{2}(R_A-m_A)^\intercal h_A(R_A-m_A)\right]\right]}.
\ee
Now we compute
\bb
        \uml{\rho}_{B}\coloneqq \frac{\exp\left[\Tr_A[(\rho_A\otimes\id_B)\log\rho_{AB}]\right]}{\Tr\left[\exp\left[\Tr_A[(\rho_A\otimes\id_B)\log\rho_{AB}]\right]\right]}.
\ee
In particular,
\bb
    \Tr_A[(\rho_A\otimes\id_B)\log\rho_{AB}]&=-\frac{1}{2}\Tr_A[(\rho_A\otimes\id_B)(R-m)H(R-m)]+\text{const.}\\
    &=-\frac{1}{2}\Tr_A\left[\rho_A\otimes \big((R_B-m_B)H_{BB}(R_B-m_B)\big)\right]\\
    &\quad-\Tr_A\left[(\rho_A\otimes\id_B)\big((R_A-m_A)H_{AB}(R_B-m_B)\big)\right]+\text{const.}\\
    &=-\frac{1}{2}(R_B-m_B)^\intercal H_{BB}(R_B-m_B)\\
    &\quad-(\Tr_A[R_A\rho_A]-m_A)^\intercal H_{AB}(R_B-m_B)+\text{const.}\\
    &=-\frac{1}{2}(R_B-m_B)^\intercal H_{BB}(R_B-m_B)+\text{const.}
\ee
where in the last line we have used that $\Tr_A[R_A\rho_A]=m_A$. Therefore,
\bb
        \uml{\rho}_{B}=\frac{\exp\left[-\frac{1}{2}(R_B-m_B)^\intercal H_{BB}(R_B-m_B)\right]}{\Tr\left[\exp\left[-\frac{1}{2}(R_B-m_B)^\intercal H_{BB}(R_B-m_B)\right]\right]},
\ee
which has a covariance matrix
\bb
        \uml{V}_B=V_B(H_{BB})=\coth\left(\frac{i\Omega_{B} H_{BB}}{2}\right)i\Omega_{B},
\ee
and this concludes the proof.
\end{proof}


\section{Channel quantum umlaut information}\label{sec:channels}

\subsection{Definition and basic properties}

\begin{figure}
    \centering
    \includegraphics[width=0.6\linewidth]{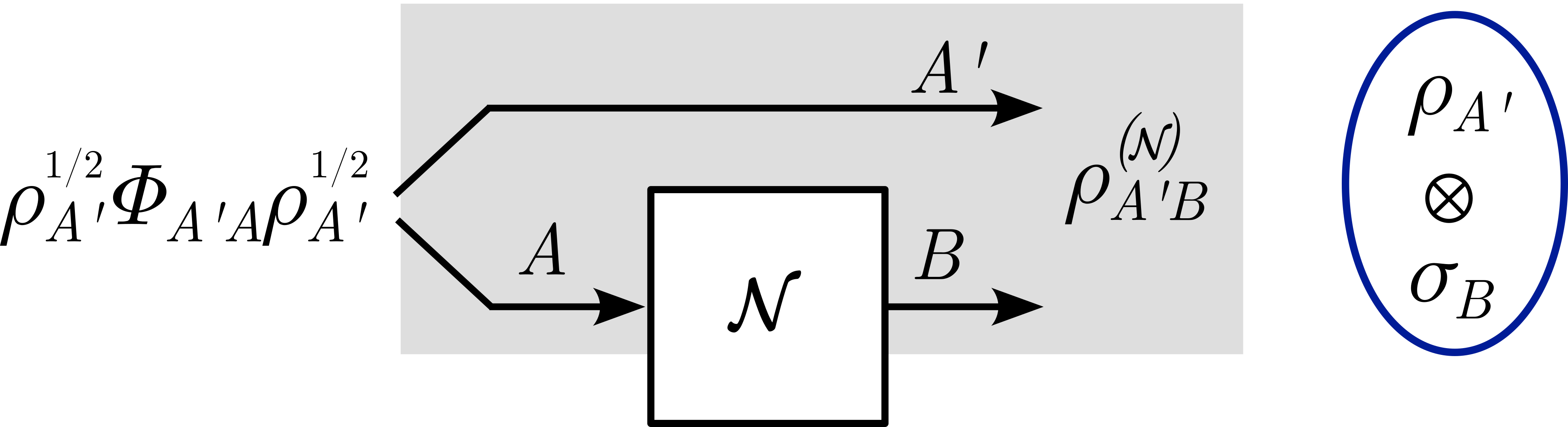}
    \caption{The umlaut information of the channel $\pazocal{N}$ is the largest umlaut information of $\rho^{(\pazocal{N})}_{A'B}$ by varying $\rho_{A'}$.}
    \label{fig:channel_lautum}
\end{figure}

Here, we discuss the notion of umlaut information of a quantum channel, which generalises the umlaut information of a classical channel defined in~\cite{Filippo25}. A pictorial representation is given in Figure~\ref{fig:channel_lautum}.

\begin{Def}[(Umlaut information of a channel)]\label{def:channel_lautum}
Let $\NN :\mathcal{L}(\mathcal{H}_A)\to \mathcal{L}(\mathcal{H}_B)$ be a quantum channel and let $\mathcal{H}_{A'}$ be an auxiliary Hilbert space isomorphic to $\mathcal{H}_{A}$. Then the quantum umlaut information of $\NN$ is defined as
\bb
U(\NN) \coloneqq \sup_{\Psi_{A'A}} U(A';B)_{(\mathrm{Id}\otimes\NN)(\Psi)} = \sup_{\rho_{A'}} \min_{\sigma_B} D\left(\rho_{A'}\otimes \sigma_B \,\middle\|\, \rho_{A'}^{1/2} J_{A'B}^{(\NN)} \rho_{A'}^{1/2}\right) ,
\label{channel_umlaut}
\ee
where $\sigma_B$ is an arbitrary mixed state of $\mathcal{H}_B$ and $\Psi_{A'A}$ is an arbitrary input state of $\mathcal{H}_{A'}\otimes\mathcal{H}_A$, which can be assumed to be pure, without loss of generality, due to Lemma~\ref{lem:quantum_lautum_prop}(3); the state $\rho_{A'}$ in the rightmost expression is the $A'$ marginal of $\Psi_{A'A}$, and $J_{A'B}^{(\NN)}$ is the (un-normalised) Choi--Jamio\l kowski matrix of $\pazocal{N}$, defined as
\bb
J^{(\pazocal{N})}_{A'B}\coloneq (\mathrm{Id}_{A'}\otimes\pazocal{N}_{A\to B})(\Phi_{A'A}),
\label{unnormalised_Choi}
\ee
with $\Phi_{A'A}\coloneqq\sum_{ij}\ketbraa{ii}{jj}_{A'A}$ being the (un-normalised) maximally entangled state between $A'$ and $A$.
\end{Def}

According to the specific context, it will be convenient to use different equivalent notations for the channel quantum umlaut information. We introduce $\rho_{A'B}^{(\pazocal{N})}$ as the output state of $\mathcal{N}_{A\to B}$ when as input we consider the standard purification of $\rho_{A'}$, namely
    \bb
        \rho_{A'B}^{(\pazocal{N})}\coloneqq ({\rm Id}_{A'}\otimes\pazocal{N}_{A\to B})\left(\left(\rho_{A'}^{1/2}\otimes\id_A\right)\Phi_{A'A}\left(\rho_{A'}^{1/2}\otimes\id_A\right)\right)=\left(\rho_{A'}^{1/2}\otimes\id_A\right)\,J^{(\pazocal{N})}_{A'B}\,\left(\rho_{A'}^{1/2}\otimes\id_A\right).
        \label{eq:output_state}
    \ee
In order to unburden the notation, whenever there is no risk of ambiguity, we will imply the tensor products with identity operators or identity channels. Therefore, all the following forms of the channel umlaut information will have to be considered equivalent:
    \bb
        U(\pazocal{N})&= \sup_{\rho_{A'}}\min_{\sigma_B}D\left(\rho_{A'}\otimes\sigma_B\,\middle\|\,\rho_{A'}^{1/2}\,J_{A'B}^{(\pazocal{N})}\,\rho_{A'}^{1/2}\right)\\ &=\sup_{\rho_{A'}}\min_{\sigma_B}D\left(\rho_{A'}\otimes\sigma_B\,\middle\|\,\pazocal{N}_{A\to B}\Big(\rho_{A'}^{1/2}\,\Phi_{A'A}\,\rho_{A'}^{1/2}\Big)\right)\\
        &=\sup_{\rho_{A'}}\min_{\sigma_B}D\left(\rho_{A'}\otimes\sigma_B\,\middle \|\,\rho_{A'B}^{(\pazocal{N})}\right).
    \ee
The aim of this section is to generalise the operational interpretation for the umlaut information of a classical channel $U(\pazocal{W})$ that we provided in~\cite{Filippo25}. In the classical setting it turned out that the channel umlaut information quantifies the error exponent of channel coding for non-signalling--assisted codes at arbitrarily small communication rates. Using activated non-signalling--assisted codes, we are going to extend this result to classical-quantum channels, while for generic quantum channels we need to regularise the umlaut information in order to have the interpration as an error exponent.

We start the exploration of the properties of the channel umlaut information by collecting a few basic facts.

\begin{lemma} \label{lem:channel_umlaut_prop}
The umlaut information of the quantum channel $\pazocal{N} = \pazocal{N}_{A\to B}$ satisfies the following properties:
\begin{enumerate}[(1)]
\item Positive definiteness: $U(\pazocal{N}) \geq 0$, with equality if and only if $\NN(\cdot) = \Tr_A[\cdot] \sigma_B$ is a replacer channel, i.e.\ it maps any input state to a fixed output state.
\item Boundedness: denoting with $Q_{A'B}$ the projector onto $\ker\big(J_{A'B}^{\NN}\big)$, we have $U(\NN) < \infty\ \Longleftrightarrow\ \ker\big(\Tr_{A'} Q_{A'B}\big)\neq \{0\}$, i.e.\ $\Tr_{A'} Q_{A'B}$ is not of full rank.
\item Boundedness, CQ case: if $\NN: x\mapsto \rho_x$ is classical-quantum, then $U(\NN) < \infty$ if and only if $\bigcap_x \supp(\rho_x) \neq \{0\}$.
\end{enumerate}
\end{lemma}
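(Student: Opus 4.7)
The three items follow by progressively more technical arguments, all resting on the state-level statements in Lemma~\ref{lem:quantum_lautum_prop} together with basic observations on the Choi matrix.

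For (1), non-negativity is inherited from the non-negativity of the state umlaut information. The equality case uses that vanishing of the supremum $U(\pazocal{N})$ forces $U(A';B)_{(\pazocal{N}\otimes\mathrm{Id})(\Psi)}=0$ for every input $\Psi_{A'A}$; specializing to the normalized maximally entangled state $\Psi = \Phi_{A'A}/d_{A'}$, Lemma~\ref{lem:quantum_lautum_prop}(1) shows that the normalized Choi state $J^{(\pazocal{N})}_{A'B}/d_{A'}$ is a product state. Since its $A'$-marginal is $\id_{A'}/d_{A'}$ by trace preservation of $\pazocal{N}$, this forces $J^{(\pazocal{N})}_{A'B} = \id_{A'}\otimes\omega_B$ for some $\omega_B\in\mathcal{D}(\mathcal{H}_B)$, which via the Choi inversion formula identifies $\pazocal{N}(\rho)=\Tr[\rho]\,\omega_B$ as a replacer channel. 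The converse direction is immediate, as such channels produce product outputs from any input.

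For (2), by Lemma~\ref{lem:quantum_lautum_prop}(2), the state umlaut $U(A';B)_{\rho_{A'B}^{(\pazocal{N})}}$ is finite iff there exists a pure $\ket{\psi}_B$ with $\supp(\rho_{A'}\otimes\ketbra{\psi})\subseteq \supp(\rho_{A'B}^{(\pazocal{N})})$. For the forward direction, if $U(\pazocal{N})<\infty$, specializing to the full-rank input $\rho_{A'}=\id_{A'}/d_{A'}$ (so that $\rho_{A'B}^{(\pazocal{N})} = J^{(\pazocal{N})}_{A'B}/d_{A'}$) produces such a $\ket{\psi}$; since $\supp(\id_{A'}\otimes\ketbra{\psi}) = \mathcal{H}_{A'}\otimes\mathbb{C}\ket{\psi}$, a direct computation recasts the condition as $\bra{\psi}\Tr_{A'}[Q_{A'B}]\ket{\psi}=0$, i.e., $\ket{\psi}\in\ker(\Tr_{A'}Q_{A'B})$. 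For the reverse direction, fix any $\ket{\psi}\in\ker(\Tr_{A'}Q_{A'B})$, so that $\supp(\id_{A'}\otimes\ketbra{\psi}) \subseteq \supp(J^{(\pazocal{N})}_{A'B})$. By the elementary fact that positive operators $A,B$ with $\supp B \subseteq \supp A$ satisfy $A\geq cB$ for some $c>0$, we obtain $J^{(\pazocal{N})}_{A'B} \geq c\,(\id_{A'}\otimes\ketbra{\psi})$ for some $c>0$; conjugating both sides by $\rho_{A'}^{1/2}\otimes\id_B$ then yields $\rho_{A'B}^{(\pazocal{N})} \geq c\,\rho_{A'}\otimes\ketbra{\psi}$ for every input $\rho_{A'}$. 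Using operator monotonicity of the logarithm and choosing $\sigma_B=\ketbra{\psi}$ in the variational definition, this gives the uniform bound $U(A';B)_{\rho_{A'B}^{(\pazocal{N})}}\leq D(\rho_{A'}\otimes\ketbra{\psi}\,\|\,\rho_{A'B}^{(\pazocal{N})})\leq -\log c$, whence $U(\pazocal{N})\leq -\log c<\infty$.

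For (3), the Choi matrix of a classical-quantum channel $\pazocal{N}:\ketbra{x}\mapsto\rho_x$ takes the block form $J^{(\pazocal{N})}_{A'B} = \sum_x \ketbra{x}_{A'}\otimes\rho_x$, with support $\bigoplus_x \mathbb{C}\ket{x}_{A'}\otimes\supp(\rho_x)$. A product vector $\ket{x}\otimes\ket{\psi}$ lies in this support precisely when $\ket{\psi}\in\supp(\rho_x)$, so the condition $\mathcal{H}_{A'}\otimes\mathbb{C}\ket{\psi}\subseteq\supp(J^{(\pazocal{N})})$ from part (2) becomes $\ket{\psi}\in\bigcap_x\supp(\rho_x)$, which admits a nonzero solution iff the intersection is non-trivial; thus (3) follows from (2). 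The main obstacle across the whole argument is the conversion, in the reverse implication of (2), of a qualitative support condition into a quantitative operator inequality that is moreover uniform in $\rho_{A'}$; fortunately this is handled cleanly by the sandwich observation above.
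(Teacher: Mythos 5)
Your proposal is correct and follows essentially the same route as the paper: item (1) via the maximally entangled input and the state-level equality case, item (2) via the support condition from Lemma~\ref{lem:quantum_lautum_prop}(2) combined with the operator inequality $J^{(\pazocal{N})}_{A'B}\geq c\,(\id_{A'}\otimes\sigma_B)$ and operator monotonicity of the logarithm, and item (3) by computing the kernel of the block-diagonal Choi matrix. The only (immaterial) difference is that you phrase the support/kernel bookkeeping directly in terms of a pure $\ket{\psi}$ where the paper first argues with a general $\sigma_B$.
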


\begin{proof}
Claim~(1) follows easily from Lemma~\ref{lem:quantum_lautum_prop}: the inequality $U(\pazocal{N}) \geq 0$ is an immediate consequence of the same claim for states, and equality holds if and only if, for all input states $\Psi_{A'A}$, we have $\NN_{A\to B}(\Psi_{A'A}) = \Psi_{A'}\otimes \NN_{A\to B}(\Psi_A)$. Taking $\Psi_{A'A}$ as the maximally entangled state, we conclude that the Choi--Jamio\l kowski state of $\NN$ is a product state, which implies that $\NN$ is indeed a replacer channel.

We now move on to~(2). We claim that $U(\NN) < \infty$ is equivalent to the existence of a state $\sigma_B$ with the property that $\supp(\id_{A'}\otimes \sigma_B) \subseteq \supp\big(J_{A'B}^{(\NN)}\big)$, where $J_{A'B}^{(\NN)}$ is defined by~\eqref{unnormalised_Choi}. That this condition is necessary can be seen by taking the maximally entangled state as an ansatz for $\Psi_{A'A}$ in the first expression in~\eqref{channel_umlaut}, and then using Lemma~\ref{lem:quantum_lautum_prop}(2). Vice versa, if such a state exists then for some real number $M>0$ we have $\id_{A'}\otimes \sigma_B \leq M\, J_{A'B}^{(\NN)}$, implying that for all states $\rho_{A'}$ the operator inequality $\rho_{A'}\otimes \sigma_B \leq M\, \rho_{A'}^{1/2} J_{A'B}^{(\NN)} \rho_{A'}^{1/2}$ holds. Together with the operator monotonicity of the logarithm, this entails that
\bb
D\left(\rho_{A'}\otimes \sigma_B \,\middle\|\, \rho_{A'}^{1/2} J_{A'B}^{(\NN)} \rho_{A'}^{1/2}\right) \leq \log M\, ,
\ee
so that
\bb
U(\NN) = \sup_{\rho_{A'}} \inf_{\sigma'_B} D\left(\rho_{A'}\otimes \sigma'_B \,\middle\|\, \rho_{A'}^{1/2} J_{A'B}^{(\NN)} \rho_{A'}^{1/2}\right) \leq \log M < \infty\, .
\ee

Now, the condition that $\supp(\id_{A'}\otimes \sigma_B) \subseteq \supp\big(J_{A'B}^{(\NN)}\big)$ is equivalent, via considering orthogonal complements, to $\ker\big(J_{A'B}^{(\NN)}\big) \subseteq \ker(\id_{A'}\otimes \sigma_B)$. Calling $Q_{A'B}$ the projector onto $\ker\big(J_{A'B}^{(\NN)}\big)$ and setting $Q_B \coloneqq \Tr_{A'} Q_{A'B}$, the above inclusion can be reformulated as the equality
\bb
0 = \Tr_{A'B} Q_{A'B} (\id_{A'}\otimes \sigma_B) = \Tr_B Q_{B} \sigma_B\, ,
\ee
which needs to be obeyed for some state $\sigma_B$. This is possible if and only if $Q_B$ is not of full rank, showing~(2).

Finally, (3)~follows directly from~(2). For a CQ channel $\NN_{X\to B}$, we have $J_{XB}^{(\NN)} = \sum_x \ketbra{x}_X \otimes \rho_x$. Calling $P_x$ the projector onto $\ker(\rho_x)$, the projector onto $\ker\big(J_{XB}^{(\NN)}\big)$ can be written as
\bb
Q_{XB} = \sum_x \ketbra{x} \otimes P_x\, .
\ee
Hence, $\Tr_{X} Q_{XB} = \sum_x P_x$, which is not of full rank if and only if
\bb
\{0\} \neq \bigcap_x \ker(P_x) = \bigcap_x \supp(\rho_x)\, .
\ee
This completes the proof.
\end{proof}

Similarly to what happens for states, also in the channel case we have $U(\NN) < \infty$ whenever $\NN$ belongs to the interior of the set of quantum channels, i.e.\ when $\NN(\rho)>0$ for all input states $\rho$. 

In analogy with~\cite[Proposition~13]{Filippo25}, we now establish some convexity/concavity properties of the functional that defines the quantum channel umlaut information.

\begin{prop} \label{prop:concavity}
The functional 
\bb
\big(\rho_{A'}, \sigma_B, \NN\big) \mapsto D\left(\rho_{A'}\otimes\sigma_B\,\middle\|\,\rho_{A'B}^{(\NN)}\right) ,
\ee
where $\rho_{A'B}^{(\NN)}$ is defined by~\eqref{eq:output_state}, is concave in $\rho_{A'}$ and jointly convex in $\sigma_B$ and $\NN$. In particular, for all quantum channels $\NN = \NN_{A\to B}$ the map
\bb
\rho_{A'}\mapsto \min_{\sigma_B} D\left(\rho_{A'}\otimes\sigma_B\,\middle\|\,\rho_{A'B}^{(\NN)}\right)
\label{eq:concavity_umlaut_input_state}
\ee
is concave.
\end{prop}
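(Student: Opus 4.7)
I would split the argument into the two assertions, since they are of different natures.

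For \emph{joint convexity in $(\sigma_B,\NN)$}, the argument is immediate: for any fixed $\rho_{A'}$, the map $(\sigma_B,\NN)\mapsto(\rho_{A'}\otimes\sigma_B,\,\rho_{A'B}^{(\NN)})$ is jointly affine. The first component is linear in $\sigma_B$ and independent of $\NN$, while the second component $\rho_{A'B}^{(\NN)}=(\rho_{A'}^{1/2}\otimes\id_B)J_{A'B}^{(\NN)}(\rho_{A'}^{1/2}\otimes\id_B)$ is linear in $\NN$ via the linearity of the Choi--Jamio\l{}kowski correspondence and independent of $\sigma_B$. Composing this affine map with the jointly convex quantum relative entropy $D(\cdot\|\cdot)$ yields the claim.

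The \emph{concavity in $\rho_{A'}$} is the technical core of the proposition, and I expect it to be the main obstacle. The difficulty is that $\rho_{A'B}^{(\NN)}$ depends \emph{nonlinearly} on $\rho_{A'}$ via the sandwich conjugation $\rho_{A'}^{1/2}(\cdot)\rho_{A'}^{1/2}$. In particular, the most natural approach---introducing a classical register $X$ for a convex decomposition $\rho_{A'}=\sum_x p_x\rho_x$ and applying the data-processing inequality to trace out $X$---is stymied by the fact that $\rho_{A'B}^{(\NN)}$ and the mixture $\sum_x p_x\rho_{A'B}^{(\NN),x}$ (obtained by the same construction applied to each $\rho_x$) are generally \emph{incomparable} in the positive-semidefinite order; indeed, the sandwich map $\rho\mapsto\rho^{1/2}K\rho^{1/2}$ is not operator concave for generic $K\geq 0$, as can already be seen on $2\times 2$ examples with off-diagonal $K$. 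My plan is instead to pass through the Petz--R\'enyi $\alpha$-divergence: for $\alpha\in(0,1)$, study the trace functional $T_\alpha(\rho_{A'})\coloneqq\Tr[(\rho_{A'}\otimes\sigma_B)^\alpha(\rho_{A'B}^{(\NN)})^{1-\alpha}]$, aiming to establish concavity of $\rho_{A'}\mapsto D_\alpha(\rho_{A'}\otimes\sigma_B\|\rho_{A'B}^{(\NN)})$ via a Lieb-type concavity argument that exploits the commutation of $\rho_{A'}^\alpha$ with the sandwich factors $\rho_{A'}^{1/2}$ on the $A'$ tensor slot, and then recovering the Umegaki version by taking $\alpha\to 1^-$ with the aid of Lemma~\ref{lem:alpha_to_1}, using that pointwise limits of concave functions remain concave. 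As a fallback, one could attempt a direct second-derivative computation along linear paths $\rho_{A'}(t)=\rho_0+t\Delta$ in the simplex of density operators, invoking integral representations of the logarithm to control the sandwich.

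The \emph{``in particular'' assertion} then requires no further work: once concavity in $\rho_{A'}$ of $f(\rho_{A'},\sigma_B,\NN)\coloneqq D(\rho_{A'}\otimes\sigma_B\|\rho_{A'B}^{(\NN)})$ is known, the claim that $\rho_{A'}\mapsto\min_{\sigma_B} f(\rho_{A'},\sigma_B,\NN)$ is concave follows from the elementary principle that the pointwise infimum over a family of concave functions is itself concave.
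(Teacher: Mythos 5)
Your handling of the joint convexity in $(\sigma_B,\pazocal{N})$ and of the final ``in particular'' step is correct and coincides with the paper's. The gap is in the concavity in $\rho_{A'}$, which you rightly call the core of the proposition but do not actually prove: what you offer is a plan whose key step---concavity of $\rho_{A'}\mapsto D_\alpha(\rho_{A'}\otimes\sigma_B\|\rho_{A'B}^{(\pazocal{N})})$ for $\alpha\in(0,1)$---is neither carried out nor evidently true. Since the prefactor $\tfrac{1}{\alpha-1}$ is negative for $\alpha<1$, you would need $\rho_{A'}\mapsto\log\Tr[(\rho_{A'}\otimes\sigma_B)^\alpha(\rho_{A'B}^{(\pazocal{N})})^{1-\alpha}]$ to be \emph{convex}, which is not what Lieb-type concavity theorems provide (they give joint \emph{concavity} of such trace functionals); as written this is an unsubstantiated conjecture, and the ``fallback'' second-derivative computation is likewise not performed. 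So the central claim of the proposition is left unproved.

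Moreover, the route you dismiss is essentially the one that works, and is the content of the result the paper cites (\cite[Proposition~7.83]{KHATRI}). Your objection---that $\rho_{A'B}^{(\pazocal{N})}$ and $\sum_x p_x\,\rho_{A'B}^{(\pazocal{N}),x}$ are incomparable in the positive-semidefinite order because the sandwich map is not operator concave---is a red herring: no operator inequality between output states is needed. The observation is that \emph{both} arguments of the relative entropy are obtained from the \emph{same} canonical purification $\phi^{\rho_{A'}}_{A'A}=(\rho_{A'}^{1/2}\otimes\id_A)\Phi_{A'A}(\rho_{A'}^{1/2}\otimes\id_A)$ by channels acting on $A$ only, namely the replacer $\pazocal{R}^{\sigma_B}$ and $\pazocal{N}$. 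Given a decomposition $\rho_{A'}=\sum_x p_x\rho_x$, the flagged state $\omega_{XA'A}\coloneqq\sum_x p_x\ketbra{x}_X\otimes\phi^{\rho_x}_{A'A}$ is an extension of the $A$-marginal of $\phi^{\rho_{A'}}_{A'A}$, of which $\phi^{\rho_{A'}}_{A'A}$ is a purification; hence there exists a channel $\Lambda_{A'\to XA'}$, acting on the purifying system alone, with $(\Lambda_{A'\to XA'}\otimes\mathrm{Id}_A)(\phi^{\rho_{A'}})=\omega_{XA'A}$. Since $\Lambda_{A'\to XA'}$ commutes with any channel on $A$, data processing under $\Lambda$ followed by the direct-sum property of the relative entropy gives
\[
D\left(\rho_{A'}\otimes\sigma_B\,\middle\|\,\rho_{A'B}^{(\pazocal{N})}\right)\;\geq\; D\left(\sum_x p_x\ketbra{x}_X\otimes\rho_x\otimes\sigma_B\,\middle\|\,\sum_x p_x\ketbra{x}_X\otimes\rho_{A'B}^{(\pazocal{N}),x}\right)\;=\;\sum_x p_x\, D\left(\rho_x\otimes\sigma_B\,\middle\|\,\rho_{A'B}^{(\pazocal{N}),x}\right),
\]
which is exactly the desired concavity. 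You should either reproduce this argument or cite the corresponding result; the Petz--R\'enyi detour is not needed and, in its current form, does not close the proof.
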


\begin{proof}
Concavity in $\rho_{A'}$ follows from~\cite[Proposition~7.83]{KHATRI} applied to the replacer channel $\omega_A \mapsto \Tr_A[\omega_A] \sigma_B$ (first argument) and the channel $\NN$ (second argument). Joint convexity in $\sigma_B$ and $\NN$ comes from the joint convexity of the quantum relative entropy. Finally, the concavity of the map~\eqref{eq:concavity_umlaut_input_state} descends from the observation that the point-wise minimum of concave functions is concave.
\end{proof}

Equipped with these tools, we can now connect the umlaut information with the well-studied notion of channel relative entropy. We recall that the channel relative entropy between $\pazocal{M}_{A\to B}$ and $\pazocal{N}_{A\to B}$ is defined as
\bb
D(\pazocal{M}\|\pazocal{N})\coloneqq \sup_{\rho_{A'}}D\left(\left(\mathrm{Id}_{A'}\otimes\pazocal{M}_{A\to B}\right)\left(\rho_{A'}^{1/2}\,\Phi_{A'A}\,\rho_{A'}^{1/2}\right)\,\middle\|\,\left(\mathrm{Id}_{A'}\otimes\pazocal{N}_{A\to B}\right)\left(\rho_{A'}^{1/2}\,\Phi_{A'A}\,\rho_{A'}^{1/2}\right)\right),
\ee
where $\Phi_{A'A}\coloneqq\sum_{ij}\ketbraa{ii}{jj}_{A'A}$ is the (unnormalised) maximally entangled state between $A'$ and $A$.

\begin{lemma}[(Channel umlaut information and channel relative entropy)]\label{lem:replacer} 
Given $\sigma_B\in \mathcal{D}(\mathcal{H}_B)$, let $\pazocal{R}^{\sigma_B}_{A\rightarrow B}:\mathcal{L}(\mathcal{H}_A)\to \mathcal{L}(\mathcal{H}_B)$ be the replacer channel defined as
    \bb
        \pazocal{R}^{\sigma_B}_{A\rightarrow B}(\,\cdot\,)\coloneqq \Tr_A[\,\cdot\,]\,\sigma_B.
    \ee
    The umlaut information of a channel $\pazocal{N}_{A\to B}$ can be written as the minimal channel relative entropy between the family of replacer channels $\{\pazocal{R}^{\sigma_B}_{A\rightarrow B}\}_{\sigma_B\in\mathcal{D}(\mathcal{H}_B)}$ 
    and the channel $\pazocal{N}$ itself, namely
    \bb \label{eq:rep_channel}
    U(\pazocal{N}) &= \min_{\sigma_B}D\big(\pazocal{R}^{\sigma_B}_{A\rightarrow B}\,\big\|\,\pazocal{N}_{A\to B}\big)\, .
    \ee
\end{lemma}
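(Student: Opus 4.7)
The plan is to unpack the channel relative entropy on the left-hand side of~\eqref{eq:rep_channel} into a state relative entropy optimized over inputs, and then to reduce the identity to a minimax exchange that I will settle by invoking Sion's theorem in combination with the concavity/convexity properties already established in Proposition~\ref{prop:concavity}.

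The first step is a direct calculation. By the definition of the channel relative entropy together with the identity $\Tr_A[\Phi_{A'A}]=\id_{A'}$, for any $\sigma_B$ the state $(\mathrm{Id}_{A'}\otimes \pazocal{R}^{\sigma_B}_{A\to B})\bigl(\rho_{A'}^{1/2}\Phi_{A'A}\rho_{A'}^{1/2}\bigr)$ simplifies to $\rho_{A'}\otimes\sigma_B$, so that
\bb
D(\pazocal{R}^{\sigma_B}\|\pazocal{N}) = \sup_{\rho_{A'}} D\!\left(\rho_{A'}\otimes\sigma_B\,\middle\|\,\rho_{A'B}^{(\pazocal{N})}\right).
\ee
The claimed identity then amounts to exchanging supremum and infimum:
\bb
\sup_{\rho_{A'}}\min_{\sigma_B} D\!\left(\rho_{A'}\otimes\sigma_B\,\middle\|\,\rho_{A'B}^{(\pazocal{N})}\right)\;\stackrel{?}{=}\;\min_{\sigma_B}\sup_{\rho_{A'}} D\!\left(\rho_{A'}\otimes\sigma_B\,\middle\|\,\rho_{A'B}^{(\pazocal{N})}\right).
\ee
The inequality $\leq$ is the trivial max-min bound and already yields $U(\pazocal{N})\leq\min_{\sigma_B} D(\pazocal{R}^{\sigma_B}\|\pazocal{N})$.

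For the reverse inequality, I would invoke Sion's minimax theorem. The state spaces $\mathcal{D}(\mathcal{H}_{A'})$ and $\mathcal{D}(\mathcal{H}_B)$ are compact convex subsets of finite-dimensional real vector spaces; the functional $(\rho_{A'},\sigma_B)\mapsto D(\rho_{A'}\otimes\sigma_B\|\rho_{A'B}^{(\pazocal{N})})$ is concave in $\rho_{A'}$ and convex in $\sigma_B$ by Proposition~\ref{prop:concavity}; and the required lower semicontinuity in $\sigma_B$ follows from the lower semicontinuity of the Umegaki relative entropy in its first argument.

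The main subtlety, and the step I expect to require most care, is the upper semicontinuity in $\rho_{A'}$: the relative entropy is not globally upper semicontinuous, so concavity of the inner minimum alone does not immediately guarantee the needed regularity at the boundary of the state space. If $U(\pazocal{N})=+\infty$ the identity holds trivially by the max-min bound, so I can assume $U(\pazocal{N})<+\infty$. In that regime the inner minimum $\min_{\sigma_B} D(\rho_{A'}\otimes\sigma_B\|\rho_{A'B}^{(\pazocal{N})})$ is expressible through the closed form of Proposition~\ref{prop:closed_lautum} and is therefore continuous in $\rho_{A'}$ wherever finite; alternatively, a standard regularization replacing $\rho_{A'}$ by $(1-\epsilon)\rho_{A'}+\epsilon\,\id/d_{A'}$ (and analogously for $\sigma_B$) makes the functional jointly continuous, so Sion's theorem applies on the regularized problem and a limit $\epsilon\to 0^+$ combined with lower semicontinuity transfers the equality to the full state space.
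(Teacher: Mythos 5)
Your reduction is the same as the paper's: you correctly unpack $D(\pazocal{R}^{\sigma_B}\|\pazocal{N})$ into $\sup_{\rho_{A'}}D(\rho_{A'}\otimes\sigma_B\|\rho_{A'B}^{(\pazocal{N})})$, note that the easy direction $\sup\min\leq\min\sup$ is free, and identify the whole lemma with a single minimax exchange. The gap is in how you carry out that exchange. Sion's theorem, in the form you invoke, requires upper semicontinuity of the functional in the maximising variable $\rho_{A'}$, and neither of your two patches establishes it. Patch (a) argues continuity of the \emph{inner minimum} $\min_{\sigma_B}D(\rho_{A'}\otimes\sigma_B\|\rho_{A'B}^{(\pazocal{N})})$ via Proposition~\ref{prop:closed_lautum}, but Sion needs regularity of the two-variable function $(\rho_{A'},\sigma_B)\mapsto D(\rho_{A'}\otimes\sigma_B\|\rho_{A'B}^{(\pazocal{N})})$ itself, not of its partial minimum. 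Patch (b) is worse: replacing $\sigma_B$ by $(1-\epsilon)\sigma_B+\epsilon\,\id/d_B$ makes $\rho_{A'}\otimes\sigma_B^{(\epsilon)}$ have support $\supp(\rho_{A'})\otimes\HH_B$, which is \emph{not} contained in $\supp\big(\rho_{A'}^{1/2}J^{(\pazocal{N})}_{A'B}\rho_{A'}^{1/2}\big)$ whenever $J^{(\pazocal{N})}_{A'B}$ is rank-deficient; for such channels (e.g.\ a CQ channel whose outputs share a common proper support, which still has $U(\pazocal{N})<\infty$ by Lemma~\ref{lem:channel_umlaut_prop}) the regularised functional is identically $+\infty$, so the regularised min--sup does not converge to the true one and the claimed ``transfer by lower semicontinuity'' fails. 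Even when $J^{(\pazocal{N})}$ is full rank, your limiting argument is ambiguous, since shrinking both feasible sets moves the two sides of the saddle in opposite directions.

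The fix is not to use Sion but a one-sided minimax theorem of Kneser--Fan type (the paper uses~\cite[Theorem~5.2]{Farkas2006}, stated as Lemma~\ref{lem:minmax}): it requires only that the minimising variable $\sigma_B$ range over a compact convex set on which the functional is convex and lower semicontinuous --- both of which you have already verified --- and mere concavity in $\rho_{A'}$ with \emph{no} topological hypothesis on that side. Proposition~\ref{prop:concavity} then supplies exactly the hypotheses needed, and the exchange goes through with no semicontinuity issue in $\rho_{A'}$ at all. With that substitution your argument becomes the paper's proof.
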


Before proving the Lemma~\ref{lem:replacer}, we need to state a minimax lemma.

\begin{lemma}[{\cite[Theorem~5.2]{Farkas2006}}]\label{lem:minmax} Let $\XX$ be a compact and convex set in a Hausdorff topological vector space, and let $\YY$ be a convex set in any vector space. Let $f: \XX \times \YY \to (-\infty, +\infty]$ be lower semicontinuous on $\XX$ for all fixed $y\in \YY$, convex in the first variable, and concave in the second. Then
\bb
\sup_{y\in \YY}\inf_{x\in \XX} f(x,y) = \inf_{x\in \XX}\sup_{y\in \YY} f(x,y)\, .
\label{eq:minimax}
\ee
\end{lemma}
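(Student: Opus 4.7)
The plan is to reduce the identity in~\eqref{eq:rep_channel} to a minimax swap between $\sigma_B$ and $\rho_{A'}$, after which Lemma~\ref{lem:minmax} does the work.

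First, I would unpack the right-hand side. Fix a replacer channel $\pazocal{R}^{\sigma_B}_{A\to B}$. For any input state $\rho_{A'}$ with purification $(\rho_{A'}^{1/2}\otimes\id_A)\,\Phi_{A'A}\,(\rho_{A'}^{1/2}\otimes\id_A)$, acting with $\mathrm{Id}_{A'}\otimes\pazocal{R}^{\sigma_B}_{A\to B}$ simply produces the product state $\rho_{A'}\otimes\sigma_B$, because $\pazocal{R}^{\sigma_B}_{A\to B}$ discards the $A$ system and outputs $\sigma_B$, and the $A'$ marginal of the purification equals $\rho_{A'}$. Plugging into the definition of the channel relative entropy and using the notation $\rho_{A'B}^{(\pazocal{N})}$ from~\eqref{eq:output_state}, I obtain
\bb
D\big(\pazocal{R}^{\sigma_B}_{A\to B}\,\big\|\,\pazocal{N}_{A\to B}\big) = \sup_{\rho_{A'}} D\!\left(\rho_{A'}\otimes\sigma_B\,\middle\|\,\rho_{A'B}^{(\pazocal{N})}\right) .
\ee
Therefore the right-hand side of~\eqref{eq:rep_channel} equals
\bb
\min_{\sigma_B}\sup_{\rho_{A'}} D\!\left(\rho_{A'}\otimes\sigma_B\,\middle\|\,\rho_{A'B}^{(\pazocal{N})}\right),
\ee
while the channel umlaut information is, by Definition~\ref{def:channel_lautum}, the same double optimisation with the order swapped.

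Next, I would verify that the hypotheses of Lemma~\ref{lem:minmax} are met for
\bb
f(\sigma_B,\rho_{A'}) \coloneqq D\!\left(\rho_{A'}\otimes\sigma_B\,\middle\|\,\rho_{A'B}^{(\pazocal{N})}\right) ,
\ee
with $\XX = \mathcal{D}(\mathcal{H}_B)$ (compact and convex in the Hilbert--Schmidt topology) and $\YY = \mathcal{D}(\mathcal{H}_{A'})$ (convex). By Proposition~\ref{prop:concavity}, $f$ is convex in $\sigma_B$ for each fixed $\rho_{A'}$ and concave in $\rho_{A'}$ for each fixed $\sigma_B$; lower semi-continuity of $\sigma_B\mapsto f(\sigma_B,\rho_{A'})$ is inherited from that of the Umegaki relative entropy. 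Applying Lemma~\ref{lem:minmax} yields
\bb
\min_{\sigma_B}\sup_{\rho_{A'}} f(\sigma_B,\rho_{A'}) = \sup_{\rho_{A'}}\min_{\sigma_B} f(\sigma_B,\rho_{A'}) = U(\pazocal{N}),
\ee
which is precisely~\eqref{eq:rep_channel}. (The infima over $\sigma_B$ are attained thanks to compactness and lower semi-continuity, justifying the use of $\min$.)

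The main obstacle, such as it is, is verifying the minimax conditions cleanly: one has to confirm that the partial convexity/concavity claimed in Proposition~\ref{prop:concavity} does apply, to make sure the lower semi-continuity holds on all of $\mathcal{D}(\mathcal{H}_B)$ (including states whose support does not contain the marginal of $\rho_{A'B}^{(\pazocal{N})}$, where $f$ may jump to $+\infty$), and to remain within the finite-dimensional, compact setting so that inf becomes min. Once these points are checked, the proof reduces to the two computational observations above plus one invocation of Lemma~\ref{lem:minmax}.
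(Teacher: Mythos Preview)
You have proved the wrong statement. The statement in question is Lemma~\ref{lem:minmax} itself --- an abstract minimax theorem asserting that $\sup_{y\in\YY}\inf_{x\in\XX} f(x,y)=\inf_{x\in\XX}\sup_{y\in\YY} f(x,y)$ under the stated convexity, concavity, compactness and lower semi-continuity hypotheses. The paper does not prove this lemma at all; it simply cites it from~\cite[Theorem~5.2]{Farkas2006} and then, in a separate proof block, applies it to establish Lemma~\ref{lem:replacer}, i.e.\ the identity~\eqref{eq:rep_channel}.

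What you have written is precisely that application: you unpack $D(\pazocal{R}^{\sigma_B}\|\pazocal{N})$, identify the double optimisation, check the hypotheses of Lemma~\ref{lem:minmax} via Proposition~\ref{prop:concavity} and lower semi-continuity of the relative entropy, and invoke Lemma~\ref{lem:minmax}. This is essentially the paper's proof of Lemma~\ref{lem:replacer}, not a proof of Lemma~\ref{lem:minmax}. A proof of Lemma~\ref{lem:minmax} would have to be an argument in abstract convex analysis (e.g.\ along the lines of Sion's or Ky Fan's theorems), making no reference to quantum states, channels, or relative entropies, and certainly not invoking Lemma~\ref{lem:minmax} as a tool.
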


Due to the lower semi-continuity of $f$ in the first variable and to the compactness of $\XX$, the infimum on the left-hand side of~\eqref{eq:minimax} can be replaced with a minimum. Since the point-wise supremum of lower semi-continuous functions is again lower semi-continuous, the same is true for the infimum on the right-hand side of~\eqref{eq:minimax}. Thus, we can rewrite~\eqref{eq:minimax} as
\bb
\sup_{y\in \YY} \min_{x\in \XX} f(x,y) = \min_{x\in \XX} \sup_{y\in \YY} f(x,y)\, .
\label{eq:minimax_modified}
\ee

\begin{proof}[Proof of Lemma~\ref{lem:replacer}]
By Definition~\ref{def:channel_lautum}, we have
\bb
U(\pazocal{N})&=\sup_{\rho_{A'}}\min_{\sigma_B}D\left(\rho_{A'}\otimes\sigma_B\middle\|\rho_{A'B}^{(\pazocal{N})}\right)\\
&= \min_{\sigma_B}\sup_{\rho_{A'}}D\left(\rho_{A'}\otimes\sigma_B\middle\|\rho_{A'B}^{(\pazocal{N})}\right)\\
&=\min_{\sigma_B}\sup_{\rho_{A'}}D\left(\left(\mathrm{Id}_{A'}\otimes\pazocal{R}^{\sigma_B}_{A\rightarrow B}\right)\left(\rho_{A'}^{1/2}\,\Phi_{A'A}\,\rho_{A'}^{1/2}\right)\,\middle\|\,\left(\mathrm{Id}_{A'}\otimes\pazocal{N}_{A\to B}\right)\left(\rho_{A'}^{1/2}\,\Phi_{A'A}\,\rho_{A'}^{1/2}\right)\right)\\
&=\min_{\sigma_B} D\left(\pazocal{R}^{\sigma_B}_{A\rightarrow B}\middle\|\pazocal{N}\right) .
\ee
Here, the second equality is a consequence of Lemma~\ref{lem:minmax} in the form~\eqref{eq:minimax_modified}. This can be invoked due to Proposition~\ref{prop:concavity} and by observing that the lower semi-continuity in $\sigma_B$ follows from the lower semi-continuity of the relative entropy.
\end{proof}


\subsection{Additivity discussion}

A crucial property of the classical channel umlaut information was the additivity under tensor product of channels~\cite{Filippo25}. In the quantum setting we are going to prove three results about the additivity  property of the quantum channel umlaut information:
\begin{itemize}
    \item for generic quantum channels $\pazocal{N}$ and $\pazocal{M}$, it is super-additive: $U(\pazocal{N}\otimes\pazocal{M})\geq U(\pazocal{N})+U(\pazocal{M})$, see Lemma~\ref{lem:super_add};
    \item for classical-quantum channels $\pazocal{N}$ and $\pazocal{M}$, it is additive: $U(\pazocal{N}\otimes\pazocal{M})= U(\pazocal{N})+U(\pazocal{M})$, see Proposition~\ref{prop:add_CQ};
    \item there exists a quantum channel $\pazocal{N}$ such that $U(\pazocal{N}^{\otimes 2})> 2U(\pazocal{N})$, see Proposition~\ref{prop:non_add}.
\end{itemize}

\begin{lemma}[(Super-additivity of the channel umlaut information)]\label{lem:super_add}
Let $\pazocal{N}_{A_1\to B_1}$ and $\pazocal{M}_{A_2\to B_2}$ be two quantum channels. Then
    \bb
        U(\pazocal{N}\otimes \pazocal{M}) \geq U(\pazocal{N})+U(\pazocal{M}),
    \ee
    i.e.\ the channel umlaut information is super-additive under tensor products.
\end{lemma}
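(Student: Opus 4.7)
The plan is to exploit the fact that the channel umlaut information is defined as a supremum over input states, so any choice of a valid input furnishes a lower bound. The natural ansatz is a product input, which, by the additivity of the state umlaut information established in Corollary~\ref{cor:additivity}, factorises the resulting quantity.

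Concretely, I would proceed as follows. Fix two pure input states $\Psi^{(1)}_{A'_1 A_1}$ and $\Psi^{(2)}_{A'_2 A_2}$, and consider the product pure state $\Psi^{(1)}_{A'_1 A_1}\otimes \Psi^{(2)}_{A'_2 A_2}$ as an input to the bipartite channel $\pazocal{N}_{A_1\to B_1}\otimes \pazocal{M}_{A_2\to B_2}$. The corresponding output state on $A'_1 A'_2 B_1 B_2$ is the tensor product
\begin{equation*}
\omega_{A'_1 B_1}\otimes \tau_{A'_2 B_2}\,,\quad \text{where } \omega_{A'_1 B_1}\coloneqq (\pazocal{N}\otimes \mathrm{Id})(\Psi^{(1)})\,,\ \ \tau_{A'_2 B_2}\coloneqq (\pazocal{M}\otimes \mathrm{Id})(\Psi^{(2)})\,.
\end{equation*}
By Definition~\ref{def:channel_lautum} applied to $\pazocal{N}\otimes \pazocal{M}$ with this choice of input, and by the additivity of the state umlaut information (Corollary~\ref{cor:additivity}) applied to $\omega\otimes \tau$ with the bipartition $(A'_1A'_2\,;B_1B_2)$, we obtain
\begin{equation*}
U(\pazocal{N}\otimes \pazocal{M})\;\geq\; U(A'_1 A'_2;B_1 B_2)_{\omega\otimes\tau}\;=\;U(A'_1;B_1)_{\omega}+U(A'_2;B_2)_{\tau}\,.
\end{equation*}

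Finally, taking the supremum independently over $\Psi^{(1)}$ and $\Psi^{(2)}$ on the right-hand side yields $U(\pazocal{N}\otimes \pazocal{M})\geq U(\pazocal{N})+U(\pazocal{M})$, which is the claim. There is no genuine obstacle here: the content lies entirely in Corollary~\ref{cor:additivity}, which already handled the state-level additivity via the closed-form expression; the channel case then follows because the supremum over all bipartite inputs can only exceed the supremum restricted to product inputs.
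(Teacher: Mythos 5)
Your proof is correct and follows essentially the same route as the paper: restrict the supremum to product inputs, note that the output state factorises, and invoke the additivity of the state umlaut information (Corollary~\ref{cor:additivity}) before taking independent suprema. No gaps.
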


\begin{proof}
    By the specific ansatz $\rho_{A'_1A'_2}=\rho_{A'_1}\otimes\rho_{A'_2}$, we get
    \bb
        \rho_{A_1'B_1A_2'B_2}^{(\pazocal{N}\otimes\pazocal{M})}=\left(\rho_{A_1'}^{1/2}\otimes\rho_{A_2'}^{1/2}\right)\,\left(J^{(\pazocal{N})}_{A_1'B_1}\otimes J^{(\pazocal{M})}_{A_2'B_2}\right)\,\left(\rho_{A_1'}^{1/2}\otimes\rho_{A_2'}^{1/2}\right)=\rho_{A_1'B_1}^{(\pazocal{N})}\otimes \rho_{A_2'B_2}^{(\pazocal{M})}.
    \ee
    Therefore, by the additivity of the umlaut information under tensor products (Corollary~\ref{cor:additivity}),
    \bb
        U(\pazocal{N}\otimes\pazocal{M})&\geq \sup_{\rho_{A'_1}\otimes\rho_{A'_2}}U(A'_1A'_2;B_1B_2)_{\rho^{(\pazocal{N}\otimes\pazocal{M})}}\\
        &=\sup_{\rho_{A'_1}\otimes\rho_{A'_2}}U(A'_1A'_2;B_1B_2)_{\rho^{(\pazocal{N})}\otimes \rho^{(\pazocal{M})}}\\
        &= \sup_{\rho_{A'_1}}U(A'_1;B_1)_{\rho^{(\pazocal{N})}}+\sup_{\rho_{A'_2}}U(A'_2;B_2)_{\rho^{(\pazocal{M})}}\\
        &=U(\pazocal{N})+U(\pazocal{M}),
    \ee
    which concludes the proof.
\end{proof}

\begin{prop}[(Additivity of $U(\pazocal{N})$ for classical-quantum channels)]\label{prop:add_CQ}
Let $\pazocal{N}:\mathcal{P}(\mathcal{X})\to \mathcal{D}(\mathcal{H}_{B})$ be a classical-quantum channel:
\bb
    \pazocal{N}(P_{X})= \sum_{x\in\mathcal{X}} P_{X}(x)\,\rho_x^{B}.
\ee
Then
    \bb
        U(\pazocal{N})=-\log\min_{P_X}\Tr\left[\exp\left(\sum_{x\in\mathcal{X}}P_X(x)\log\rho_x^B\right)\right]
    \ee
In particular, if we consider two classical-quantum channels $\pazocal{N}:\mathcal{P}(\mathcal{X}_1)\to \mathcal{D}(\mathcal{H}_{B_1})$ and $\pazocal{M}:\mathcal{P}(\mathcal{X}_2)\to \mathcal{D}(\mathcal{H}_{B_2})$, then we have
    \bb
        U(\pazocal{N}\otimes \pazocal{M}) = U(\pazocal{N})+U(\pazocal{M}),
    \ee
    i.e.\ the channel umlaut information is additive under tensor products for classical-quantum channels.
\end{prop}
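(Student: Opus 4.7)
The plan is to first derive an explicit single-letter formula for $U(\pazocal{N})$ in the CQ case, from which additivity will follow by inspection. The Choi operator of such a channel takes the form $J_{A'B}^{(\pazocal{N})} = \sum_x \ketbra{x}_{A'} \otimes \rho_x^B$, so for any input $\rho_{A'}$ we have
\[
\rho_{A'B}^{(\pazocal{N})} = \sum_x \bigl(\rho_{A'}^{1/2}\, \ketbra{x}\, \rho_{A'}^{1/2}\bigr) \otimes \rho_x^B.
\]
The heart of the argument will be the reduction of the supremum over $\rho_{A'}$ to states diagonal in the $\{\ket{x}\}$ basis.

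To effect this reduction, I would first observe that for any diagonal unitary $U_{A'} = \sum_x e^{i\theta_x}\ketbra{x}$, replacing $\rho_{A'}$ by $U_{A'}\rho_{A'}U_{A'}^\dagger$ merely conjugates the output state by the local unitary $U_{A'} \otimes \id_B$, since $U_{A'}$ commutes with every $\ketbra{x}$ and hence with $J_{A'B}^{(\pazocal{N})}$. By the data-processing inequality (Lemma~\ref{lem:quantum_lautum_prop}(3)) applied in both directions, local unitaries leave the umlaut information invariant, so $\rho_{A'} \mapsto U(A';B)_{\rho^{(\pazocal{N})}}$ is invariant under diagonal unitary conjugation on $A'$. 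Averaging over this (compact) group yields the dephasing channel $\Delta_{A'}$ in the $\{\ket{x}\}$ basis, and then the concavity of the same map established in Proposition~\ref{prop:concavity} gives $U(A';B)_{\rho^{(\pazocal{N})}} \leq U(A';B)_{\Delta(\rho)^{(\pazocal{N})}}$ by Jensen's inequality, so the supremum is attained at diagonal inputs. For $\rho_{A'} = \sum_x P_X(x) \ketbra{x}$, the output $\rho_{A'B}^{(\pazocal{N})} = \sum_x P_X(x) \ketbra{x} \otimes \rho_x$ is a CQ state, and plugging into Proposition~\ref{prop:closed_lautum} splits $\log \rho_{A'B}^{(\pazocal{N})}$ into a $\log P_X(x)\,\id_B$ part and a $\log \rho_x$ part; the partial trace over $A'$ combined with the $-S(\rho_{A'}) = -H(P_X)$ prefactor produces a clean cancellation of the entropic terms, leaving
\[
U(A';B)_{\rho^{(\pazocal{N})}} = -\log \Tr \exp\!\Bigl(\sum_x P_X(x) \log \rho_x\Bigr).
\]
Taking the supremum over $P_X$ (equivalently, the infimum inside the logarithm) gives the asserted single-letter formula for $U(\pazocal{N})$.

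Additivity will then follow by direct inspection of the formula. The channel $\pazocal{N} \otimes \pazocal{M}$ is itself CQ with outputs $\rho_{x_1}^{B_1} \otimes \rho_{x_2}^{B_2}$, so the formula applies; using $\log(\rho_{x_1} \otimes \rho_{x_2}) = \log \rho_{x_1} \otimes \id + \id \otimes \log \rho_{x_2}$ together with the commutativity of the two summands, the operator inside the exponential factorises into a tensor product that depends on the joint distribution $P_{X_1X_2}$ only through its marginals $P_{X_1}, P_{X_2}$. The trace then becomes a product of two scalars, so minimising over joint $P_{X_1X_2}$ decouples into independent minimisations over the marginals, yielding $U(\pazocal{N} \otimes \pazocal{M}) = U(\pazocal{N}) + U(\pazocal{M})$. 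The only nontrivial point is the reduction to diagonal inputs; after that the argument mirrors the classical computation of~\cite{Filippo25}.
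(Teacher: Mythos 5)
Your proof is correct and follows essentially the same route as the paper's: compute the output state for a diagonal input, apply the closed-form expression of Proposition~\ref{prop:closed_lautum} to get $U(X';B)_{\rho^{(\pazocal{N})}}=-\log\Tr\exp\bigl(\sum_x P_X(x)\log\rho_x^B\bigr)$ after cancellation of the entropic terms, and then read off additivity from the additivity of the logarithm under tensor products together with the fact that the objective depends on $P_{X_1X_2}$ only through its marginals. The one genuine addition is your reduction of the supremum to diagonal inputs via phase-covariance, twirling over diagonal unitaries, and the concavity of Proposition~\ref{prop:concavity}; the paper sidesteps this by taking the domain of a CQ channel to be $\mathcal{P}(\mathcal{X})$ from the outset, so your version is the more careful one if the CQ channel is regarded as a genuine quantum channel with arbitrary (possibly coherent) inputs allowed in Definition~\ref{def:channel_lautum}.
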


\begin{proof} 
The proof strategy is similar to the fully  classical case~\cite[Proposition 19 and Corollary 21]{Filippo25}. We recall that, using the quantum notation, $P_X\in\mathcal{P}(\mathcal{X})$ and $\pazocal{N}_{X\to B}$ are represented as
\bb
    P_X \coloneqq \sum_{x\in\mathcal{X}}P_X(x)\ketbra{x}_X,\quad
    \pazocal{N}_{X\to B}(\,\cdot\,)\coloneqq \sum_{x\in\mathcal{X}} \bra{x}\,\cdot\,\ket{x}_X\, \rho_x^B.
\ee
Then, we have
\bb\label{eq:choi_CQ}
    \rho_{X'B}^{(\pazocal{N})}&= \pazocal{N}_{X\to B}\left(\sum_{x_1\in\mathcal{X}}P^{1/2}_{X}(x_1)\ketbra{x_1}_{X'}\otimes\id_X \sum_{x_2,x_3\in\mathcal{X}}\ket{x_2\, x_2}\bra{x_3\, x_3}_{X'X}\sum_{x_4\in\mathcal{X}}P^{1/2}_{X}(x_4)\ketbra{x_4}_{X'}\otimes\id_X\right)\\
    &=\pazocal{N}_{X\to B}\left( \sum_{x_2,x_3\in\mathcal{X}}P^{1/2}_{X}(x_2)P^{1/2}_{X}(x_3)\ket{x_2\, x_2}\bra{x_3\, x_3}_{X'X}\right)\\
    &=\sum_{x\in\mathcal{X}}P_{X}(x)\ketbra{x}_{X'}\otimes\rho_x^B.
\ee
Noticing that the states $\{\ketbra{x}_{X'}\otimes\rho_x^B\}_{x\in\mathcal{X}}$ are orthogonal and using the closed-form expression provided by Proposition~\ref{prop:closed_lautum}, we compute
\bb\label{eq:formula_without_max}
        U(X';B)_{\rho^{(\pazocal{N})}}&=-S\left(\rho_{X'}^{(\pazocal{N})}\right)-\log\Tr\left[\exp\left(\Tr_{X'}[(P_{X'}\otimes\id_B)\log\rho_{X'B}^{(\pazocal{N})}]\right)\right]\\
        &=-S(P_X)-\log\Tr\left[\exp\left(\Tr_{X'}\sum_{x\in\mathcal{X}}P_X(x)\Big(\log P_X(x) \ketbra{x}_{X'}\otimes\id_B+\ketbra{x}_{X'}\otimes\log\rho_x^B\Big)\right)\right]\\
        &=-S(P_X)-\log\Tr\left[\exp\left(-S(P_X)\id_B+\sum_{x\in\mathcal{X}}P_X(x)\log\rho_x^B\Big)\right)\right]\\
        &=-\log\Tr\left[\exp\left(\sum_{x\in\mathcal{X}}P_X(x)\log\rho_x^B\Big)\right)\right],
    \ee
whence
    \bb
        U(\pazocal{N})=\sup_{P_X}-\log\Tr\left[\exp\left(\sum_{x\in\mathcal{X}}P_X(x)\log\rho_x^B\right)\right].
    \ee
   The additivity immediately follows. Let
    \bb
        \pazocal{N}(P_{X_1})= \sum_{x_1\in\mathcal{X}_1} P_{X_1}(x_1)\,\rho_{x_1}^{B_1},\qquad \pazocal{M}(P_{X_2})= \sum_{x_2\in\mathcal{X}_2} P_{X_2}(x_2)\,\sigma_{x_2}^{B_2}.
    \ee
    Then
    \bb
        U(\pazocal{N}\otimes\pazocal{M})
        &=\sup_{P_{X_1X_2}}-\log\Tr\exp\left(\sum_{\substack{x_1\in\mathcal{X}_1\\x_2\in\mathcal{X}_2}}P_{X_1X_2}(x_1,x_2)\log\left(\rho_{x_1}^{B_1}\otimes \sigma_{x_2}^{B_2}\right)\right)\\
        &=\sup_{P_{X_1X_2}}-\log\Tr\exp\left(\sum_{\substack{x_1\in\mathcal{X}_1\\x_2\in\mathcal{X}_2}}P_{X_1X_2}(x_1,x_2)\left(\log\rho_{x_1}^{B_1}\otimes\id_{B_2}+\id_{B_1}\otimes\log \sigma_{x_2}^{B_2}\right)\right)\\
        &=\sup_{P_{X_1X_2}}-\log\Tr\exp\left(\sum_{x_1\in\mathcal{X}_1}P_{X_1}(x_1)\log\rho_{x_1}^{B_1}\otimes\id_{B_2}+\id_{B_1}\otimes\sum_{x_2\in\mathcal{X}_2}P_{X_2}(x_2)\log \sigma_{x_2}^{B_2}\right)\\
        &=\sup_{P_{X_1X_2}}-\log\Tr\left[\exp\left(\sum_{x_1\in\mathcal{X}_1}P_{X_1}(x_1)\log\rho_{x_1}^{B_1}\right)\otimes\exp\left(\sum_{x_2\in\mathcal{X}_2}P_{X_2}(x_2)\log \sigma_{x_2}^{B_2}\right)\right]\\
        &=\sup_{P_{X_1}}-\log\Tr\exp\left(\sum_{x_1\in\mathcal{X}_1}P_{X_1}(x_1)\log\rho_{x_1}^{B_1}\right)+\sup_{P_{X_2}}-\log\Tr\exp\left(\sum_{x_2\in\mathcal{X}_2}P_{X_2}(x_2)\log \sigma_{x_2}^{B_2}\right)\\
        &=U(\pazocal{N})+U(\pazocal{M}),
    \ee
where $P_{X_1}$ and $P_{X_2}$ are the marginals of $P_{X_1X_2}$.
This concludes the proof.
\end{proof}

At this point, we might wonder if additivity holds for generic channels, at least for many copies of a single quantum channel. This amounts to asking whether $U(\pazocal{N}^{\otimes n})\eqt{?} nU(\pazocal{N})$ for all positive integers $n$. In the following subsection we are going to provide an explicit counterexample. Since numerical computations to estimate the maximum over $\rho_{A'}$ in the definition of the quantum umlaut information become harder as the dimension of the system $A'$ increases, it might prove hard to efficiently compute the umlaut information of a tensor product of two channels. It turns out that such an optimisation is easier when a channel is covariant under a group action.

\begin{prop}[(Computation of $U(\pazocal{N})$ for covariant channels)]\label{prop:group}
Let $\pazocal{N}_{A\to B}$ be a covariant channel under the action of a finite group $G$ represented by $\{U_A(g)\}_{g\in G}$ and $\{V_B(g)\}_{g\in G}$, i.e.
\bb
    \pazocal{N}\left(U_A(g) \,\cdot \, U_A^\dagger(g)\right)=
    V_B(g)\pazocal{N}(\,\cdot \, )V_B^\dagger(g)\qquad \forall\, g\in G.
\ee
Then
\bb\label{eq:maximisation_group}
    U(\pazocal{N})=\sup_{\rho_{A'}^{(G)}}\left(-S\left(\rho_{A'}^{(G)}\right)-\log\Tr\left[\exp\Tr_{A'}\left[\rho_{A'}^{(G)}\log\left((\rho_{A'}^{(G)})^{1/2}\, J^{(\pazocal{N})}_{A'A}\,(\rho_{A'}^{(G)})^{1/2}\right)\right]\right]\right)
\ee
where the maximisation is over all states $\rho_{A'}^{(G)}$ that are invariant under the action of $\{U_{A'}(g)\}_{g\in G}$, and $J^{(\pazocal{N})}_{A'A}$ is the Choi--Jamio\l kowski state of $\pazocal{N}$.
\end{prop}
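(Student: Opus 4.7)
The plan is to reduce the supremum in Definition~\ref{def:channel_lautum} to $G$-invariant input states via a symmetrisation argument and then substitute into the closed-form expression provided by Proposition~\ref{prop:closed_lautum}. Concretely, let
\bb
f(\rho_{A'}) \coloneqq U(A';B)_{\rho^{(\pazocal{N})}} = -S(\rho_{A'})-\log\Tr\left[\exp\Tr_{A'}\left[\rho_{A'}\log\left(\rho_{A'}^{1/2}\, J^{(\pazocal{N})}_{A'B}\, \rho_{A'}^{1/2}\right)\right]\right] ,
\ee
which is concave in $\rho_{A'}$ by Proposition~\ref{prop:concavity}.

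The main technical step is to prove that $f$ is invariant under the pertinent $G$-action on the $A'$ system. Fixing the basis defining $\Phi_{A'A}$, set $U_{A'}(g) \coloneqq \overline{U_A(g)}$ (entry-wise complex conjugate in that basis). The central calculation is that the Choi--Jamio\l kowski operator is stabilised by the joint action, i.e.
\bb
\left(U_{A'}(g)\otimes V_B(g)\right) J^{(\pazocal{N})}_{A'B} \left(U_{A'}(g)^\dagger \otimes V_B(g)^\dagger\right) = J^{(\pazocal{N})}_{A'B} .
\ee
To verify this, one uses: (i) the channel covariance in the equivalent form $V_B(g)\,\pazocal{N}(\cdot)\,V_B(g)^\dagger = \pazocal{N}\circ \mathrm{Ad}_{U_A(g)}$; (ii) the ricochet identity $(\id_{A'}\otimes U_A(g))\Phi_{A'A} = (U_A(g)^T\otimes \id_A)\Phi_{A'A}$ (transposition in the fixed basis); and (iii) the relation $U_{A'}(g)\, U_A(g)^T = \id_{A'}$, which holds precisely by the choice $U_{A'}(g)=\overline{U_A(g)}$. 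This invariance of $J^{(\pazocal{N})}_{A'B}$ in turn implies that for $\rho_{A'}' \coloneqq U_{A'}(g)\,\rho_{A'}\,U_{A'}(g)^\dagger$ one has $(\rho')_{A'B}^{(\pazocal{N})} = \big(U_{A'}(g)\otimes V_B(g)\big)\,\rho_{A'B}^{(\pazocal{N})}\,\big(U_{A'}(g)^\dagger \otimes V_B(g)^\dagger\big)$. Since the quantum umlaut information of a state is invariant under local unitaries (apply Lemma~\ref{lem:quantum_lautum_prop}(3) in both directions), it follows that $f(\rho_{A'}') = f(\rho_{A'})$.

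With invariance and concavity in place, the symmetrisation is immediate: for any $\rho_{A'}$, the average $\bar{\rho}_{A'}\coloneqq \frac{1}{|G|}\sum_{g\in G} U_{A'}(g)\,\rho_{A'}\,U_{A'}(g)^\dagger$ is $G$-invariant, and Jensen's inequality combined with the $G$-invariance of $f$ gives $f(\bar{\rho}_{A'})\geq \frac{1}{|G|}\sum_{g\in G} f(U_{A'}(g)\,\rho_{A'}\,U_{A'}(g)^\dagger) = f(\rho_{A'})$. Hence the supremum in Definition~\ref{def:channel_lautum} is attained on $G$-invariant input states, which, upon substituting the closed form for $f$, gives the claimed formula. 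The main obstacle is the bookkeeping in the invariance step: one must carefully distinguish the representation on $A$ from the ``twisted'' representation on $A'$ (related by complex conjugation in the Choi basis) in order to identify the correct group action under which the Choi matrix is stabilised.
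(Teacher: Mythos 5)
Your proof is correct, but it takes a genuinely different route from the paper's. The paper works through the replacer-channel formulation of Lemma~\ref{lem:replacer}: it first symmetrises the output state $\sigma_B$ using the convexity of the channel relative entropy, then invokes an external result on covariant channels (Leditzky et al., Proposition~II.4) to restrict the input optimisation to $G$-invariant states, and finally needs a further transpose-trick computation to show that re-enlarging the $\sigma_B$ minimisation to all states does not change the value. You instead work directly with $U(\pazocal{N})=\sup_{\rho_{A'}}f(\rho_{A'})$, prove that the Choi matrix is stabilised by $\overline{U_A(g)}\otimes V_B(g)$ (your ricochet computation is correct: $(\id\otimes U_A(g))\Phi(\id\otimes U_A(g)^\dagger)=(U_A(g)^{T}\otimes\id)\Phi(U_A(g)^{T}\otimes\id)^\dagger$, and $\overline{U_A(g)}\,U_A(g)^{T}=\id$), deduce $G$-invariance of $f$ from the local-unitary invariance of the state umlaut information (Lemma~\ref{lem:quantum_lautum_prop}(3) applied in both directions), and conclude by concavity (Proposition~\ref{prop:concavity}) plus Jensen. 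Your argument is shorter, avoids both minimax steps and the external covariance lemma, and makes explicit the point the paper leaves implicit, namely that the relevant representation on $A'$ is the complex conjugate of $U_A$ in the Choi basis; the paper's longer route does additionally establish that the optimal $\sigma_B$ may be taken $G$-invariant, which is not needed for the stated formula. All ingredients you use (Propositions~\ref{prop:closed_lautum} and~\ref{prop:concavity}, Lemma~\ref{lem:quantum_lautum_prop}) are available earlier in the paper, so the argument is self-contained.
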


The proof is deferred to Appendix~\ref{app:covariant_channels}.

\begin{figure}[h]
    \centering
    \includegraphics[width=0.7\linewidth]{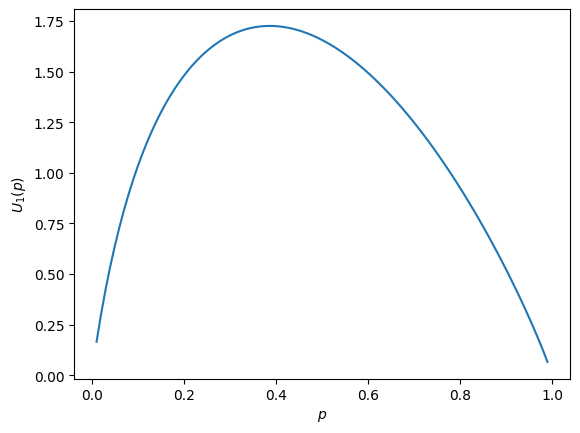}
    \caption{Estimation of the umlaut information of $\pazocal{A}_{\gamma,\beta}$ with $(\gamma,\beta)=(0.5,0.001)$ using the family of states $\rho_{A'}(p)$. The umlaut information of the channel is given by the maximum of the function in the plot.}
    \label{fig:L_p}
\end{figure}

\begin{prop}[(Additivity violation for channel quantum umlaut information)] \label{prop:non_add}
There exists a quantum channel $\pazocal{N}:\mathcal{L}(\mathbb{C}^2)\to \mathcal{L}(\mathbb{C}^2)$ such that
    \bb
        U(\pazocal{N}^{\otimes 2})>2U(\pazocal{N}).
    \ee
\end{prop}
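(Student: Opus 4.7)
The strategy is to exhibit an explicit qubit channel witnessing strict super-additivity, following the numerical example previewed in Figure~\ref{fig:L_p}. I would take $\pazocal{N}=\pazocal{A}_{\gamma,\beta}:\mathcal{L}(\mathbb{C}^2)\to\mathcal{L}(\mathbb{C}^2)$ to be an amplitude-damping-type qubit channel carrying a small additional mixing parameter $\beta>0$ that ensures the Choi-support condition of Lemma~\ref{lem:channel_umlaut_prop}(2) holds (so that $U(\pazocal{N})<\infty$), fixed at the values $(\gamma,\beta)=(0.5,0.001)$ announced in the figure caption.

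The first step is to compute $U(\pazocal{N})$ exactly. Amplitude-damping-type channels satisfy $\pazocal{N}(\sigma_z\rho\sigma_z)=\sigma_z\pazocal{N}(\rho)\sigma_z$, i.e.\ they are covariant under the finite group $\{I,\sigma_z\}\cong\mathbb{Z}_2$ on both input and output. Applying Proposition~\ref{prop:group} to this group restricts the supremum over $\rho_{A'}$ in~\eqref{channel_umlaut} to $\sigma_z$-invariant states, which for a qubit is precisely the one-parameter family $\rho_{A'}(p)=p\ketbra{0}+(1-p)\ketbra{1}$. The supremum thus collapses to a scalar maximisation over $p\in[0,1]$ that is evaluated in closed form through Proposition~\ref{prop:closed_lautum}, giving the curve plotted in Figure~\ref{fig:L_p}.

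Next, I would lower-bound $U(\pazocal{N}^{\otimes 2})$ by an ansatz that is classically correlated across the two copies. Since $\pazocal{N}^{\otimes 2}$ is covariant under $\{I,\sigma_z\}^{\otimes 2}$, Proposition~\ref{prop:group} again reduces the optimisation to diagonal two-qubit inputs $\rho_{A'_1A'_2}=\sum_{i,j\in\{0,1\}}q_{ij}\ketbra{ij}$, parameterised by a three-dimensional probability simplex. I would compute $U(A'_1A'_2;B_1B_2)_{(\pazocal{N}^{\otimes 2}\otimes\mathrm{Id})(\Psi)}$ via the closed-form expression~\eqref{eq:formula} using the explicit Kraus decomposition of $\pazocal{A}_{\gamma,\beta}^{\otimes 2}$, and numerically maximise over $(q_{ij})$. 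At $(\gamma,\beta)=(0.5,0.001)$ the resulting maximum strictly exceeds $2U(\pazocal{N})$, with the gap attained by a genuinely correlated (non-product) diagonal input, proving the claim.

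The main obstacle is parameter tuning. By Lemma~\ref{lem:channel_umlaut_prop}(2), the umlaut information of pure amplitude damping is infinite, so both the one- and two-copy values diverge as $\beta\downarrow 0$, while for $\beta$ too large the channel approaches a replacer and the gap closes (cf.~Lemma~\ref{lem:channel_umlaut_prop}(1)); the choice $\beta=0.001$ threads this needle, making the super-additivity gap sizable yet numerically stable. The substantive content is therefore the observation that a product input is not optimal for $\pazocal{N}^{\otimes 2}$, and the concrete verification is reduced, via the symmetry-based dimension reduction above, to a low-dimensional numerical maximisation of an explicit closed-form expression.
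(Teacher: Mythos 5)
Your proposal is correct and follows essentially the same route as the paper: the same generalised amplitude damping channel at $(\gamma,\beta)=(0.5,0.001)$, the $Z$-covariance reduction via Proposition~\ref{prop:group} to compute $U(\pazocal{N})$ exactly as a one-parameter maximisation, and a correlated diagonal input on two copies to certify $U(\pazocal{N}^{\otimes 2})>2U(\pazocal{N})$. The only (immaterial) difference is that the paper evaluates the two-copy lower bound at a single explicit diagonal state $\rho^\ast_{A_1'A_2'}$ rather than maximising over the full simplex of diagonal inputs, since any ansatz suffices for a lower bound.
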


\begin{proof}
Similarly to the strategy used in~\cite[Proposition~3.1]{Fang2020}, we can leverage Proposition~\ref{prop:group} in order to reduce the optimisation problem over a qubit state to a single parameter optimisation problem in the case of a $Z$-covariant channel, i.e.\ a channel $\pazocal{N}:\mathcal{L}(\mathbb{C}^2)\to \mathcal{L}(\mathbb{C}^2)$ such that
\bb
    \pazocal{N}Z=Z\pazocal{N},\qquad\text{with}\qquad Z(\,\cdot\,)\coloneqq \sigma_z \,\cdot \, \sigma_z.
\ee 
    Let us consider, as a class of $Z$-covariant channels, the parameterised family of generalised amplitude damping channels given by
    \bb
        \pazocal{A}_{\gamma,\beta}(\,\cdot\,)\coloneqq \sum_{i=1}^4A_{\gamma,\beta}^{(i)}\,\cdot\,A_{\gamma,\beta}^{(i)}
    \ee
    with 
    \bb
        A_{\gamma,\beta}^{(1)}&\coloneqq \sqrt{1-\beta}(\ketbra{0}+\sqrt{1-\gamma}\ketbra{1}), & A_{\gamma,\beta}^{(2)}&=\sqrt{\gamma(1-\beta)}\ket{0}\bra{1},\\
        A_{\gamma,\beta}^{(3)}&\coloneqq \sqrt{\beta}(\sqrt{1-\gamma}\ketbra{0}+\ketbra{1}), & A_{\gamma,\beta}^{(4)}&=\sqrt{\gamma\beta}\ket{1}\bra{0}.
    \ee
     For instance, choosing 
     $(\gamma, \beta)=(0.5,0.001)$, we have
    \bb
        J^{(\pazocal{A}_{\gamma,\beta})}=
        \begin{pmatrix}
        0.9995 &  &  & \sqrt{0.5} \\
         & 0.005 &  &  \\
         &  & 0.4995 &  \\
        \sqrt{0.5} &  &  & 0.5005
        \end{pmatrix}
    \ee
    in the basis $\{\ket{00}_{A'B}, \ket{01}_{A'B}, \ket{10}_{A'B}, \ket{11}_{A'B}\}$. We parameterise $\rho_{A'}$ as
    \bb
        \rho_{A'}(p)=p\ketbra{0}+(1-p)\ketbra{1}
    \ee
    and, using the \texttt{minimise} function from the SciPy library~\cite{2020SciPy-NMeth}, we get that the maximum is achieved for $p\simeq 0.386$, with
    \bb
        U(\pazocal{A})\simeq 1.725
    \ee
    In \autoref{fig:L_p} we show the plot of the function $U_1(p)$ to be maximised that we get plugging $\rho_{A'}(p)$ in~\eqref{eq:maximisation_group}. As proved in Proposition~\ref{prop:concavity}, such a function is concave in $p$.
    If we now consider $\pazocal{A}_{\gamma,\beta}^{\otimes 2}$, it is clear that any guess $\rho_{A'_1A'_2}^\ast$ provides a lower bound to $U(\pazocal{A}_{\gamma,\beta}^{\otimes 2})$ as follows:
    \bb
        U(\pazocal{A}_{\gamma,\beta}^{\otimes 2})\geq U^\ast_2 \coloneqq -S\left(\rho_{A'_1A'_2}^\ast\right)-\log\Tr\left[\exp\Tr_{A'}\left[\rho_{A'_1A'_2}^\ast\log\left(\left(\rho_{A'_1A'_2}^{\ast}\right)^{1/2}\, J_{\pazocal{A}_{\gamma,\beta}}^{\otimes 2}\,\left(\rho_{A'_1A'_2}^{\ast}\right)^{1/2}\right)\right]\right].
    \ee
     Considering
    \bb
        \rho_{A'_1A'_2}^\ast = 
        \begin{pmatrix}
        0.182 &  &  & \\
         & 0.213 &  &  \\
         &  & 0.213 &  \\
         &  &  & 0.392
        \end{pmatrix}
    \ee
    we get $U^\ast_2 \simeq 3.474$, which gives
    \bb
        \frac{U(\pazocal{A}_{\gamma,\beta}^{\otimes 2})}{U(\pazocal{A}_{\gamma,\beta})}\geq \frac{U^\ast_2}{U(\pazocal{A}_{\gamma,\beta})} \simeq 2.013,
    \ee
 and this concludes the proof. Most of the computations in this proof were performed using the package \texttt{toqito} for Python~\cite{toqito}.
\end{proof}

As a consequence, if we regularise the channel umlaut information for generic quantum channels, we get a new quantity.

\begin{Def}[(Regularised umlaut information of a channel)]
For $\pazocal{N}_{A\to B}$ we define
\bb\label{eq:reg_lautum_ch}
    U^\infty(\pazocal{N})\coloneqq\lim_{n\to\infty}\frac{1}{n}U(\pazocal{N}^{\otimes n}),
\ee
or more explicitly,
\bb
U^\infty(\pazocal{N})=\lim_{n\to\infty}\sup_{\rho_{A'\,^n}}\min_{\sigma_{B^n}}\frac{1}{n}D\left(\rho_{A'\,^n}\otimes\sigma_{B^n} \middle\|\, \rho^{(\pazocal{N}^{\otimes n})}_{A'\,^nB^n}\right).
\ee
\end{Def}

The existence of the limit~\eqref{eq:reg_lautum_ch} is ensured by Fekete's lemma, due to the super-additivity of the sequence $U(\pazocal{N}^{\otimes n})$ deduced from Lemma~\ref{lem:super_add}. By Lemma~\ref{lem:super_add} and Proposition~\ref{prop:add_CQ}, $U^\infty(\pazocal{N})\geq U(\pazocal{N})$, with equality for every CQ channel. 

It is possible to extend the boundedness condition given in Lemma~\ref{lem:channel_umlaut_prop} for the channel umlaut information to its regularised version.

\begin{lemma} \label{lem:boundedness_regularised_umlaut}
For a generic quantum channel $\NN = \NN_{A\to B}$, calling $Q_{A'B}$ the projector onto $\ker\big(J_{A'B}^{(\NN)}\big)$ and setting $Q_B \coloneqq \Tr_{A'} Q_{A'B}$, the following are equivalent:
\begin{enumerate}[(a)]
    \item $U(\NN)<\infty$;
    \item $U^\infty(\NN)< \infty$;
    \item $\ker\big(\Tr_{A'} Q_{A'B}\big)\neq \{0\}$.
\end{enumerate}
\end{lemma}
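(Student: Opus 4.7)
The plan is to reduce everything to the already-established equivalence (a)$\Leftrightarrow$(c) from Lemma~\ref{lem:channel_umlaut_prop}(2) and then bootstrap the single-letter finiteness of $U(\NN)$ to a multi-letter bound on $U(\NN^{\otimes n})$ that is linear in $n$, yielding finiteness of the regularisation.

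First I would observe that (a)$\Leftrightarrow$(c) is exactly the content of Lemma~\ref{lem:channel_umlaut_prop}(2), so nothing new needs to be done there. Next, since the remark preceding the lemma statement records the chain $U(\NN)\leq U^\infty(\NN)$ (from Lemma~\ref{lem:super_add} via Fekete), the direction (b)$\Rightarrow$(a) is immediate: if the regularised umlaut information is finite, so is the single-copy one.

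The only substantive direction is (a)$\Rightarrow$(b). Here I would revisit the argument in the proof of Lemma~\ref{lem:channel_umlaut_prop}(2), which showed that $U(\NN)<\infty$ is equivalent to the existence of a state $\sigma_B$ and a constant $M>0$ such that
\[
\id_{A'}\otimes\sigma_B \,\leq\, M\, J^{(\NN)}_{A'B}\,,
\]
and that whenever this holds one in fact has the uniform bound $U(\NN)\leq\log M$. The key observation is that this operator inequality tensorises: taking $n$-fold tensor products on both sides yields
\[
\id_{A'\,^n}\otimes\sigma_B^{\otimes n}\,\leq\, M^n\, \bigl(J^{(\NN)}_{A'B}\bigr)^{\otimes n}\,=\,M^n\, J^{(\NN^{\otimes n})}_{A'\,^n B^n}\,,
\]
up to the canonical reordering of tensor factors. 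Feeding this back into the same argument used in Lemma~\ref{lem:channel_umlaut_prop}(2) (operator monotonicity of the logarithm followed by taking the supremum over input states), one obtains $U(\NN^{\otimes n})\leq n\log M$ for every $n$, and consequently
\[
U^\infty(\NN)\,=\,\lim_{n\to\infty}\tfrac{1}{n}U(\NN^{\otimes n})\,\leq\,\log M\,<\,\infty\,.
\]

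I do not foresee any genuine obstacle: the proof is essentially a tensorisation of the witness already extracted inside the proof of Lemma~\ref{lem:channel_umlaut_prop}(2), together with the trivial bound $U\leq U^\infty$ established earlier. The only small care needed is the identification of $J^{(\NN^{\otimes n})}_{A'\,^n B^n}$ with $\bigl(J^{(\NN)}_{A'B}\bigr)^{\otimes n}$ under the natural tensor-factor permutation, which does not affect spectra or operator inequalities.
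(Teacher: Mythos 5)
Your proposal is correct and follows essentially the same route as the paper: both reduce the problem to tensorising the operator-dominance witness $\id_{A'}\otimes\sigma_B \leq M\, J^{(\NN)}_{A'B}$ extracted from the proof of Lemma~\ref{lem:channel_umlaut_prop}(2), obtaining $U(\NN^{\otimes n})\leq n\log M$ and hence $U^\infty(\NN)\leq \log M$. The only cosmetic difference is that the paper instantiates the witness explicitly (taking $\sigma_B$ to be the normalised projector onto $\ker Q_B$ and $M$ in terms of the minimal non-zero eigenvalue of $J^{(\NN)}_{A'B}$), whereas you invoke its existence abstractly.
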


\begin{proof}
The equivalence between~(a) and~(c) is stated already in Lemma~\ref{lem:channel_umlaut_prop}(2). By super-additivity (Lemma~\ref{lem:super_add}), we only need to show that~(c) implies~(b). 

By assumption, the projector onto $\ker Q_B$, which we denote by $P_B$ in what follows, is non-zero. Set $\sigma_B \coloneqq \frac{P_B}{\dim\ker Q_B}$. Calling $\mu(\NN)$ the minimal non-zero eigenvalue of $J_{A'B}^{(\NN)}$, this gives us
\bb
\id_{A'}\otimes \sigma_B = \id_{A'} \otimes \frac{P_B}{\dim\ker Q_B} \leq \frac{1}{\dim\ker Q_B}\, \left( \id_{A'B} - Q_{A'B}\right) \leq \frac{\mu(\NN)^{-1}}{\dim\ker Q_B}\, J_{A'B}^{(\NN)}\, .
\label{tensorisable_inequality}
\ee

Here, the first inequality can be deduced by observing that
\bb
\ker\left(\id_{A'B} - Q_{A'B}\right) = \supp\big(Q_{A'B}\big) \subseteq \HH_{A'}\otimes \supp(Q_B) = \HH_{A'} \otimes \ker(P_B) = \ker\left(\id_{A'}\otimes P_B\right) ,
\ee
which implies that $\id_{A'}\otimes P_B \leq \id_{A'B} - Q_{A'B}$ because both sides are projectors. The last inequality in~\eqref{tensorisable_inequality} holds by definition of $\mu(\NN)$. 

Taking tensor powers, from~\eqref{tensorisable_inequality} we obtain that
\bb
\id_{A'}^{\otimes n} \otimes \sigma_B^{\otimes n} \leq \left(\frac{\mu(\NN)^{-1}}{\dim\ker Q_B}\right)^n\, \big(J_{A'B}^{(\NN)}\big)^{\otimes n} = \left(\frac{\mu(\NN)^{-1}}{\dim\ker Q_B}\right)^n\, J_{{A'}^nB^n}^{(\NN^{\otimes n})}
\ee
for all positive integers $n$. As detailed in the proof of Lemma~\ref{lem:channel_umlaut_prop}, this implies that
\bb
U\big(\NN^{\otimes n}\big) \leq \log \left(\frac{\mu(\NN)^{-1}}{\dim\ker Q_B}\right)^n = n \log \left(\frac{\mu(\NN)^{-1}}{\dim\ker Q_B}\right) .
\ee
Diving by $n$ and taking the limit shows that
\bb
U^\infty(\NN) \leq \log \left(\frac{\mu(\NN)^{-1}}{\dim\ker Q_B}\right) < \infty\, ,
\ee
establishing~(b).
\end{proof}


\subsection{Operational interpretation in quantum communication theory}\label{sec:operational_int}

\begin{figure}[h]
    \centering
    \includegraphics[width=0.45\linewidth]{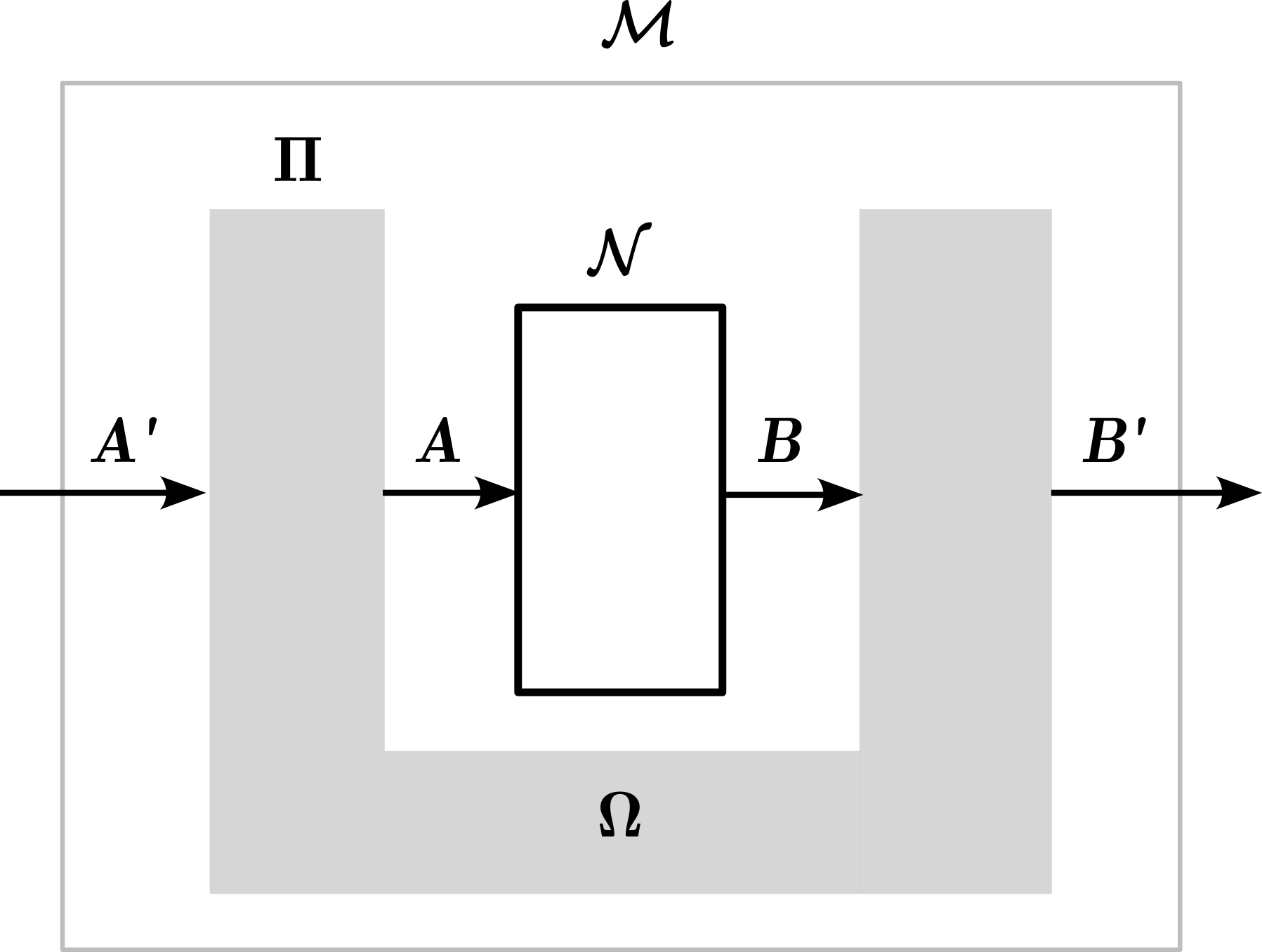}
    \caption{The effective channel $\pazocal{M}$ given by composition of an $\Omega$-assisted code $\Pi$ with a channel $\pazocal{N}$.}
    \label{fig:omega_assisted}
\end{figure}

An $\Omega$-assisted code is a two-input CPTP map $\Pi^\Omega_{A'B\to AB'}$ which sends any channel $\pazocal{N}_{A\to B}$ to an effective channel $\pazocal{M}_{A'\to B'}$ (see \autoref{fig:omega_assisted}) and that can be implemented using local operations with $\Omega$-assistance:
\begin{itemize}
    \item $\Omega=\emptyset$: unassisted codes, which are the product of an encoder $\mathcal{E}_{A'\to A}$ and a decoder $\mathcal{D}_{B\to B'}$, namely $\Pi_{A'B\to AB'}^\emptyset=\mathcal{E}_{A'\to A}\otimes\mathcal{D}_{B\to B'}$;
    \item $\Omega=\rm E$: entanglement-assisted codes;
    \item $\Omega=\rm NS$: non-signalling-assisted codes.
\end{itemize}

Given a channel $\pazocal{N}_{A\to B}$ and an $\Omega$-assisted code $\Pi_{A'B\to AB'}^\Omega$, Alice uses the composite channel $\pazocal{M}_{A'\to B'}=\Pi_{A'B\to AB'}^\Omega\circ \pazocal{N}_{A\to B}$.

\begin{Def}
Given a channel $\pazocal{N}_{A\to B}$ and a number of messages $M\in \mathbb{N}^+$, the minimum   average error probability that can be achieved by an $\Omega$-assisted code is defined as 
\bb
    \epsilon^{ \Omega}(M, \pazocal{N})\coloneqq \inf_{\Pi^{\Omega}}\left(1-\frac{1}{M}\sum_{k=1}^M\Tr\left[\pazocal{M}(\ketbra{k}_{A'})\ketbra{k}_{B'}\right]\right),
\ee
where $\pazocal{M}$ is the composite channel $\pazocal{M}_{A'\to B'}=\Pi^{\Omega}_{A'B\to AB'}\circ \pazocal{N}_{A\to B}$, with $\Pi^{\Omega}_{A'B\to AB'}$ belonging to the class of $\Omega$-assisted codes. The minimum   average error probability that can be achieved by an activated $\Omega$-assisted code is defined as 
\bb
\epsilon^{\Omega, \rm a}(M, \pazocal{N})\coloneqq \inf_{M'\in \mathbb{N}^+} \epsilon^{\Omega}(M\times M', \pazocal{N}\otimes {\rm Id}_{M'}), 
\ee
where  ${\rm Id}_{M}$ is the classical $M$-dimensional noiseless channel 
\bb
    {\rm Id}_M(\,\cdot\,)\coloneqq\sum_{k=1}^M \Tr[\,(\cdot)\,\ketbra{k}]\ketbra{k}.
\ee
The largest size $M$ of the set of messages that can be transmitted with error probability at most $\epsilon \in (0,1)$ using an activated $\Omega$-assisted code is 
\bb
    M^{\Omega,\rm a}(\epsilon,\pazocal{N})\coloneqq\sup\left\{ M : \epsilon^{\Omega,\rm a}(M, \pazocal{N})\le  \epsilon\right\}.
\ee
\end{Def}

The corresponding error exponents are then defined as follows.

\begin{Def}[($\Omega$-assisted error exponents)]
    Given a channel $\pazocal{N}_{A\to B}$ and a rate $r>0$, its activated $\Omega$-assisted error exponent with communication rate $r$ is defined as 
    \bb
    E^{\Omega,\rm a}(r, \pazocal{N}) \coloneqq\liminf_{n\to \infty } -\frac{1}{n}\log \epsilon^{\Omega,\rm a}(\exp(rn), \pazocal{N}^{\otimes n}).
    \ee
    The activated $\Omega$-assisted zero-rate error exponent of $\pazocal{N}$ is defined as 
    \bb
    E^{\Omega,a}(0^+, \pazocal{N}) \coloneqq\liminf_{r\to 0^+ }  E^{\Omega,a}(r, \pazocal{N}).
    \ee
\end{Def}

The largest size $M$ of the set of messages that can be transmitted using activated, non-signalling--assisted codes admits a formulation using the hypothesis testing divergence. This is the so-called Matthews--Wehner converse~\cite{Matthews_2014}, which is achievable via activated, non-signalling--assisted codes~\cite{wang2019-2}, and can be rephrased in terms of the error as follows.

\begin{prop}[(Meta-converse)]
\label{thm:rev_meta}
Let $\pazocal{N}_{A\to B}$ be a quantum channel and let $M> 1$. Then, we have
    \bb\label{eq:reversedmc1}
        -\log\epsilon^{\rm NS,a}(\new{M,}\pazocal{N}\old{,r})=\sup_{\rho_{A'}}\min_{\sigma_B}D^{\old{\exp(-r)}\new{1/M}}_H\left(\rho_{A'}\otimes\sigma_{B}\middle\|\rho_{A'B}^{(\pazocal{N})}\right),
    \ee
    where $\rho_{A'B}^{(\pazocal{N})}\coloneqq \left({\rm Id}_{A'}\otimes\pazocal{N}_{A\to B}\right)\left( \left(\rho_{A'}^{1/2}\otimes\id_A\right)\Phi_{A'A}\left(\rho_{A'}^{1/2}\otimes\id_A\right)\right)$.
\end{prop}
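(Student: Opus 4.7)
The plan is to derive this identity as a reformulation (in the ``reversed'' form where $\epsilon$ plays the role of a type-II error) of the Matthews--Wehner meta-converse for non-signalling--assisted codes \cite{Matthews_2014,wang2019-2}, carried over to the activated setting in \cite{Oufkir2024Oct}. I would begin by writing down the primal SDP for $\epsilon^{\rm NS,a}(M,\pazocal{N})$. After activation by a noiseless classical channel $\mathrm{Id}_{M'}$ of arbitrary size, the minimum NS-assisted error admits an SDP formulation as a minimisation over a positive operator $F_{A'B}$ and auxiliary states, with two marginal-type constraints: one on the $A'$-side (relaxed from equality to inequality thanks to activation, controlled by a free state $\rho_{A'}$) and one on the $B$-side (an inequality involving $\sigma_B$).

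Next I would take the Lagrangian dual of this SDP. Strong duality holds in finite dimensions by a standard Slater-type argument (a strictly feasible primal point exists, e.g.\ $F_{A'B}$ proportional to the identity). Under dualisation, the dual variable associated to $F_{A'B}\geq 0$ becomes a POVM element $\Lambda_{A'B}$ with $0\leq \Lambda\leq \id$, while the dual variables attached to the marginal constraints are naturally identified with the states $\rho_{A'}$ and $\sigma_B$ appearing in the final expression. Rearranging, the dual reads
\[
\epsilon^{\rm NS,a}(M,\pazocal{N}) = \inf_{\rho_{A'}} \sup_{\sigma_B}\, \min_{\Lambda}\Big\{\Tr\!\big[\Lambda\,\rho^{(\pazocal{N})}_{A'B}\big] : 0 \leq \Lambda \leq \id,\ \Tr\!\big[\Lambda(\rho_{A'}\otimes\sigma_B)\big]\geq 1 - \tfrac{1}{M}\Big\}.
\]
By the definition~\eqref{D_H}, the innermost minimum equals $\exp\!\big(\!-D_H^{1/M}(\rho_{A'}\otimes\sigma_B\,\|\,\rho^{(\pazocal{N})}_{A'B})\big)$. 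Applying $-\log$ to both sides then yields the stated identity, with $\sup/\min$ replacing $\inf/\sup$ via monotonicity of the logarithm; the inner $\min_{\sigma_B}$ is attained thanks to the compactness of $\mathcal{D}(\mathcal{H}_B)$ and the lower semicontinuity of $D_H^{1/M}$ in its first argument.

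The main obstacle is the careful setup of the activated primal SDP and verification that activation genuinely converts the $A'$-marginal equality constraint into an inequality parametrised by an arbitrary state $\rho_{A'}$ --- this relaxation is the origin of the supremum over input states in the final formula. Beyond this step, the remainder is routine SDP duality. Since all the needed machinery is already developed in \cite{Matthews_2014,wang2019-2,Oufkir2024Oct}, the most expedient route is simply to invoke the activated meta-converse from those references and rephrase it in the present reversed form.
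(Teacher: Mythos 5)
Your proposal is correct and follows essentially the same route as the paper: both reduce the claim to the Matthews--Wehner meta-converse SDP together with its identification with the activated non-signalling error probability from the cited references, and then massage that SDP into the hypothesis-testing form, with the state $\sigma_B$ emerging as the dual/test variable for the $B$-marginal constraint and a minimax (duality) step pulling $\min_{\sigma_B}$ outside. The only difference is cosmetic --- the paper makes the rearrangement explicit via the change of variables $O_{A'B}=\tfrac{1}{M}\rho_{A'}^{-1/2}\Lambda_{A'B}\rho_{A'}^{-1/2}$ followed by a minimax swap, whereas you package it as Lagrangian duality --- and your intermediate expression $\epsilon^{\rm NS,a}=\inf_{\rho_{A'}}\sup_{\sigma_B}\min_{\Lambda}\{\cdots\}$ is exactly equivalent to the paper's penultimate line.
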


\begin{proof}
    The activated non-signalling error probability corresponds to the meta-converse error probability~\cite{wang2019-2}, where the latter admits the following SDP formulation~\cite[Proposition 24]{Matthews_2014}
    \bb
    1-\epsilon^{\rm NS, a}(M, \pazocal{N})&= \sup_{\rho_{A'}\in \mathcal{D}(A')}\sup_{\Lambda_{A'B}} \left\{ \frac{1}{M}\Tr\left[ J^{(\pazocal{N})}_{A'B} \Lambda_{A'B} \right] \,:\, \Lambda_{B}\le \id_B, 0\le \Lambda_{A'B}  \le M \rho_{A'}\otimes \id_B \right\} 
    \\&\eqt{(i)} \sup_{\rho_{A'}\in \mathcal{D}(A')} \sup_{0\,\le\, O_{A'B} \,\le\, \id_{A'B}} \left\{\Tr\left[\rho_{A'}^{1/2} J^{(\pazocal{N})}_{A'B} \rho_{A'}^{1/2}  O_{A'B}\right]  : \sup_{\sigma_B\in \mathcal{D}(B)} \Tr\left[ (\rho_{A'}\otimes \sigma_B)\ O_{A'B}\right]\le \frac{1}{M} \right\}
     \\&\eqt{(ii)} \sup_{\rho_{A'}\in \mathcal{D}(A')} \min_{\sigma_B\in \mathcal{D}(B)}\sup_{0\,\le\, O_{A'B} \,\le\, \id_{A'B}} \left\{\Tr\left[\rho_{A'}^{1/2} J^{(\pazocal{N})}_{A'B} \rho_{A'}^{1/2}  O_{A'B}\right]  :  \Tr\left[ (\rho_{A'}\otimes \sigma_B) O_{A'B}\right]\le \frac{1}{M} \right\}
     \\&= \sup_{\rho_{A'}\in \mathcal{D}(A')} \min_{\sigma_B\in \mathcal{D}(B)} 1-\exp\left(-D^{1/M}_H\left(\rho_{A'}\otimes\sigma_{B}\middle\|\rho_{A'B}^{(\pazocal{N})}\right)\right),
    \ee
    where in (i) we make the change of variable $O_{A'B} = \frac{1}{M}\rho_{A'}^{-1/2} \Lambda_{A'B} \rho_{A'}^{-1/2}$, and  (ii) can be proved using standard minimax arguments. Hence, we find
 \bb
      -\log \epsilon^{\rm NS, a}(M, \pazocal{N})
     &=\sup_{\rho_{A'}\in \mathcal{D}(A')} \min_{\sigma_B\in \mathcal{D}(B)} D^{1/M}_H\left(\rho_{A'}\otimes\sigma_{B}\middle\|\rho_{A'B}^{(\pazocal{N})}\right).
    \ee
\end{proof}

We are interested in the regime in which, no matter how low the rate is, the error is minimised. The focus is therefore on the quality of the communication regardless of the communication rate, and the figure of merit is the error exponent in the asymptotic limit of large number of uses of the channel:
\bb\label{eq:def_thoroughness}
 \errnsa{\pazocal{N}}\coloneqq
        \liminf_{r\to 0^+}\liminf_{n\to \infty} -\frac{1}{n}\log\epsilon^{\rm NS,a}\left(\exp(rn),\pazocal{N}^{\otimes n}\right).
\ee
As another quantifier, we could consider
\bb\label{eq:T_ast}
    \opnsa{\pazocal{N}}\coloneqq
        \liminf_{{M}\to\infty}\liminf_{n\to \infty}-\frac{1}{n}\log\epsilon^{\rm NS,a}(M,\pazocal{N}^{\otimes n}).
\ee
It might appear that $ \errnsa{\pazocal{N}}$ and  $\opnsa{\pazocal{N}}$ are two different error exponents as the former deals with the situation of sending a number of messages with a vanishing but non-zero rate, while the latter considers sending an arbitrary large but constant number of messages. However, it will turn out that they are actually equal and as such $\opnsa{\pazocal{N}}$ can be seen as an alternative definition of the activated, non-signalling error exponent of the channel $\pazocal{N}$ at zero rate. For the moment, it is easy to prove that $\opnsa{\pazocal{N}}\geq \errnsa{\pazocal{N}}$. For any fixed $M\geq 1$ and $r>0$,
\bb
    \epsilon^{\rm NS,a}(\new{\exp(rn),}\pazocal{N}^{\otimes n}\old{, rn})\geq \epsilon^{\rm NS,a}(\new{M},\pazocal{N}^{\otimes n}\old{, \log M})\qquad \forall\, n\geq \frac{\log M}{r}.
\ee
Therefore, we get
\bb
    \liminf_{n\to \infty} -\frac 1n  \log\epsilon^{\rm NS,a}(\new{\exp(rn),}\pazocal{N}^{\otimes n}\old{, rn})\leq \liminf_{n\to \infty}  -\frac 1n \log\epsilon^{\rm NS,a}(\new{M},\pazocal{N}^{\otimes n}\old{, \log M}),
\ee
whence, by arbitrariness of $M\geq 1$ and $r>0$,
\bb\label{eq:Tast}
    \errnsa{\pazocal{N}}\leq \opnsa{\pazocal{N}}.
\ee

\begin{boxed}{}
\begin{thm}[(Activated, non-signalling error exponent at zero rate of a quantum channel)]\label{thm:exact_tho} 
    Let $\pazocal{N}:\mathcal{D}(\mathcal{H}_A)\to \mathcal{D}(\mathcal{H}_B)$ be a quantum channel. Then, it holds that
    \bb\label{eq:interpr}
        \errnsa{\pazocal{N}} = \opnsa{\pazocal{N}} = U^\infty(\pazocal{N}).
    \ee
    That is, the activated, no signalling error exponent of $\pazocal{N}$ at zero rate is its regularised umlaut information. In particular, for classical-quantum channel $\pazocal{M}:\mathcal{P}(\mathcal{X})\to \mathcal{D}(\mathcal{H}_B)$ one has
    \bb\label{eq:interpr_cq}
        \errnsa{\pazocal{M}} = U(\pazocal{M}).
    \ee
\end{thm}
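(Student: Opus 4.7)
The plan is to prove the chain $U^\infty(\pazocal{N})\leq \errnsa{\pazocal{N}}\leq \opnsa{\pazocal{N}}\leq U^\infty(\pazocal{N})$; the middle inequality is already established in~\eqref{eq:Tast}, so it is enough to show the two outer bounds. Both will stem from the meta-converse of Proposition~\ref{thm:rev_meta}, which expresses $-\log\epsilon^{\rm NS,a}(M,\pazocal{N}^{\otimes n})$ as a sup--min hypothesis-testing divergence, combined with two Rényi-type bounds connecting $D_H^\epsilon$ to $D$: the data-processing-based inequality $D_H^\epsilon(\rho\|\sigma)\leq \frac{D(\rho\|\sigma)+h_2(\epsilon)}{1-\epsilon}$ (with $h_2$ the binary entropy) for the achievability side, and the Rényi lower bound~\eqref{eq:lowerDalpha} for the converse side.

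For $\opnsa{\pazocal{N}}\leq U^\infty(\pazocal{N})$, I substitute the first inequality into the meta-converse. Since the additive term $h_2(1/M)$ does not depend on $\rho_{A'\,^n}$ or $\sigma_{B^n}$, minimising over $\sigma_{B^n}$ and then taking the supremum over $\rho_{A'\,^n}$ gives
\begin{equation*}
-\log\epsilon^{\rm NS,a}(M,\pazocal{N}^{\otimes n})\;\leq\;\frac{U(\pazocal{N}^{\otimes n})+h_2(1/M)}{1-1/M}\, .
\end{equation*}
Dividing by $n$ and taking $\liminf_{n\to\infty}$, the super-additivity from Lemma~\ref{lem:super_add} and Fekete's lemma force $\frac{1}{n}U(\pazocal{N}^{\otimes n})\to U^\infty(\pazocal{N})$, leaving $U^\infty(\pazocal{N})/(1-1/M)$; sending $M\to\infty$ then closes the bound.

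For the reverse direction $\errnsa{\pazocal{N}}\geq U^\infty(\pazocal{N})$, which is the substantive step, I fix an integer $k$ and an input state $\rho_{A'\,^k}^*$, and set $\tau_k\coloneqq \rho_{A'\,^kB^k}^{(\pazocal{N}^{\otimes k})}$ for this input. For arbitrary $n$ I let $\ell=\lfloor n/k\rfloor$ and use the monotonicity $\epsilon^{\rm NS,a}(\exp(rn),\pazocal{N}^{\otimes n})\leq \epsilon^{\rm NS,a}(\exp(rn),\pazocal{N}^{\otimes k\ell})$, which holds because discarding the $n-k\ell$ surplus channel uses is itself a non-signalling operation. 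Applying the meta-converse to the right-hand side with the i.i.d.\ ansatz $\rho_{A'\,^{k\ell}}=(\rho_{A'\,^k}^*)^{\otimes\ell}$ makes the reference state factorise as $\tau_k^{\otimes\ell}$; combining~\eqref{eq:lowerDalpha} (with $\log(1/\epsilon)=rn$) with the additivity of $U_\alpha$ from Proposition~\ref{prop:closed_alpha} then yields
\begin{equation*}
-\log\epsilon^{\rm NS,a}(\exp(rn),\pazocal{N}^{\otimes k\ell})\;\geq\;\ell\, U_\alpha(A'\,^k;B^k)_{\tau_k}+\tfrac{\alpha}{1-\alpha}\,rn\, .
\end{equation*}
Dividing by $n$ (with $\ell/n\to 1/k$) and applying $\liminf_{n\to\infty}$, then $\liminf_{r\to 0^+}$, and finally $\limsup_{\alpha\to 1^-}$ --- the last via Lemma~\ref{lem:alpha_to_1} --- I reach $\errnsa{\pazocal{N}}\geq \frac{1}{k}U(A'\,^k;B^k)_{\tau_k}$. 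Optimising over $\rho_{A'\,^k}^*$ promotes this to $\frac{1}{k}U(\pazocal{N}^{\otimes k})$, and $k\to\infty$ gives $U^\infty(\pazocal{N})$ via Fekete's lemma applied to the super-additive sequence $\{U(\pazocal{N}^{\otimes k})\}_{k}$.

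The main obstacle will be choreographing the four nested limits $n\to\infty$, $r\to 0^+$, $\alpha\to 1^-$, $k\to\infty$ in the correct order: because the threshold $\epsilon=\exp(-rn)$ decays exponentially in $n$, naive Stein achievability collapses, and one genuinely needs the Rényi refinement~\eqref{eq:lowerDalpha} with $\alpha$ held strictly below $1$ throughout the $n$-limit and only relaxed to $1$ afterwards. The specialisation~\eqref{eq:interpr_cq} for classical-quantum channels is then immediate from~\eqref{eq:interpr}, since the additivity of $U$ for such channels (Proposition~\ref{prop:add_CQ}) forces $U^\infty(\pazocal{M})=U(\pazocal{M})$.
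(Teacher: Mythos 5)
Your proof is correct, and while the two outer bounds are obtained from the same ingredients as in the paper (the meta-converse of Proposition~\ref{thm:rev_meta}, the two-sided relations between $D_H^\epsilon$ and $D_\alpha$, $D$, and the additivity of $U_\alpha$), your handling of the regularisation in the direction $\errnsa{\pazocal{N}}\geq U^\infty(\pazocal{N})$ is organised differently. The paper first proves only the single-letter bound $\errnsa{\pazocal{N}}\geq U(\pazocal{N})$ (via the i.i.d.\ ansatz $\rho_{A'}^{\otimes n}$ on single channel uses), then separately establishes the weak additivity $\errnsa{\pazocal{N}}=\lim_k\frac{1}{k}\errnsa{\pazocal{N}^{\otimes k}}$ (its ``third step'', proved in Appendix~\ref{app:reg} by writing $n=km+r$ and using monotonicity of $\epsilon^{\rm NS,a}$ in the number of channel uses), and finally applies the single-letter bound to $\pazocal{N}^{\otimes k}$ and regularises. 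You instead inline that blocking argument directly into the converse: fixing $k$ and a $k$-copy input $\rho^*_{A'^k}$, feeding the block-i.i.d.\ ansatz $(\rho^*_{A'^k})^{\otimes\ell}$ into the meta-converse for $\pazocal{N}^{\otimes k\ell}$, and using the same monotonicity $\epsilon^{\rm NS,a}(M,\pazocal{N}^{\otimes n})\leq\epsilon^{\rm NS,a}(M,\pazocal{N}^{\otimes k\ell})$ to pass from $n$ to $k\ell$ uses. This yields $\errnsa{\pazocal{N}}\geq\frac{1}{k}U(\pazocal{N}^{\otimes k})$ for every $k$ in one pass and dispenses with the separate weak-additivity lemma for the exponents, at the cost of carrying the fourth parameter $k$ through the limit choreography. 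Your ordering of the limits ($n\to\infty$, then $r\to0^+$, then $\alpha\to1^-$, then $k\to\infty$) is the right one --- in particular taking $r\to0^+$ before $\alpha\to1^-$ is essential so that the $\frac{\alpha}{1-\alpha}r$ term is killed before the prefactor diverges --- and the final specialisation to classical-quantum channels via Proposition~\ref{prop:add_CQ} matches the paper.
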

\end{boxed}

\begin{proof}
    The proof is divided into four steps.
    
    \noindent\textbf{First step: lower bound.} Let $\alpha\in(0,1)$. Then we can write
    \bb
        \errnsa{\pazocal{N}}&=
        \liminf_{r\to 0^+}\liminf_{n\to \infty}-\frac{1}{n}\log\epsilon^{\rm NS,a}(\new{\exp(rn)},\pazocal{N}^{\otimes n}\old{,rn})\\
        &\eqt{(i)}\liminf_{r\to 0^+}\liminf_{n\to \infty}\frac{1}{n}\sup_{\rho_{A'\,^n}}\min_{\sigma_{B^n}}D^{\old{\exp(-nr)}\new{\exp(-nr)}}_H\left(\rho_{A'\,^n}\otimes\sigma_{B^n}\middle\|\rho_{A'\,^nB^n}^{(\pazocal{N}^{\otimes n})}\right)\\
        &\geqt{(ii)}\liminf_{r\to 0^+}\liminf_{n\to \infty}\frac{1}{n}\min_{\sigma_{B^n}}D^{\old{\exp(-nr)}\new{\exp(-nr)}}_H\left(\rho_{A'}^{\otimes n}\otimes\sigma_{B^n}\middle\|\left(\rho_{A'B}^{(\pazocal{N})}\right)^{\otimes n}\right)\\
        &\geqt{(iii)}\liminf_{r\to 0^+}\liminf_{n\to \infty}\frac{1}{n}\min_{\sigma_{B^n}}\left(D_\alpha\left(\rho_{A'}^{\otimes n}\otimes\sigma_{B^n}\middle\|\left(\rho_{A'B}^{(\pazocal{N})}\right)^{\otimes n}\right)+\frac{\alpha nr}{1-\alpha} \right)\\
        &\eqt{(iv)} \liminf_{n\to \infty}\frac{1}{n} U_\alpha(A'\,^n;B^n)_{\left(\rho^{(\pazocal{N})}\right)^{\otimes n}}\\
        &\eqt{(v)} U_\alpha(A';B)_{\rho^{(\pazocal{N})}},
    \ee
    where in~(i) we have used  
    Proposition~\ref{thm:rev_meta}, in~(ii) we have made the ansatz $\rho_{A'\,^n}=\rho_{A'}^{\otimes n}$ for an arbitrary fixed state $\rho_{A'}\in\mathcal{D}(\mathcal{H}_{A'})$, noting that for this choice $\rho_{A'\,^nB^n}^{(\pazocal{N}^{\otimes n})}=\left(\rho_{A'B}^{(\pazocal{N})}\right)^{\otimes n}$, in~(iii) we have recalled the inequality~\eqref{eq:lowerDalpha}, 
    in~(iv) we have simply applied the definition of Petz--Rényi $\alpha$-umlaut information (see Definition~\ref{def:petz_lautum}) and in~(v) we have leveraged the additivity of $U_\alpha$, as in Proposition~\ref{prop:closed_alpha}.
    By the arbitrariness of the state $\rho_{A'}$ and of $\alpha\in(0,1)$, we can lower bound
    \bb
        \errnsa{\pazocal{N}}&\geq \sup_{\rho_{A'}}\limsup_{\alpha\to 1^-}U_\alpha(A';B)_{\rho^{(\pazocal{N})}}
        \eqt{(vi)}\sup_{\rho_{A'}}U(A';B)_{\rho^{(\pazocal{N})}}=U(\pazocal{N}),
    \ee
    where in~(vi) we have used Lemma~\ref{lem:alpha_to_1}.

    \noindent\textbf{Second step: upper bound.} By Proposition~\ref{thm:rev_meta}, we have
    \bb
        \opnsa{\pazocal{N}}
        &=\liminf_{\delta\to0}\liminf_{n\to \infty}-\frac{1}{n}\log\epsilon^{\rm NS,a}\left(\frac{1}{\delta},\pazocal{N}^{\otimes n}\right)
        \\&=
        \liminf_{\delta\to 0}\liminf_{n\to \infty}\frac{1}{n}\sup_{\rho_{A'\,^n}}\min_{\sigma_{B^n}} D^{\delta}_H\left(\rho_{A'\,^n}\otimes\sigma_{B^n}\middle\|\rho_{A'\,^nB^n}^{(\pazocal{N}^{\otimes n})}\right)\\
        & \leqt{(vii)} \lim_{\delta\to 0}\liminf_{n\to\infty}\sup_{\rho_{A'\,^n}}\min_{\sigma_{B^n}}\frac{1}{n}\frac{1}{1-\delta}\left(1+D\left(\rho_{A'\,^n}\otimes\sigma_{B^n}\middle\|\rho_{A'\,^nB^n}^{(\pazocal{N}^{\otimes n})}\right)\right)\\
        & =\liminf_{n\to\infty}\sup_{\rho_{A'\,^n}}\min_{\sigma_{B^n}}\frac{1}{n}D\left(\rho_{A'\,^n}\otimes\sigma_{B^n}\middle\|\rho_{A'\,^nB^n}^{(\pazocal{N}^{\otimes n})}\right)\\
        &=U^{\infty}(\pazocal{N}),
    \ee
    where in~(vii) we have used the upper bound~\cite{Hiai1991}
    \bb
        D_H^{\epsilon}(\rho\|\sigma)\leq \frac{1}{1-\epsilon}(1+D(\rho\|\sigma))\, ,
    \ee
    which is a simple consequence of the data processing inequality for the relative entropy, together with the definition of the hypothesis testing relative entropy in~\eqref{D_H}.
    
    \noindent\textbf{Third step: weak additivity of the error exponent.}
    
The third step is quite standard in (quantum) information theory: since  $\opnsa{\pazocal{N}}$ and $\errnsa{\pazocal{N}}$ are defined through a regularisation and are well-behaved with respect to the message size, they can be shown to be weakly additive:
\bb 
\opnsa{\pazocal{N}} = \lim_{n \to \infty} \frac{1}{n}\opnsa{\pazocal{N}^{\otimes n}},\quad \errnsa{\pazocal{N}} = \lim_{n \to \infty} \frac{1}{n}\errnsa{\pazocal{N}^{\otimes n}}\, .
\ee
We refer to Appendix~\ref{app:reg} for details on the proof of this claim.
    
    \noindent\textbf{Fourth step: regularisation.}
    Combining the lower and the upper bounds, we obtain that
    \bb\label{eq:chain_ineq}
        U^\infty(\pazocal{N})\, \geq\, \opnsa{\pazocal{N}}\ \geqt{(xiii)}\ \errnsa{\pazocal{N}}\, \geq\, U(\pazocal{N})\, ,
    \ee
    where~(xiii) is~\eqref{eq:Tast}. Regularising this inequality yields
    \bb
        U^\infty(\pazocal{N})\geq \opnsa{\pazocal{N}}\geq \errnsa{\pazocal{N}}\geq \lim_{n\to \infty}\frac{1}{n}U(\pazocal{N}^{\otimes n})\, ,
    \ee
    whence
    \bb
        U^\infty(\pazocal{N})\geq \opnsa{\pazocal{N}}\geq \errnsa{\pazocal{N}}\geq U^\infty(\pazocal{N})\, ,
    \ee
    and this proves~\eqref{eq:interpr}. Due to the additivity of the channel umlaut information of classical-quantum channels (Proposition~\ref{prop:add_CQ}), Eq.~\eqref{eq:interpr_cq} immediately follows.
\end{proof}

In order to prove directly~\eqref{eq:interpr_cq}, the third step in the previous proof is not needed because in~\eqref{eq:chain_ineq} the upper bound already coincides with the lower bound, as the umlaut information of classical-quantum channels is additive.

Before we move on, it is worth emphasising that the umlaut information can diverge to infinity in some cases --- indeed, this is fully expected, as it is known that channel coding with non-signalling assistance is possible with zero error for some channels~\cite{duan_2016,duan_2017}. Let us comment on this operational connection more.
If $U^\infty(\NN)=\infty$, the above Theorem~\ref{thm:exact_tho} indicates that classical communication assisted by activated non-signalling correlations can be carried out with an error that decreases super-exponentially in the number of channel uses.
However, comparing our exact characterisation of the set of channels for which $U^\infty(\NN)$ diverges in Lemma~\ref{lem:boundedness_regularised_umlaut} with the results of~\cite{duan_2016}, we can obtain a much closer relation with the setting of \emph{zero}-error coding --- that is, where decoding errors are not allowed whatsoever.
We summarise the connections in the following result. Since we do not focus on the zero-error setting in this paper, we refer to~\cite{duan_2016,duan_2017} for detailed definitions.
\begin{cor}
Let $C^{\rm NS}_0(\pazocal{N})$ denote the asymptotic zero-error classical capacity with non-signalling assistance~\cite{duan_2016}, and $C^{\rm NS,a}_0(\pazocal{N})$ the zero-error capacity with activated non-signalling assistance~\cite{duan_2017}. Then,
\bb
U^\infty(\pazocal{N}) = \infty \iff U(\pazocal{N}) = \infty \iff C^{\rm NS}_0(\pazocal{N}) > 0 \iff C^{\rm NS,a}_0(\pazocal{N}) > 0.
\ee
\end{cor}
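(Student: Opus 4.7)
I will prove the four equivalences in three conceptual steps. The first biconditional $U^\infty(\pazocal{N})=\infty \Leftrightarrow U(\pazocal{N})=\infty$ is an immediate restatement of Lemma~\ref{lem:boundedness_regularised_umlaut}. The last biconditional $C_0^{\mathrm{NS,a}}(\pazocal{N})>0 \Leftrightarrow C_0^{\mathrm{NS}}(\pazocal{N})>0$ I will invoke from the literature: the direction $C_0^{\mathrm{NS}}>0\Rightarrow C_0^{\mathrm{NS,a}}>0$ is trivial because activation can only enlarge the set of achievable rates in any zero-error protocol, while the non-trivial converse (coincidence of positivity between the activated and non-activated asymptotic NS zero-error capacities) follows from the SDP characterisations in~\cite{duan_2016,duan_2017}. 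The main work lies in the central equivalence $U(\pazocal{N})=\infty\Leftrightarrow C_0^{\mathrm{NS,a}}(\pazocal{N})>0$, which I plan to establish via the meta-converse of Proposition~\ref{thm:rev_meta}.

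\textbf{From $U(\pazocal{N})<\infty$ to $C_0^{\mathrm{NS,a}}(\pazocal{N})=0$.} Super-additivity (Lemma~\ref{lem:super_add}) together with Lemma~\ref{lem:boundedness_regularised_umlaut} forces $U(\pazocal{N}^{\otimes n})<\infty$ for every $n$. Combining the meta-converse with the elementary bound $D_H^\epsilon(\rho\|\sigma)\le (1+D(\rho\|\sigma))/(1-\epsilon)$ (a consequence of data processing and the very definition of $D_H^\epsilon$) yields
\[
-\log\epsilon^{\mathrm{NS,a}}(M,\pazocal{N}^{\otimes n}) \;=\; \sup_{\rho_{A'^n}}\min_{\sigma_{B^n}} D_H^{1/M}\!\left(\rho\otimes\sigma\,\Big\|\,\rho^{(\pazocal{N}^{\otimes n})}\right) \;\le\; \frac{1+U(\pazocal{N}^{\otimes n})}{1-1/M} \;<\;\infty,
\]
so $\epsilon^{\mathrm{NS,a}}(M,\pazocal{N}^{\otimes n})>0$ for every integer $M\ge 2$ and every $n$. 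Hence no activated NS code of size $\ge 2$ is ever exactly zero-error, and $C_0^{\mathrm{NS,a}}(\pazocal{N})=0$.

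\textbf{From $U(\pazocal{N})=\infty$ to $C_0^{\mathrm{NS,a}}(\pazocal{N})>0$.} By Lemma~\ref{lem:boundedness_regularised_umlaut}, $U(\pazocal{N})=\infty$ amounts to $\Tr_{A'}[Q_{A'B}]$ being of full rank; using $\Tr_{A'}[P_{A'B}]+\Tr_{A'}[Q_{A'B}]=d_{A'}\,\id_B$ with $P=\id-Q$ the projector onto $\supp\big(J^{(\pazocal{N})}_{A'B}\big)$, this rewrites as the strict spectral bound $\|\Tr_{A'}[P_{A'B}]\|_\infty<d_{A'}$. Since Choi-matrix supports tensorise, for the $n$-fold product channel I may plug the maximally mixed input $\rho_{A'^n}=\id_{A'^n}/d_{A'}^n$ into the SDP underlying the meta-converse and compute
\[
\max_{\sigma_{B^n}} \Tr\!\left[P^{\otimes n}(\rho_{A'^n}\otimes\sigma_{B^n})\right] \;=\; \left(\frac{\|\Tr_{A'}P\|_\infty}{d_{A'}}\right)^{\!n},
\]
which tends to $0$ as $n\to\infty$; for $n$ large enough the right-hand side is at most $1/2$, and the meta-converse then forces $\epsilon^{\mathrm{NS,a}}(2,\pazocal{N}^{\otimes n})=0$, whence $C_0^{\mathrm{NS,a}}(\pazocal{N})\ge (\log 2)/n>0$. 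The delicate point bridging the two displayed steps is the dictionary $\min_{\sigma_{B^n}}D_H^{1/M}(\rho\otimes\sigma\,\|\,\rho^{(\pazocal{N}^{\otimes n})})=\infty$ if and only if $\max_{\sigma_{B^n}}\Tr[P^{\otimes n}(\rho\otimes\sigma)]\le 1/M$, which is what converts the spectral statement of Lemma~\ref{lem:boundedness_regularised_umlaut} into the operational zero-error conclusion; and the mild subtlety that for the uniform choice $\rho_{A'^n}=\id/d_{A'}^n$ the kernel of $\rho^{(\pazocal{N}^{\otimes n})}$ coincides with $\ker J^{(\pazocal{N}^{\otimes n})}$, so that $P^{\otimes n}$ is indeed the correct projector to use.
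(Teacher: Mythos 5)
Your central equivalence $U(\pazocal{N})=\infty\Leftrightarrow C_0^{\mathrm{NS,a}}(\pazocal{N})>0$ is argued correctly and by a genuinely different route than the paper's: you extract both directions from the meta-converse of Proposition~\ref{thm:rev_meta} (finiteness of $U(\pazocal{N}^{\otimes n})$ bounding $-\log\epsilon^{\mathrm{NS,a}}$ in one direction, and the explicit feasible point $O=P^{\otimes n}$ with the maximally mixed input in the other), whereas the paper never constructs a zero-error code at all --- it matches the kernel condition of Lemma~\ref{lem:boundedness_regularised_umlaut}(c) against the algebraic characterisation of $C_0^{\mathrm{NS}}(\pazocal{N})>0$ in \cite[Theorem~25]{duan_2016} and then uses Theorem~\ref{thm:exact_tho} to kill $C_0^{\mathrm{NS,a}}$ when $U^\infty<\infty$. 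Your spectral computation $\max_{\sigma}\Tr[P^{\otimes n}(\rho\otimes\sigma)]=(\|\Tr_{A'}P\|_\infty/d_{A'})^n$ and the dictionary between $\min_\sigma D_H^{1/M}=\infty$ and that quantity being at most $1/M$ both check out.

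The genuine gap is the node $C_0^{\mathrm{NS}}(\pazocal{N})>0$. You claim that $C_0^{\mathrm{NS,a}}(\pazocal{N})>0\Rightarrow C_0^{\mathrm{NS}}(\pazocal{N})>0$ ``follows from the SDP characterisations in \cite{duan_2016,duan_2017}'', but this is precisely the implication that is \emph{not} in the prior literature: \cite[Theorem~8]{duan_2017} only shows that activation does not increase the zero-error NS capacity \emph{when} $C_0^{\mathrm{NS}}(\pazocal{N})>0$, and the paper explicitly presents the extension to the case $C_0^{\mathrm{NS}}(\pazocal{N})=0$ as a consequence of this very corollary. Your construction in the second step produces an \emph{activated} zero-error code (the meta-converse equality is for $\epsilon^{\mathrm{NS,a}}$), so it yields $C_0^{\mathrm{NS,a}}>0$ but not, without further argument, $C_0^{\mathrm{NS}}>0$; hence the implication $U(\pazocal{N})=\infty\Rightarrow C_0^{\mathrm{NS}}(\pazocal{N})>0$ is left unproved and your chain of equivalences does not close. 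To repair this you must either do what the paper does --- observe that your full-rank condition on $\Tr_{A'}Q_{A'B}$ is exactly one of the equivalent conditions of \cite[Theorem~25]{duan_2016} for $C_0^{\mathrm{NS}}(\pazocal{N})>0$ --- or show directly that the activated code can be de-activated. A secondary, more minor point: concluding $C_0^{\mathrm{NS,a}}\geq(\log 2)/n$ from $\epsilon^{\mathrm{NS,a}}(2,\pazocal{N}^{\otimes n})=0$ implicitly assumes that the infimum over the activator size $M'$ in the definition of $\epsilon^{\mathrm{NS,a}}$ is attained; this is true but deserves a word, since otherwise a vanishing infimum need not correspond to an exact zero-error protocol.
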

\begin{proof}
The first equivalence follows directly from Lemma~\ref{lem:boundedness_regularised_umlaut}; the second follows by comparing the condition of Lemma~\ref{lem:boundedness_regularised_umlaut}(c) with~\cite[Theorem~25(i,iv)]{duan_2016}. This then implies that whenever $C^{\rm NS}_0(\pazocal{N}) = 0$, then $U^\infty(\NN) < \infty$, which by Theorem~\ref{thm:exact_tho} means that there cannot exist a zero-error activated non-signalling protocol: the best achievable error exponent, even for arbitrarily small rates, satisfies $\errnsa{\pazocal{N}} < \infty$.
Hence $C^{\rm NS,a}_0(\pazocal{N}) = 0$. As $C^{\rm NS,a}_0(\pazocal{N}) \geq C^{\rm NS}_0(\pazocal{N})$ in general, the result follows.
\end{proof}

This connection also complements~\cite[Theorem~8]{duan_2017}, where it was found that activation cannot increase the zero-error non-signalling--assisted capacity for channels such that $C^{\rm NS}_0(\pazocal{N}) > 0$ --- we now see that this is true also when the capacity is zero, implying that $C^{\rm NS,a}_0(\pazocal{N}) = C^{\rm NS}_0(\pazocal{N})$ for all channels $\pazocal{N}$.


\subsection{A family of lower bounds for the case of classical-quantum channels}

For classical channels, the following quantities attain an operational meaning in list decoding~\cite{Filippo25}. Here, we explore them for classical-quantum channels.

\begin{Def}\label{def:ell} Let $k\geq 1$ and let $q\in \mathbb{R}^k$ be any vector such that $\sum_{i=1}^kq_i=1$. Let $\pazocal{N}_{X\to B}$ be a classical-quantum channel and let $P_X\in\mathcal{P}(\mathcal{X})$. Then we define the $(k,q)$-\textit{lower umlaut information} of $\pazocal{N}$ with input distribution $P_X$ as
\bb\label{eq:def_ell}
    \ell_{k,q}(\pazocal{N},P_X)\coloneqq \sum_{x_1,\dots,x_k\in\mathcal{X}}P_{X}(x_1)\cdots P_{X}(x_k)\left(-\log \Tr\exp\left(\sum_{i=1}^kq_i\log\rho_{x_i}^B\right)\right).
\ee
\end{Def}

\begin{prop}[(Lower bound to the umlaut information for classical-quantum channels)] \label{prop:inequality} Let $k\geq 1$ and let $q\in \mathbb{R}^k$ be any vector such that $\sum_{i=1}^kq_i=1$. Let $\pazocal{N}_{X\to B}$ be a classical-quantum channel and let
\bb\label{eq:converse_l}
    \rho_{X'B}^{(\pazocal{N})}\coloneqq ({\rm Id}_{X'}\otimes\pazocal{N}_{X\to B})\left(\left(P_{X'}^{1/2}\otimes\id_A\right)\Phi_{X'X}\left(P_{X'}^{1/2}\otimes\id_A\right)\right).
\ee
Then, we have
    \bb\label{eq:ineq_lautum}
        U(X';B)_{\rho^{(\pazocal{N})}}\geq \ell_{k,q}(\pazocal{N},P_{X'}).
    \ee
    Moreover, let us now suppose that all the states $\{\rho_x^B\}_{x\in\mathcal{X}}$ have the same support. Calling $u_k\in\mathbb{R}^k$ the uniform vector $u_k=(1/k,\dots,1/k)$, we have
    \bb\label{eq:achievability_l}
        U(X';B)_{\rho^{(\pazocal{N})}}= \lim_{k\to\infty}\ell_{k,u_k}(\pazocal{N},P_{X'})
    \ee
and in particular
    \bb\label{eq:ineq_lautum_channel}
        U(\pazocal{N})\geq \sup_{P_{X'}}\ell_{k,q}(\pazocal{N},P_{X'})\,\quad\text{as well as}\quad
        U(\pazocal{N})= \sup_{P_{X'}}\lim_{k\to\infty}\ell_{k,u_k}(\pazocal{N},P_{X'}).
    \ee
    More precisely, the following quantitative estimate holds
    \bb
    U(X';B)_{\rho^{(\pazocal{N})}}-\ell_{k,u_k}(\pazocal{N},P_{X'})\leq 
    \begin{cases}
        \frac{3}{k^{1/3}}\left(\frac{1}{2}U(X';B)_{\rho_{X'B}^{(\pazocal{N})}}^2\lambda_{\max}^{\rm fin}(R_{P_{X'}})\right)^{2/3} & P^\ast<P_{\min}\, ,\\[2ex]
        \frac{1}{kP_{\min}^2}U(X';B)_{\rho_{X'B}^{(\pazocal{N})}}+P_{\min}\lambda_{\max}^{\rm fin}(R_{P_{X'}}) & P^\ast\geq P_{\min}\, ,
    \end{cases}
\ee
where $\lambda_{\max}^{\rm fin}(R_{P_{X'}})$ is the largest finite eigenvalue of
    \bb
        R_{P_{X'}}\coloneqq-\sum_{x\in{\rm supp}(P_{X'})}\log\rho_x^B
    \ee 
and
\bb
      P_{\min}&\coloneqq\min\{P_X(x)\,:\,x\in{\rm supp}(P_X) \}\qquad
    P^\ast \coloneqq \frac{1}{k^{1/3}}\sqrt[3]{ \frac{2U(X';B)_{\rho_{X'B}^{(\pazocal{N})}}}{\lambda_{\max}^{\rm fin}(R_{P_X})}}.
\ee 
\end{prop}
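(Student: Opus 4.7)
The plan has three parts, corresponding to the lower bound, the asymptotic achievability, and the quantitative rate. The unifying observation is that, by the closed-form expression~\eqref{eq:formula_without_max}, we can rewrite
\begin{equation*}
U(X';B)_{\rho^{(\pazocal{N})}} \;=\; F\!\left(\sum_{x\in\mathcal{X}} P_{X'}(x)\log\rho_x^B\right) , \qquad F(H) \coloneqq -\log\Tr\exp(H),
\end{equation*}
and that the function $F$ is concave on Hermitian operators (a standard consequence of the convexity of the log-partition / log-sum-exp functional).

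For the lower bound~\eqref{eq:ineq_lautum}, I would introduce i.i.d.\ random variables $X_1,\ldots,X_k\sim P_{X'}$ and the random operator $H_k\coloneqq\sum_{i=1}^k q_i\log\rho_{X_i}^B$. Because $\sum_i q_i=1$, one has $\mathbb{E}[H_k]=\sum_x P_{X'}(x)\log\rho_x^B$, while by definition $\mathbb{E}[F(H_k)]=\ell_{k,q}(\pazocal{N},P_{X'})$. Applying Jensen's inequality to the concave function $F$ then immediately yields $U(X';B)_{\rho^{(\pazocal{N})}}=F(\mathbb{E}[H_k])\geq\mathbb{E}[F(H_k)]=\ell_{k,q}(\pazocal{N},P_{X'})$. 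Taking the supremum over $P_{X'}$ gives~\eqref{eq:ineq_lautum_channel}.

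For the asymptotic achievability~\eqref{eq:achievability_l} under the common-support assumption, I would specialise to $q=u_k$, so that $H_k=\tfrac{1}{k}\sum_i\log\rho_{X_i}^B$ is an empirical average of i.i.d.\ operators. On the common support (on which the problem effectively lives), each $\log\rho_x^B$ is a bounded Hermitian operator, so the strong law of large numbers gives $H_k\to\mu\coloneqq\sum_x P_{X'}(x)\log\rho_x^B$ almost surely; dominated convergence together with the continuity of $F$ then yields $\mathbb{E}[F(H_k)]\to F(\mu)$, matching the upper bound from Jensen.

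The main obstacle is the quantitative estimate, and the two-regime bound with $P^\ast$ versus $P_{\min}$ already prescribes its structure. The idea is to control the Jensen gap $F(\mu)-\mathbb{E}[F(H_k)]$ by a truncation argument on the empirical type of $(X_1,\ldots,X_k)$: call a symbol $x$ \emph{rare} when $P_{X'}(x)<P^\ast$ and \emph{frequent} otherwise. On the event that no rare symbol appears in the sample, the empirical counts concentrate around $k P_{X'}(x)$ at scale $\sqrt{k P_{X'}(x)}$, and a second-order operator Taylor expansion of $F$ around $\mu$ (using that the Hessian of $F$ is controlled by the variance of $H_k$, itself bounded by $\lambda_{\max}^{\mathrm{fin}}(R_{P_{X'}})/k$ scaled by $U(X';B)_{\rho^{(\pazocal{N})}}$) produces an $O(1/kP_{\min}^2)$ contribution, matching the second case of the estimate. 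On the complementary event, the probability that some rare symbol is sampled is at most $kP^\ast$, and on that event $F(\mu)-F(H_k)$ is controlled by $\lambda_{\max}^{\mathrm{fin}}(R_{P_{X'}})$ (the largest finite contribution removed by $-\log\rho_x^B$), yielding an error of order $P^\ast\,\lambda_{\max}^{\mathrm{fin}}(R_{P_{X'}})$ when the contribution from $F$ itself is of order $U(X';B)_{\rho^{(\pazocal{N})}}$. Balancing the two error terms by setting $P^\ast\sim k^{-1/3}\big(U/\lambda_{\max}^{\mathrm{fin}}\big)^{1/3}$ and combining with $U=U(X';B)_{\rho^{(\pazocal{N})}}$ gives the $k^{-1/3}$ rate of the first case. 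The delicate point — and where the operator-theoretic aspect bites — is ensuring that the bound on $F(\mu)-F(H_k)$ on the typical event uses only the directions in which $H_k-\mu$ is actually small, which should follow from the operator Jensen inequality for the concave map $F$ combined with a bound on its modulus of concavity expressed through $\lambda_{\max}^{\mathrm{fin}}(R_{P_{X'}})$ and $U(X';B)_{\rho^{(\pazocal{N})}}$ itself.
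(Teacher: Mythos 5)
Your first two parts are sound. The lower bound via Jensen's inequality for the concave functional $F(H)=-\log\Tr\exp(H)$ is essentially the paper's argument in disguise: the paper exchanges a minimum over $\sigma_B$ with the expectation over $(x_1,\dots,x_k)$ and then evaluates the inner minimum by the Gibbs variational principle, which is exactly the proof that $F$, being an infimum of affine functions of $H$, is concave. Your route is slightly more conceptual and equally valid (including for signed $q$ with $\sum_i q_i=1$, since only $\mathbb{E}[H_k]=\sum_x P_{X'}(x)\log\rho_x^B$ is used). The achievability via the strong law of large numbers plus dominated convergence is also correct under the common-support assumption (each $\log\rho_x^B$ is then bounded on the common support, so $|F(H_k)|$ is uniformly bounded); the paper uses the weak law with an explicit $\epsilon$-deviation event instead, but the conclusion is the same.

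The quantitative estimate, however, is where your sketch does not go through, and the problem is structural, not just a matter of missing details. First, $P^\ast$ is not a rarity threshold on symbol probabilities. In the actual argument one introduces a deviation parameter $\epsilon$ for the empirical frequencies $\hat N_k^x=N(x|X^k)/k$: by Chebyshev and the union bound, $\mathbb{P}\big(\exists x:|\hat N_k^x-P_{X'}(x)|>\epsilon\big)\le 1/(k\epsilon^2)$; on the complementary event the count vector dominates $P_{X'}-\epsilon\mathbf{1}$ componentwise, and since $f$ (the map from count vectors to $-\log\Tr\exp(\sum_x v(x)\log\rho_x^B)$) is componentwise monotone, Golden--Thompson yields $f(P_{X'}-\epsilon\mathbf{1})\ge U(X';B)_{\rho^{(\pazocal{N})}}-\epsilon\,\lambda_{\max}^{\rm fin}(R_{P_{X'}})$. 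This gives $U-\ell_{k,u_k}\le U/(k\epsilon^2)+\epsilon\lambda_{\max}^{\rm fin}(R_{P_{X'}})$, a genuine trade-off in $\epsilon$ whose unconstrained optimiser is precisely $P^\ast$; the two cases of the statement correspond to whether $P^\ast<P_{\min}$ (so the optimum is attainable, giving the $k^{-1/3}$ rate) or $P^\ast\ge P_{\min}$ (so $\epsilon$ must be capped near $P_{\min}$, producing the non-vanishing term $P_{\min}\lambda_{\max}^{\rm fin}(R_{P_{X'}})$).

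Your decomposition cannot reproduce this. Your two error contributions are a "typical-event" Taylor remainder of order $1/(kP_{\min}^2)$, which does not depend on $P^\ast$ at all, and a "rare-symbol" failure probability of order $kP^\ast$ times a worst-case loss; the latter is monotone increasing in $P^\ast$, so there is nothing to balance and the optimisation that yields the $k^{-1/3}$ exponent cannot be carried out. Moreover, the non-vanishing term $P_{\min}\lambda_{\max}^{\rm fin}(R_{P_{X'}})$ in the second regime cannot arise from a second-order remainder that decays like $1/k$. Finally, the uniform Hessian control of $F$ along the segment between $H_k$ and $\mu$ that your Taylor expansion requires --- the step you yourself flag as delicate --- is exactly the operator-theoretic difficulty that the monotonicity-plus-Golden--Thompson argument is designed to avoid, and it is not supplied. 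As written, the quantitative part of the proposition is not proved.
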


\begin{proof} This proof generalises the strategy used in the classical case~\cite[Propositions 31, 32 and 34]{Filippo25}. Let us compute
    \bb   U(X';B)_{\rho^{(\pazocal{N})}}&=\min_{\sigma_B}D(p_{X'}\otimes\sigma_B\|\rho_{X'B}^{(\pazocal{N})})\\
        &\eqt{(i)}\min_{\sigma_B}\sum_{x\in\mathcal{X}}P_{X'}(x)D\big(\sigma_B\|\rho^B_x\big)\\
        &\eqt{(ii)}\min_{\sigma_B}\sum_{x\in\mathcal{X}}P_{X'}(x)\sum_{i=1}^kq_iD\big(\sigma_B\|\rho^B_x\big)\\
        &\eqt{(iii)}\min_{\sigma_B}\sum_{x_1,\dots,x_k\in\mathcal{X}}P_{X'}(x_1)\cdots P_{X'}(x_k)\sum_{i=1}^kq_iD\big(\sigma_B\|\rho^B_{x_i}\big)\\
        &\geq \sum_{x_1,\dots,x_k\in\mathcal{X}}P_{X'}(x_1)\cdots P_{X'}(x_k)\min_{\sigma_B}\sum_{i=1}^kq_iD\big(\sigma_B\|\rho^B_{x_i}\big)\\
        &=\sum_{x_1,\dots,x_k\in\mathcal{X}}P_{X'}(x_1)\cdots P_{X'}(x_k)\min_{\sigma_B}\left(-S(\sigma_B)-\Tr\left[\sigma_B\sum_{i=1}^kq_i\log \rho_{x_i}^B\right]\right)\\
        &\eqt{(vi)}\sum_{x_1,\dots,x_k\in\mathcal{X}}P_{X'}(x_1)\cdots P_{X'}(x_k)\left(-\log \Tr\exp\left(\sum_{i=1}^kq_i\log\rho_{x_i}^B\right)\right),
    \ee
    where in (i) we have recalled the definition $\rho_{A'B}^{(\pazocal{N})}$, in (ii) we have used that $\sum_{i=1}^kq_i=1$, similarly in (iii) we have noticed that $\sum_{x\in\mathcal{X}}P_{X'}(x)=1$, and in (vi) we have used the Gibbs variational principle (Lemma~\ref{lemma:Gibbs}). This proves~\eqref{eq:ineq_lautum}. To get~\eqref{eq:ineq_lautum_channel}, we just take the maximum over $P_{X'}\in\mathcal{P}(\mathcal{X})$, according to Definition~\ref{def:channel_lautum}.\\
    Let us now assume that all $\{\rho_x^B\}_{x\in\mathcal{X}}$ have the same support. Since $\exp(-\infty)=0$, without loss of generality we can assume that the states $\{\rho_x^B\}_{x\in\mathcal{X}}$ have full support. Let us call now consider the uniform vector $u_k=(1/k,\dots,1/k)$ in $\mathbb{R}^k$.
     We can represent the lower umlaut information $\ell_{k,q}$ as the expectation value
     \bb
        \ell_{k,u_k}(\pazocal{N},P_X)= \mathbb{E}_{X^k\sim P_X^k}\left[-\log \Tr\exp\left(\sum_{i=1}^k\frac{1}{k}\log\rho_{X_i}^B\right)\right]
    \ee
   where $P_X^k(x_1,\dots,x_k)=P_X(x_1)\cdots P_X(x_1)$. It is easy to see that the result of the sum $\sum_{i=1}^k\frac{1}{k}\log\rho_{X_i}^B$ depends only on the number of occurrences of each symbol $x\in\XX$ in the string $X^n$, namely
    \bb
        \sum_{i=1}^k\frac{1}{k}\log\rho_{X_i}^B=\sum_{x\in\XX}\frac{N(x|X^k)}{k}\log\rho_{x}^B
    \ee
    where $N(x|X^k)\coloneqq\sum_{i=1}^k\id_{X_i=x}$ is the number of occurrences of $x$ in $X^k$. Let us introduce
    \bb
        \hat N_k^x\coloneqq \frac{N(x|X^k)}{k}=\frac{1}{k}\sum_{i=1}^k\id_{X_i=x}
    \ee
    which, by the weak law of the large numbers, converges in probability to its expectation value $\mathbb{E}\left[\hat N_k^x\right]=P_X(x)$. More precisely, let us fix $0<\epsilon<\min\{P_X(x)\,:\,x\in{\rm supp}(P_X) \}$, and let us call $\mathcal{E}_k^{(\epsilon)}$ the event $\left\{\exists\, x\in{\rm supp}(P_X):\big|\hat{N}_k^x-P_X(x)\big|>\epsilon \right\}$ and $\big(\mathcal{E}_k^{(\epsilon)}\big)^c$ its complement. Then
    \bb\label{eq:law_large_numbers}
        \lim_{k\to \infty}\mathbb{P}\left(\left(\mathcal{E}_k^{(\epsilon)}\right)^c\right)=1.
    \ee If $x\notin {\rm supp} (P_X)$, then $\mathbb{P}(\hat N^x_k\neq 0 )=0$. We can therefore rewrite
    \bb\label{eq:to_be_lower_bounded}
        \ell_{k,u_k}(\pazocal{N},P_X)&= \mathbb{E}_{X^k\sim P_X^k}\left[-\log \Tr\exp\left(\sum_{x\in{\rm supp}(P_X)}\hat N_k^x\log\rho_{x}^B\right)\right]
    \ee
    Lat us define, for $v\in [0,1]^{|\XX|}\setminus\{0\}$, where $0$ is the null vector of $\mathbb{R}^{|\XX|}$, the map
    \bb
        f(v)\coloneq -\log \Tr\exp\left(\sum_{x\in{\rm supp}(v)}v(x)\log\rho_x^B\right).
    \ee
    Let us call $u(x)=v(x)/\|v\|_1\in\mathcal{P}(\XX)$. Then
    \bb\label{eq:f_rel_ent}
        f(v)&= -\log \Tr\exp\left(\sum_{x\in{\rm supp}(v)}\frac{v(x)}{\|v\|_1}\log\rho_x^B\right)-\log \|v\|_1\\
        &\eqt{(a)}\min_{\sigma_B}D\left(u_{X'}\otimes\sigma_B\middle\|u_{X'}^{1/2}J_{X'B}^{(\pazocal{N})}u_{X'}^{1/2}\right)-\log \|v\|_1,
    \ee
    where in (a) we have used~\eqref{eq:formula_without_max} and introduced the Choi--Jamio\l kowski matrix $J_{X'B}^{(\pazocal{N})}$ of $\mathcal{N}$.
    The map $f$ has the following properties.
    \begin{enumerate}
        \item \textbf{Componentwise increasing monotonicity}. This is an immediate consequence of the combination of these observations: $\log \rho_x^B\leq 0$, so the linear combination $\sum_{x\in{\rm supp}(v)}v(x)\log\rho_x^B$ is a matrix which is antimonotone in the components of $v$; $\Tr\exp(\,\cdot\,)$ is a matrix monotone function and $-\log(\,\cdot\,)$ is an antimonotone function.
        \item \textbf{Continuity.} Since all the states $\{\rho_x^B\}_{x\in\mathcal{X}}$ have full support, $v\mapsto f(v)$ is manifestly a continuous function, being the composition of continuous functions.
        \item \textbf{Lower bound.} By~\eqref{eq:f_rel_ent}, we also notice that $f(v)\geq -\log\|v\|_1$ due to the positivity of the quantum relative entropy.
    \end{enumerate} 
    Now, considering $N_k=(N_k^x)_{x\in\XX}$ as an element of $[0,1]^{|\XX|}\setminus\{0\}$, we lower bound~\eqref{eq:to_be_lower_bounded}:
    \bb\label{eq:quant_est}
        \ell_{k,u_k}(\pazocal{N},P_X)&= \mathbb{E}_{X^k\sim P_X^k}\left[f\left(\hat N_k\right)\right]\\
        &\eqt{(v)}\mathbb{E}_{X^k\sim P_X^k}\left[f\left(\hat N_k\right)\,\middle|\,\mathcal{E}_k^{(\epsilon)}\,\right]\mathbb{P}\left(\mathcal{E}_k^{(\epsilon)}\right)
        +\mathbb{E}_{X^k\sim P_X^k}\left[f\left(\hat N_k\right)\,\middle|\,(\mathcal{E}_k^{(\epsilon)})^c\,\right]\mathbb{P}\left((\mathcal{E}_k^{(\epsilon)})^c\right)\\
        &\geqt{(vi)}f\left( P_X^{(\epsilon)}\right)\mathbb{P}\left((\mathcal{E}_k^{(\epsilon)})^c\right)
    \ee
    where 
    \begin{itemize}
        \item in (v) we condition on $\mathcal{E}_k^{(\epsilon)}$ and on its complement;
        \item in (vi) we lower bound the first term with $-\log\|\hat N_k\|_1=0$ and we introduce, for the second term,
        \bb
            P_X^{(\epsilon)}\coloneqq \begin{cases}
                P_X(x)-\epsilon & x\in{\rm supp}(P_X)\\
                0 & {\rm otherwise}
            \end{cases},
        \ee
        which is componentwise  smaller than $P_X$, so we can leverage the monotoniciy of $f$.
    \end{itemize}
    Taking the liminf, we get
    \bb
        \liminf_{k\to\infty}\ell_{k,u_k}(\pazocal{N},P_X)\geq f\left( P_X^{(\epsilon)}\right)\liminf_{k\to\infty}\mathbb{P}\left((\mathcal{E}_k^{(\epsilon)})^c\right)=f\left( P_X^{(\epsilon)}\right);
    \ee
    where the last identity holds due to~\eqref{eq:law_large_numbers}. By arbitrariness of $0<\epsilon<\min\{P_X(x)\,:\,x\in{\rm supp}(P_X) \}$, we let $\epsilon\to 0^+$:
    \bb
        \liminf_{k\to\infty}\ell_{k,u_k}(\pazocal{N},P_X)&\geq \liminf_{\epsilon\to 0^+}f\left( P_X^{(\epsilon)}\right)\\
        &\eqt{(vii)} f\left( P_X\right)= -\log \exp\left(\sum_{x\in\XX}P_X(x)\log\rho_x^B\right)=U(X';B)_{\rho^{(\pazocal{N})}},
    \ee
    where (vii) holds because of the continuity of $f$.
    Combining this result with the converse bound~\eqref{eq:converse_l}, we conclude that
    \bb
        U(X';B)_{\rho^{(\pazocal{N})}}\geq \liminf_{k\to\infty}\ell_{k,u_k}(\pazocal{N},P_X) \geq U(X';B)_{\rho^{(\pazocal{N})}}.
    \ee
    Now we could get a quantitative estimate by refining the previous argument. By Chebyshev's inequality
\bb
    \mathbb{P}\left(\big|N_k^x-P_X(x)\big|>\epsilon\right)\leq \frac{{\rm Var}[N_k^x]}{\epsilon^2}=\frac{{\rm Var}_{X\sim P_X}[\id_{X=x}]}{k\epsilon^2}=\frac{P_X(x)(1-P_X(x))}{k\epsilon^2}
\ee
and of the union bound
\bb
    \mathbb{P}\left(\exists\,x\in\XX : \big|N_k^x-P_X(x)\big|>\epsilon\right)\leq \sum_{x\in\XX}\frac{P_X(x)(1-P_X(x))}{k\epsilon^2}\leq \frac{1}{k\epsilon^2}.
\ee
Then we can further lower bound~\eqref{eq:quant_est}
    \bb
        \ell_{k,u_k}(\pazocal{N},P_X)&\geq f\left( P_X^{(\epsilon)}\right)\mathbb{P}\left((\mathcal{E}_k^{(\epsilon)})^c\right)\\
        &\geq f\left( P_X^{(\epsilon)}\right)\left(1-\frac{1}{k\epsilon^2}\right)\\
        &= \left(1-\frac{1}{k\epsilon^2}\right)\left(-\log\Tr\exp\left\{\sum_{x\in{\rm supp}(P_X)}P_X(x)\log\rho_x^B-\epsilon\sum_{x\in{\rm supp}(P_X)}\log\rho_x^B\right\}\right)\\
        &\geqt{(viii)} \left(1-\frac{1}{k\epsilon^2}\right)\left(-\log \Tr \exp\left(\sum_{x\in{\rm supp}(P_X)}P_X(x)\log\rho_x^B\right)-\epsilon\lambda_{\max}^{\rm fin}(R_{P_X})\right)\\
        &\geqt{(ix)} \left(1-\frac{1}{k\epsilon^2}\right)U(X';B)_{\rho^{(\pazocal{N})}}-\epsilon\lambda_{\max}^{\rm fin}(R_{P_X}).
    \ee
    In (viii) we have bounded
    \bb\label{eq:golden}
        \Tr[\exp[A+B]]&\leqt{(b)} \Tr[\exp[A]\exp[B]] \\
        &= \Tr[\exp[A/2]\exp[B]\exp[A/2]] \\
        &\leqt{(c)} \Tr[\exp[A/2](\lambda_{\max}(\exp[B])\id)\exp[A/2]]\\
        &= \exp(\lambda_{\max}(B))\Tr[\exp[A]],
    \ee
    with 
    \bb
        A=\sum_{x\in{\rm supp}(P_X)}P_X(x)\log\rho_x^B\qquad B=-\epsilon\sum_{x\in{\rm supp}(P_X)}\log\rho_x^B;
    \ee
    in (b) we have used the Golden--Thompson inequality~\cite{Golden1965, Thompson1965}, namely 
    \bb\label{eq:golden_t}
        \Tr[\exp[A+B]]\leq \Tr[\exp[A]\exp[B]]
    \ee for any pair of Hermitian matrices $A$ and $B$, and in (c) we have noticed that $\exp[B]\leq \lambda_{\max}(\exp[B])\id$. In (ix) we have introduced the largest finite eigenvalue $\lambda_{\max}^{\rm fin}$ of
    \bb
        R_{P_X}\coloneqq-\sum_{x\in{\rm supp}(P_X)}\log\rho_x^B.
    \ee
    The reason why we need to consider only finite eigenvalues is the following: it is clearly the case if the states $\{\rho_x^B\}_{x\in\mathcal{X}}$ have full support. If they have the same support, restricting to their support means that we need to consider only the finite eigenvalues of $R_{P_X}$.
    Therefore, we get
    \bb\label{eq:bound_U_l2}
        U(X';B)_{\rho^{(\pazocal{N})}}-\ell_{k,u_k}(\pazocal{N},P_X)\leq \frac{1}{k\epsilon^2}U(X';B)_{\rho^{(\pazocal{N})}}+\epsilon\lambda_{\max}^{\rm fin}(R_{P_X})
    \ee
    
    Now, for any $a,b>0$, it easy to prove that the function $g(\epsilon)\coloneqq a/\epsilon^2 +b\epsilon$, defined for $\epsilon>0$, has a unique minimum for $\epsilon=\epsilon^\ast$, where
\bb
    \epsilon^\ast = \sqrt[3]{\frac{2a}{b}}>0
\ee
Therefore, by a simple argument about the monotonicity of $g$, we have
\bb
    \inf\{g(\epsilon): 0<\epsilon<\epsilon_0\}=
    \begin{cases}
        g(\epsilon^\ast)=3a^{1/3}\left(\frac{b}{2}\right)^{2/3} & \epsilon^\ast<\epsilon_0\\
        g(\epsilon_0) & \epsilon^\ast\geq \epsilon_0
    \end{cases},
\ee
whence, minimising the RHS of~\eqref{eq:bound_U_l2} and calling 
\bb
    P_{\min}&\coloneqq\min\{P_X(x)\,:\,x\in{\rm supp}(P_X) \}\qquad
    P^\ast \coloneqq \sqrt[3]{ \frac{2U(X';B)_{\rho_{X'B}^{(\pazocal{N})}}}{k\lambda_{\max}^{\rm fin}(R_{P_X})}}
\ee 
we get
\bb
    U(X';B)_{\rho^{(\pazocal{N})}}-\ell_{k,u_k}(\pazocal{N},P_X)\leq 
    \begin{cases}
        \frac{3}{k^{1/3}}\left(\frac{1}{2}U(X';B)_{\rho^{(\pazocal{N})}}^2\lambda_{\max}^{\rm fin}(R_{P_X})\right)^{2/3} & P^\ast<P_{\min}\, ,\\[2ex]
        \frac{1}{kP_{\min}^2}U(X';B)_{\rho^{(\pazocal{N})}}+P_{\min}\lambda_{\max}^{\rm fin}(R_{P_X}) & P^\ast\geq P_{\min}\, ,
    \end{cases}
\ee
and this completes the proof.
\end{proof}

The functions $\ell_{k, q}$ are closely related to the zero-rate unassisted error exponent. Indeed, for a classical channel $\pazocal{W}$ and uniform $q=u_{L+1}=(\frac{1}{L+1}, \dots, \frac{1}{L+1})$, the quantity $\sup_{P_X} \ell_{L+1, u_{L+1}}(\pazocal{W}, P_X)$ precisely corresponds to the zero-rate unassisted error exponent of the channel $\pazocal{W}$ when the decoder is allowed to return a list of size $L$ and the error occurs only if the original message is not included in that list~\cite{Blinovsky2001Oct}. Moreover, the case  $L=1$ corresponds to the standard channel coding in which case $\sup_{P_X} \ell_{2, u_2}(\pazocal{N}, P_X)$ can be lower bounded with the zero-rate unassisted error exponent of a classical-quantum channel $\pazocal{N}$~\cite{Holevo2000,Dalai_2013} by means of the Golden--Thompson inequality~\cite{Golden1965, Thompson1965}. 

\begin{prop}[\cite{Holevo2000,Dalai_2013}]
\label{thm:cq-0-rate}
    Let $\pazocal{N}_{X\to B}(\cdot) = \sum_{x\in \mathcal{X}} \bra{x}\cdot \ket{x} \rho_x^B$ be a classical-quantum channel. Then, we have
\bb
\errpl{\pazocal{N}} = \sup_{P_X\in \mathcal{P}(\mathcal{X})} \sum_{x_1,x_2\in\mathcal{X}}P_{X}(x_1)P_{X}(x_2)\left(-\log \Tr\left[\sqrt{\rho_{x_1}^B}\sqrt{\rho_{x_2}^B}\right]\right).
\ee 
\end{prop}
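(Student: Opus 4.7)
The plan is to prove matching achievability and converse bounds on $\errpl{\pazocal{N}}$, both analysed in the zero-rate regime where the number of messages $M$ is held fixed and sent to infinity only after $n\to\infty$.

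For achievability, I would fix $P_X \in \mathcal{P}(\mathcal{X})$, draw $M$ codewords i.i.d.~from $P_X^{\otimes n}$, giving output states $\sigma_m^{(n)} = \rho_{x_{m,1}}^B \otimes\cdots\otimes \rho_{x_{m,n}}^B$, and decode with the pretty-good (square-root) measurement. Holevo's two-term bound on the PGM error controls it by the pairwise overlaps $\Tr\!\left[\sqrt{\sigma_m^{(n)}}\sqrt{\sigma_{m'}^{(n)}}\right]$, which factorise across tensor powers as
\[
\Tr\!\left[\sqrt{\sigma_m^{(n)}}\sqrt{\sigma_{m'}^{(n)}}\right] = \prod_{i=1}^n \Tr\!\left[\sqrt{\rho_{x_{m,i}}^B}\sqrt{\rho_{x_{m',i}}^B}\right].
\]
Averaging over the random code yields the arithmetic-mean overlap $\big(\sum_{x_1,x_2} P_X(x_1) P_X(x_2) \Tr[\sqrt{\rho_{x_1}^B}\sqrt{\rho_{x_2}^B}]\big)^{n}$, which by Jensen's inequality is strictly weaker than the claimed geometric-mean exponent. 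To upgrade it, I would apply Gallager-style expurgation: discarding the $M/2$ codewords with the largest total pairwise overlaps leaves a code whose remaining pairwise overlaps decay as $\exp\!\big(-n\,\mathbb{E}_{P_X\otimes P_X}[-\log \Tr[\sqrt{\rho_{x_1}^B}\sqrt{\rho_{x_2}^B}]] + o(n)\big)$. Since $M$ can be taken arbitrarily large but fixed, sending $n\to\infty$ and optimising over $P_X$ gives the $\geq$ direction.

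For the converse, I would adapt the Shannon--Gallager--Berlekamp strategy to the quantum setting. For any code and any POVM decoder, averaging the error over all $\binom{M}{2}$ message pairs gives $P_e \geq \tfrac{c}{M^2}\sum_{m<m'} P_e^{\mathrm{Helstrom}}(\sigma_m^{(n)},\sigma_{m'}^{(n)})$, and the Nussbaum--Szkola (Chernoff-type) lower bound provides $P_e^{\mathrm{Helstrom}}(\rho,\sigma) \geq \tfrac{1}{2}\Tr[\sqrt{\rho}\sqrt{\sigma}]$. Together with the multiplicativity above, this reduces the problem to the empirical joint type $\widehat{P}_{X_1 X_2}$ of pairs of codewords. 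A Plotkin-style type-counting argument then extracts from any sufficiently large codebook a sub-collection of pairs whose joint empirical types concentrate near some product distribution $P_X \otimes P_X$, contributing an error of at least $\exp\!\big(-n\,\mathbb{E}_{P_X \otimes P_X}[-\log \Tr[\sqrt{\rho_{x_1}^B}\sqrt{\rho_{x_2}^B}]] + o(n)\big)$. Optimising over $P_X$ yields the matching upper bound.

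The main obstacle is the converse. In the classical SGB argument the Bhattacharyya distance between product output distributions decomposes transparently, whereas here one must show that the non-commutativity of $\sqrt{\rho_{x_{m,i}}^B}$ across different channel positions does not spoil the single-letter structure. The fortunate fact that makes the argument go through is the multiplicativity $\Tr[\sqrt{A_1 \otimes A_2}\sqrt{B_1 \otimes B_2}] = \Tr[\sqrt{A_1}\sqrt{B_1}]\,\Tr[\sqrt{A_2}\sqrt{B_2}]$, which identifies $-\log \Tr[\sqrt{\rho_{x_1}^B}\sqrt{\rho_{x_2}^B}]$ as the exponent-governing single-letter quantity. The remaining delicate point is the combinatorial type-argument ensuring that the worst pair of codewords, with joint type approaching a product distribution $P_X \otimes P_X$, drives the asymptotic exponent and matches the achievable bound.
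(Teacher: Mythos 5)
Your achievability half is essentially the paper's argument: a square-root--type decoder whose error is controlled by the pairwise overlaps $\Tr[\sqrt{\rho^B_{x(m)}}\sqrt{\rho^B_{x(m')}}]$ (the paper uses the Beigi--Tomamichel operator quotient and the inequality $\frac{A}{A+B}\le\frac{A}{B}$), followed by random coding and Gallager expurgation; the geometric-mean exponent emerges from the fractional-moment bound $\epsilon_m\le 2^s\,\mathbb{E}^s[\epsilon_m^{1/s}]$ in the limit $s\to\infty$. That part is sound.

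The converse contains a genuine gap: the inequality you invoke, $P_e^{\mathrm{Helstrom}}(\rho,\sigma)\ge\tfrac12\Tr[\sqrt{\rho}\sqrt{\sigma}]$, is false. Already for commuting states with spectra $p=(1-\epsilon,\epsilon)$ and $q=(\epsilon,1-\epsilon)$ one has $\Tr[\rho\wedge\sigma]=2\epsilon$ while $\Tr[\sqrt{\rho}\sqrt{\sigma}]\approx 2\sqrt{\epsilon}$. The correct one-shot bounds (Fuchs--van de Graaf, or the classical inequality $\sum_i\min(p_i,q_i)\ge\tfrac12\bigl(\sum_i\sqrt{p_iq_i}\bigr)^2$ applied to the Nussbaum--Szko\l a distributions) carry the \emph{square} of the overlap, which doubles the exponent and leaves a factor-of-two mismatch with achievability. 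Nor does the bound hold asymptotically: the pairwise error decays with the Chernoff exponent $\max_{0\le s\le1}\bigl(-\log\Tr[\rho^s\sigma^{1-s}]\bigr)$, which in general strictly exceeds the Bhattacharyya exponent $-\log\Tr[\sqrt{\rho}\sqrt{\sigma}]$, so no purely pairwise bound can single out $s=1/2$. This is exactly where the difficulty of the converse lies. The paper's route is to derive, via Nussbaum--Szko\l a plus Chebyshev, the variational formula for $D_\alpha$, and Sion's minimax, a pairwise lower bound of the form
\bb
\epsilon_m+\epsilon_{m'}\;\gtrsim\;\exp\Bigl(-\inf_{0<\alpha<1}(1-\alpha)\sum_{x,y}n^{m,m'}_{x,y}\,D_\alpha(\rho_x\|\rho_y)\Bigr)e^{-o(n)},
\ee
and only then to invoke the Shannon--Gallager--Berlekamp/Blinovsky combinatorial step, which uses the symmetry $n^{m,m'}_{x,y}=n^{m',m}_{y,x}$ together with $(1-\alpha)D_\alpha(\rho\|\sigma)=\alpha D_{1-\alpha}(\sigma\|\rho)$ across many codeword pairs to force the optimising $\alpha$ to $1/2$ \emph{and} the joint type to a product distribution simultaneously. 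Your Plotkin-style type-counting captures only the second half of that argument; without the mechanism pinning $\alpha=1/2$, your converse is off by a factor of two.
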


We include a slightly simplified, self-contained proof in terms of modern quantum information theory tools in Appendix~\ref{app:cq-0-rate}.

Now, since non-signalling--assisted coding strategies are at least as powerful as their unassisted counterparts, the corresponding zero-rate exponents inherit this ordering. However, proving such an inequality directly turns out to be less straightforward. 

\begin{cor}\label{cor:golden} Let $\pazocal{N}_{X\to B}$ be a classical-quantum channel. Then, we have
\bb
    U(\pazocal{N}) = \errnsa{\pazocal{N}}\geq \errpl{\pazocal{N}}.
\ee 
\end{cor}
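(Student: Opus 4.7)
The equality $U(\pazocal{N}) = \errnsa{\pazocal{N}}$ is immediate: it is exactly the classical-quantum case of Theorem~\ref{thm:exact_tho}, equation~\eqref{eq:interpr_cq}. So the entire content of the corollary is the inequality $\errnsa{\pazocal{N}} \geq \errpl{\pazocal{N}}$, and the preceding paragraph signals that the intended proof is the mathematical/single-letter one rather than the trivial operational observation that unassisted codes form a subclass of activated non-signalling--assisted codes.

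My plan is to produce the inequality by chaining single-letter bounds. First, invoke Proposition~\ref{prop:inequality} with $k=2$ and the uniform weight vector $q = u_2 = (\tfrac12, \tfrac12)$ to get
\bb
U(\pazocal{N}) \,\geq\, \sup_{P_X} \ell_{2,u_2}(\pazocal{N}, P_X) \,=\, \sup_{P_X}\sum_{x_1,x_2} P_X(x_1)P_X(x_2)\left(-\log \Tr\exp\!\left(\tfrac12\log\rho_{x_1}^B + \tfrac12\log\rho_{x_2}^B\right)\right).
\ee
Next, convert the exponential of a sum into a product by the Golden--Thompson inequality~\cite{Golden1965,Thompson1965}: $\Tr\exp(A+B) \leq \Tr[\exp(A)\exp(B)]$, applied to $A=\tfrac12\log\rho_{x_1}^B$ and $B=\tfrac12\log\rho_{x_2}^B$, yields
\bb
\Tr\exp\!\left(\tfrac12\log\rho_{x_1}^B + \tfrac12\log\rho_{x_2}^B\right) \,\leq\, \Tr\!\left[\sqrt{\rho_{x_1}^B}\sqrt{\rho_{x_2}^B}\right].
\ee
Taking $-\log$ reverses the inequality, so averaging against $P_X(x_1)P_X(x_2)$ gives
\bb
\ell_{2,u_2}(\pazocal{N}, P_X) \,\geq\, \sum_{x_1,x_2}P_X(x_1)P_X(x_2)\left(-\log\Tr\!\left[\sqrt{\rho_{x_1}^B}\sqrt{\rho_{x_2}^B}\right]\right).
\ee

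Finally, take the supremum over $P_X \in \mathcal{P}(\mathcal{X})$ on both sides and identify the right-hand side with $\errpl{\pazocal{N}}$ via Proposition~\ref{thm:cq-0-rate}. Combining the three displayed inequalities with $U(\pazocal{N}) = \errnsa{\pazocal{N}}$ yields $\errnsa{\pazocal{N}} \geq \errpl{\pazocal{N}}$, as desired. There is no real obstacle to overcome here; the only non-trivial ingredient is the Golden--Thompson step, and its role is precisely to bridge the closed-form expression for $U(\pazocal{N})$ (which involves the exponential of a sum of logarithms) with Dalai's formula for the unassisted zero-rate exponent (which involves fidelities $\Tr\sqrt{\rho_{x_1}}\sqrt{\rho_{x_2}}$). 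The fact that Golden--Thompson is sharp exactly when the operators commute also explains why the two exponents coincide in the fully classical case and why the inequality is generally strict in the quantum case.
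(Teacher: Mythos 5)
Your proposal is correct and follows essentially the same route as the paper: the equality from Theorem~\ref{thm:exact_tho}, Proposition~\ref{prop:inequality} with $k=2$ and $q=u_2$, the Golden--Thompson inequality to pass from $\Tr\exp\bigl(\tfrac12\log\rho_{x_1}^B+\tfrac12\log\rho_{x_2}^B\bigr)$ to $\Tr\bigl[\sqrt{\rho_{x_1}^B}\sqrt{\rho_{x_2}^B}\bigr]$, and finally the supremum over $P_X$ together with Proposition~\ref{thm:cq-0-rate}. No gaps.
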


\begin{proof}
    Let us set  $k=2$ and $q=(1/2,1/2)$ in Proposition~\ref{prop:inequality}. Then 
    \bb
        U(X;B)_{\rho^{(\pazocal{N})}}&\geq \sum_{x_1,x_2\in\mathcal{X}}P_{X}(x_1)P_{X}(x_2)\left(-\log \Tr\exp\left(\frac{1}{2}\log\rho_{x_1}^B+\frac{1}{2}\log\rho_{x_2}^B\right)\right)\\
    &\geqt{(i)}\sum_{x_1,x_2\in\mathcal{X}}P_{X}(x_1)P_{X}(x_2)\left(-\log \Tr\left[\exp\left(\frac{1}{2}\log\rho_{x_1}^B\right)\exp\left(\frac{1}{2}\log\rho_{x_2}^B\right)\right)\right]\\
    &=\sum_{x_1,x_2\in\mathcal{X}}P_{X}(x_1)P_{X}(x_2)\left(-\log \Tr\left[\sqrt{\rho_{x_1}^B}\sqrt{\rho_{x_2}^B}\right]\right),
    \ee
    where in (i) we have used the Golden--Thompson inequality~\cite{Golden1965, Thompson1965}, see \eqref{eq:golden_t}.
    By taking the maximum over $P_{X}\in\mathcal{P}(\mathcal{X})$, we conclude the proof.
\end{proof}

\subsection{Umlaut information and list decoding}

We discuss the extension of channel coding to list decoding in order to derive a further result regarding the channel umlaut information. Given a set of messages $\mathcal{M}=\{1,\dots, M\}$ and a quantum channel $\pazocal{N}_{A\to B}$, an $L$-list unassisted code is given by an encoder $\pazocal{E}:\mathcal{M}\to \mathcal{D}(\mathcal{H}_A)$ and a decoder $\pazocal{D}: \mathcal{D}(\mathcal{H}_B)\to \pazocal{P}_L(\mathcal{M})$, where $\pazocal{P}_L(\mathcal{M})$ the family of subsets of $\mathcal{M}$ with cardinality $L$:
\bb
    \pazocal{P}_L(\mathcal{M})\coloneqq\{ A\subseteq \mathcal{M}\,:\, |A|=L\}.
\ee
Calling $W$ the source of messages, i.e.~a random variable taking values in $\mathcal{M}$, which is assumed to be uniform, let $\hat W_L$ be the random output of the code taking values in $\pazocal{P}_L(\mathcal{M})$. An error is declared only if the original message does not belong to that list, i.e.\ $W\notin \hat W_L$. Therefore, the error probability is
\bb
    \epsilon_L^{\emptyset}(M, \pazocal{N})\coloneqq \min_{\substack{(\pazocal{E},\pazocal{D})\; L\text{-list}\\ \text{unassisted code}}}\mathbb{P}\left(W\notin \pazocal{D}\circ\pazocal{N}^{\otimes n}\circ\pazocal{E}(W)\right). 
\ee
The $L$-list unassisted error exponent with communication rate $r$ is defined as
\bb
E_L^\emptyset(r,\pazocal{N})\coloneqq \liminf_{n\to\infty}-\frac{1}{n}\log\epsilon_L^{\emptyset}(L\exp(rn), \pazocal{N})\, .
\ee
Note that we require a message size of size $L\exp(rn)$ rather than simply $\exp(rn)$, as we want to interpret $r$ as a communication rate, and it is common to define rates as the logarithm of the message size divided by $L$ when considering list decoding schemes. See for example~\cite{SHANNON196765}. However, since we anyway keep the list size finite, this has no impact on the value of $E_L^\emptyset(r,\pazocal{W})$ (see also~\cite[footnote on p.~1]{bondaschi2021}).

Now, the $L$-list unassisted zero-rate error exponent of $\pazocal{N}$ is defined as
\bb\label{eq:zero-rate-list}
    \errL{\pazocal{N}}&\coloneqq \lim_{r\to 0^+} E_L^\emptyset(r,\pazocal{N}).
\ee

Differently from the classical case, a closed-form expression for the zero-rate unassisted error exponent of a classical-quantum channel is not known for $L>1$. In the fully classical case, the sequence $\ell_{L+1,u_{L+1}}$ matches such exponents for all $L\geq 1$~\cite{Filippo25} and, in the classical-quantum case, a trace inequality --- specifically the Golden--Thompson inequality --- is sufficient to recover the right exponent for $L=1$ (see the proof of Corollary \ref{cor:golden}). The main result of this section is a converse bound on the list decoding error exponent in terms of the non-signalling--assisted, activated error exponent without the knowledge of an explicit form expression for the former. The proof is completely based on an operational argument.

\begin{boxed}{}
\begin{thm}[(Umlaut information as a converse for list decoding)]\label{thm:converse_umlaut}
    Let $\pazocal{N}_{A\to B}$ be an arbitrary quantum channel and let $L\geq 1$. Then,
    \bb
        E_L^\emptyset(r,\pazocal{N})\leq E^{\rm NS,a}(r,\pazocal{N}).
    \ee
    In particular,
    \bb
        \errL{\pazocal{N}}\leq U^{\infty}(\pazocal{N}),
    \ee
    which implies, for any arbitrary CQ channel $\pazocal{M}$,
    \bb\label{eq:list_CQ}
        \errL{\pazocal{M}}\leq U(\pazocal{M}).
    \ee
\end{thm}
\end{boxed}

The proof of the theorem relies on a pioneering result in communication complexity by Wim van Dam \cite{implausible}, establishing that the availability of a special type of non-signalling correlation known as `Popescu--Rohrlich box' (PR-box)~\cite{PR-boxes} induces a collapse in the communication complexity of an arbitrary bipartite function.

\begin{prop}[(Collapse of communication complexity via PR-boxes \cite{implausible})]\label{prop:van_dam}
    Let $k\geq 1$ and let \bb f:\{0,1\}^k\times\{0,1\}^k\to \{0,1\}\ee  be an arbitrary Boolean function. Suppose that Alice and Bob receive $x\in \{0,1\}^k$ and $y\in \{0,1\}^k$, respectively. If they are allowed to use a PR-box as many times as they want and Alice can communicate exactly one bit to Bob, then Bob can compute $f(x,y)$ with success probability 1.
\end{prop}

Now we have all the ingredients to prove Theorem \ref{thm:converse_umlaut}.

\begin{proof}[Proof of Theorem \ref{thm:converse_umlaut}.]
    In the setting of the $L$-list decoding task, let $r>0$ and let $M\coloneqq L\exp(\ceil{nr})$. Let $(\pazocal{E},\pazocal{D})$ be an arbitrary $(M,\epsilon)$-code for $\pazocal{N}^{\otimes n}$, namely
    \bb
        \epsilon^\emptyset_L(M,\pazocal{N})\leq \epsilon
    \ee
    where, calling $\mathcal{M}\coloneqq\{1,\dots, M\}$,
    $\pazocal{E}:\mathcal{M}\to \mathcal{D}(\mathcal{H}_A)$ is the encoder and $\pazocal{D}:\mathcal{D}(\mathcal{H}_B)\to \pazocal{P}_L(\mathcal{M})$ is the $L$-list decoder.
    Let $k=\ceil{\log M}$ and let $x_m\in\{0,1\}^k$ be the binary representation of $m\in\mathcal{M}$. We define the Boolean function $f:\{0,1\}^{k}\times \{0,1\}^{k}\to \{0,1\}$ as
    \bb
        f(x,y)=\begin{cases}
            1 & \text{if}\quad x=y,\\
            0 & \text{if}\quad x\neq y.
        \end{cases}
    \ee

    By Proposition \ref{prop:van_dam}, given any arbitrary (fixed) $x,y\in \{0,1\}^k$, Bob can compute $f(x,y)$ just sharing a PR-box with Alice and receiving a noiseless bit from her. The following procedure describes a non-signalling--assisted, activated $(\exp(\ceil{nr}),\epsilon')$-code for $\pazocal{N}^{\otimes n}$, with $\epsilon'\leq\epsilon$.
    \begin{enumerate}
        \item Alice chooses a message $m\in\mathcal{M}$ and encodes it using $\pazocal{E}$. Then she sends the encoded message through the channel $\pazocal{N}^{\otimes n}$. Furthermore, she decides that she will always choose $x_m$ as the first input of $f$ when Bob computes the function.
        \item Bob decodes the output of the channel using $\pazocal{D}$. Now, Bob has a list $\hat W_L\subseteq \mathcal{M}$ of $L$ guesses of the original message $m$.
        \item Let us suppose that $m\in \hat W_L$. Then, if Bob has access to the function $f$, he can identify $m$ with at most $L$ queries of $f$: he just needs to compute $f(x_m,y_{m'})$ for all $m'\in\hat W_L$ and to see when $f(x_m,y_{m'})=1$. So, we can leverage Proposition \ref{prop:van_dam} to ensure that, allowing Alice to send $L$ noiseless bits to Bob, Bob can compute $f(x_m,y_{m'})$ for $L$ different $m'\in\mathcal{M}$. In this case, Bob correctly guesses $m$.
        \item If $m\notin \hat W_L$, Bob can do the same procedure of the previous point, but he will notice that $f(x_m,y_{m'})=0$ at each of the $L$ queries. In this case, Bob chooses as a guess for $m$ a uniform element in $\mathcal{M}$.
    \end{enumerate}
    This procedure shows that it is possible to build a non-signalling--assisted, activated $(\ceil{2^{nr}},\epsilon')$-code --- with $\epsilon'\leq\epsilon$ --- for $\pazocal{N}^{\otimes n}$ starting from a $(L\ceil{2^{nr}},\epsilon)$ code for the $L$-list decoding task. In particular,
    \bb
       \epsilon^\emptyset_L(L\exp(\ceil{rn}),\pazocal{N})\geq  \epsilon^{\rm NS}( \exp(\ceil{rn})\times L,\pazocal{N}\otimes\text{Id}_{L})\geq \epsilon^{\rm NS,a}(\exp(\ceil{rn}),\pazocal{N}),
    \ee
    whence
    \bb
    E_L^\emptyset(r,\pazocal{W})&= \liminf_{n\to\infty}-\frac{1}{n}\log\epsilon_L^{\emptyset}(L\exp(\ceil{rn}), \pazocal{N}^{\otimes n})\\
    &\leq \liminf_{n\to\infty}-\frac{1}{n}\log\epsilon^{\rm NS,a}(\exp(\ceil{rn}), \pazocal{N}^{\otimes n})=E^{\rm NS,a}(r,\pazocal{N}).
    \ee
    In particular, by taking the limit $r\to 0^+$, we have
    \bb
    E_L^\emptyset(0^+,\pazocal{W})\leq E^{\rm NS,a}(0^+,\pazocal{N})\eqt{(i)}U^\infty(\pazocal{N}),
    \ee
    where (i) follows from Theorem \ref{thm:exact_tho}. Finally, \eqref{eq:list_CQ} follows from Proposition \ref{prop:add_CQ}.
\end{proof}

\begin{cj} For any arbitrary fully classical channel, we also have \cite{Filippo25}
\bb
    \sup_{L\geq 1}\errL{\pazocal{W}}=\lim_{L\to \infty}\errL{\pazocal{W}}= U(\pazocal{W}).
\ee
We conjecture that $\lim_{L\to \infty}\errL{\pazocal{M}}\eqt{?} U(\pazocal{M})$ also for CQ channels $\mathcal{M}$.
\end{cj}


\subsection{Geometric umlaut information}

In the Section \ref{sec:operational_int}, we have proved that the error exponent at zero rate of CQ channels is given by the umlaut information of the channel, while for generic channels it is given by the regularised umlaut information of the channel: a single-letter formula is not known yet. 
In this section we are going to introduce a different notion of quantum umlaut information, which relies on the Belavkin--Staszewski quantum relative entropy. Such a quantity, in the setting of quantum communication, provides an upper bound to the error exponent at zero rate of generic quantum channels. We recall that the Belavkin--Staszewski (BS) quantum relative entropy is defined as
\bb
    D_{BS}(\rho\|\sigma)\coloneqq \Tr\left[\rho\log\left(\rho^{1/2}\sigma^{-1}\rho^{1/2}\right)\right],
\ee
which, similarly to Umegaki relative entropy, reduces to the Kullback--Leibler divergence for classical states. Leveraging this property, it is interesting to consider the following definition.

\begin{Def}[(Geometric umlaut information)] Given a bipartite state $\rho_{AB}\in\mathcal{D}(\mathcal{H}_A\otimes\mathcal{H}_B)$, we define its {\rm geometric umlaut information} as 
\bb
U_{BS}(A;B)_{\rho}&\coloneqq \min_{\sigma_B}D_{BS}(\rho_A\otimes\sigma_B\|\rho_{AB}),
\ee
where $\sigma_B\in \mathcal{D}(\mathcal{H}_B)$ and $\rho_A=\Tr_B[\rho_{AB}]$.
\end{Def}

\begin{lemma}\label{lem:BS_ineq}
Given any bipartite state $\rho_{AB}\in\mathcal{D}(\mathcal{H}_A\otimes\mathcal{H}_B)$,
\bb
    U(A;B)_{\rho}\leq U_{BS}(A;B)_{\rho}
\ee
\end{lemma}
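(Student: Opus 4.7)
The plan is to reduce the statement to the well-known pointwise comparison between the Umegaki and Belavkin--Staszewski relative entropies, namely the inequality
\begin{equation*}
D(\rho\|\sigma) \leq D_{BS}(\rho\|\sigma)
\end{equation*}
that holds for all pairs of states $(\rho,\sigma)$. This inequality is a standard consequence of the operator concavity of the logarithm (equivalently, of Jensen's operator inequality applied to $\log$), and it is proved for instance in Hiai--Petz's monograph on matrix analysis.

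Given this pointwise bound, the lemma is essentially immediate. First, I would fix an arbitrary state $\sigma_B \in \mathcal{D}(\mathcal{H}_B)$ and apply the comparison inequality to the pair $(\rho_A \otimes \sigma_B,\, \rho_{AB})$, obtaining
\begin{equation*}
D(\rho_A \otimes \sigma_B \,\|\, \rho_{AB}) \leq D_{BS}(\rho_A \otimes \sigma_B \,\|\, \rho_{AB}).
\end{equation*}
Since this holds for every $\sigma_B$, I can take the infimum over $\sigma_B$ on both sides; the infimum preserves the inequality, giving
\begin{equation*}
U(A;B)_\rho = \min_{\sigma_B} D(\rho_A \otimes \sigma_B \,\|\, \rho_{AB}) \leq \min_{\sigma_B} D_{BS}(\rho_A \otimes \sigma_B \,\|\, \rho_{AB}) = U_{BS}(A;B)_\rho,
\end{equation*}
which is exactly the claim.

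There is essentially no obstacle here beyond correctly invoking the Umegaki--Belavkin--Staszewski comparison; no minimax exchange or closed-form analysis is required, and the support conditions match on both sides (if the right-hand side is $+\infty$ the inequality is trivial, and if it is finite then $\supp(\rho_A \otimes \sigma_B) \subseteq \supp(\rho_{AB})$ ensures the Umegaki quantity is well-defined for the same $\sigma_B$). The only subtlety worth a brief remark is that the minimisers in the two variational problems are in general different: the optimal $\sigma_B$ on the Umegaki side is the umlaut-marginal $\uml{\rho}_B$ identified in Proposition~\ref{prop:closed_lautum}, while the optimiser for $U_{BS}$ will in general differ, so the inequality is typically strict.
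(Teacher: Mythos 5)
Your proposal is correct and matches the paper's proof, which likewise reduces the claim to the pointwise inequality $D(\rho\|\sigma)\leq D_{BS}(\rho\|\sigma)$ (cited there as~\cite[Corollary~2.6]{Hiai1991}) and takes the minimum over $\sigma_B$. The extra remarks on support conditions and on the minimisers being different are fine but not needed for the argument.
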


\begin{proof} Follows immediately from the inequality $D_{BS}(\rho\|\sigma)\geq D(\rho\|\sigma)$, which holds for any pair of states $\rho$ and $\sigma$~\cite[Corollary~2.6]{Hiai1991}.
\end{proof} 

\begin{lemma}[(Data processing inequality)] \label{lem:data_processing_bs} Let $\Lambda_{A\to A'}:\mathcal{L}(\mathcal{H}_A)\to \mathcal{L}(\mathcal{H}_{A'})$ and $\Lambda'_{B\to B'}:\mathcal{L}(\mathcal{H}_B)\to \mathcal{L}(\mathcal{H}_{B'})$ be two quantum channels and let $\rho_{AB}\in\mathcal{D}(\mathcal{H}_A\otimes\mathcal{H}_B)$ be a bipartite state. Then,
\bb
    U_{BS}(A';B')_{(\Lambda\otimes\Lambda')(\rho)}\leq U_{BS}(A;B)_{\rho}.
\ee
That is, the geometric umlaut information satisfies data processing under local operations.
\end{lemma}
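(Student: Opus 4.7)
The plan is to directly mirror the proof strategy used for the Umegaki case in Lemma~\ref{lem:quantum_lautum_prop}(3), replacing the role of the Umegaki relative entropy with the Belavkin--Staszewski relative entropy throughout. The two structural ingredients needed are (i) the fact that the marginal of the post-processed state is the image under $\Lambda_{A\to A'}$ of the original marginal, i.e.\ $\big((\Lambda \otimes \Lambda')(\rho)\big)_{A'} = \Lambda(\rho_A)$, which follows from $\Lambda'$ being trace preserving; and (ii) the data processing inequality for the Belavkin--Staszewski relative entropy itself, namely $D_{BS}(\Phi(\rho)\|\Phi(\sigma)) \leq D_{BS}(\rho\|\sigma)$ for any CPTP map $\Phi$, a classical result due to Hiai and Petz.

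Concretely, I would proceed in two steps. First, restrict the minimisation in the definition of $U_{BS}(A';B')_{(\Lambda\otimes\Lambda')(\rho)}$ to states of the form $\sigma_{B'} = \Lambda'_{B\to B'}(\sigma_B)$ for some $\sigma_B \in \mathcal{D}(\mathcal{H}_B)$; this can only increase the resulting minimum. Second, apply data processing for $D_{BS}$ under the channel $\Lambda_{A\to A'}\otimes\Lambda'_{B\to B'}$, using that
\begin{equation*}
(\Lambda\otimes\Lambda')(\rho_A\otimes \sigma_B) = \Lambda(\rho_A)\otimes \Lambda'(\sigma_B)\, .
\end{equation*}
Chaining these gives
\begin{align*}
U_{BS}(A';B')_{(\Lambda\otimes\Lambda')(\rho)}
&= \min_{\sigma_{B'}} D_{BS}\!\left(\Lambda(\rho_A)\otimes \sigma_{B'}\,\big\|\,(\Lambda\otimes\Lambda')(\rho_{AB})\right) \\
&\leq \min_{\sigma_{B}} D_{BS}\!\left(\Lambda(\rho_A)\otimes \Lambda'(\sigma_{B})\,\big\|\,(\Lambda\otimes\Lambda')(\rho_{AB})\right) \\
&= \min_{\sigma_{B}} D_{BS}\!\left((\Lambda\otimes\Lambda')(\rho_A\otimes \sigma_{B})\,\big\|\,(\Lambda\otimes\Lambda')(\rho_{AB})\right) \\
&\leq \min_{\sigma_{B}} D_{BS}(\rho_A\otimes \sigma_{B}\,\|\,\rho_{AB}) \\
&= U_{BS}(A;B)_\rho\, .
\end{align*}

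There is no real obstacle here: the argument is essentially mechanical once one invokes the DPI for $D_{BS}$, which is the only non-elementary input. The only minor point worth flagging is that the infimum over $\sigma_{B'}$ is attained (so one may write $\min$), which follows from lower semicontinuity of $D_{BS}$ in its first argument together with compactness of the set of states; this is standard and parallels what is done for the Umegaki case.
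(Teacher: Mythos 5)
Your proof is correct and follows exactly the route the paper takes: it transplants the argument of Lemma~\ref{lem:quantum_lautum_prop}(3), restricting the minimisation to states of the form $\Lambda'(\sigma_B)$ and then invoking the data processing inequality for the Belavkin--Staszewski relative entropy. The only cosmetic difference is the attribution of that DPI (the paper cites Fujii and Kamei, and Hiai's 1991 survey), which does not affect the argument.
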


\begin{proof}
The proof is identical to the one of Lemma~\ref{lem:quantum_lautum_prop}(3), with the only difference that now we need to use the data processing for the Belavkin--Staszewski quantum relative entropy~\cite{Fujii1989} (see also~\cite[Proposition~2.5(2)]{Hiai1991}).
\end{proof}

\begin{prop}[(Subadditivity of $U_{BS}$)] Given two bipartite states $\rho_{AB}\in\mathcal{D}(\mathcal{H}_A\otimes\mathcal{H}_B)$ and $\sigma_{A'B'}\in\mathcal{D}(\mathcal{H}_{A'}\otimes\mathcal{H}_{B'})$, it holds that
    \bb
        U_{BS}(AA';BB')_{\rho\otimes\sigma}\leq U_{BS}(A;B)_{\rho}+U_{BS}(A';B')_{\sigma}.
    \ee
\end{prop}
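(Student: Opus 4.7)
The plan is to upper-bound $U_{BS}(AA';BB')_{\rho\otimes\sigma}$ by feeding a product ansatz into the variational definition and invoking additivity of the Belavkin--Staszewski divergence on tensor products. Concretely, I would let $\uml{\rho}_B^{BS}$ and $\uml{\sigma}_{B'}^{BS}$ denote the (existence-guaranteed) minimisers in the definitions of $U_{BS}(A;B)_\rho$ and $U_{BS}(A';B')_\sigma$ respectively, and then evaluate
\bb
U_{BS}(AA';BB')_{\rho\otimes\sigma}\leq D_{BS}\!\left(\rho_A\otimes\sigma_{A'}\otimes \uml{\rho}_B^{BS}\otimes \uml{\sigma}_{B'}^{BS}\,\middle\|\,\rho_{AB}\otimes\sigma_{A'B'}\right).
\ee

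The key technical ingredient is the tensor-product additivity of $D_{BS}$, namely $D_{BS}(\rho_1\otimes\rho_2\|\tau_1\otimes\tau_2)=D_{BS}(\rho_1\|\tau_1)+D_{BS}(\rho_2\|\tau_2)$. This follows by direct computation: since $(\rho_1\otimes\rho_2)^{1/2}(\tau_1\otimes\tau_2)^{-1}(\rho_1\otimes\rho_2)^{1/2}=(\rho_1^{1/2}\tau_1^{-1}\rho_1^{1/2})\otimes(\rho_2^{1/2}\tau_2^{-1}\rho_2^{1/2})$, and $\log(X\otimes Y)=\log X\otimes\id+\id\otimes\log Y$ for positive $X,Y$, one may split the trace into two terms that individually reproduce the summands. (Support issues are handled in the standard way, with $D_{BS}=+\infty$ unless $\supp\rho\subseteq\supp\tau$, in which case both sides either agree or are simultaneously infinite.)

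Combining these two observations immediately gives the claim:
\bb
U_{BS}(AA';BB')_{\rho\otimes\sigma}\leq D_{BS}\!\left(\rho_A\otimes\uml{\rho}_B^{BS}\,\middle\|\,\rho_{AB}\right)+D_{BS}\!\left(\sigma_{A'}\otimes\uml{\sigma}_{B'}^{BS}\,\middle\|\,\sigma_{A'B'}\right)=U_{BS}(A;B)_\rho+U_{BS}(A';B')_\sigma.
\ee

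The main (mild) obstacle is just justifying the tensor-product additivity of $D_{BS}$, which is standard but perhaps deserves a line in the write-up since it is the only non-trivial ingredient; everything else is plugging into the variational definition. Note that, in contrast to the Umegaki case (Corollary~\ref{cor:additivity}), we cannot expect equality here, since the optimal marginal for the BS problem on $\rho_{AB}\otimes\sigma_{A'B'}$ need not factorise --- which is consistent with the statement being only a one-sided inequality.
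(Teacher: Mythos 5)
Your proposal is correct and follows essentially the same route as the paper: restrict the minimisation over $\xi_{BB'}$ to product states and invoke tensor-product additivity of $D_{BS}$, which the paper cites as known while you spell out its one-line verification.
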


\begin{proof}
    It is easy to compute
    \bb
        U_{BS}(AA';BB')_{\rho\otimes\sigma}&=\min_{\xi_{BB'}}D_{BS}(\rho_A\otimes\sigma_{A'}\otimes\xi_{BB'}\|\rho_{AB}\otimes\sigma_{A'B'})\\
        &\leqt{(i)} \min_{\xi_{B}\otimes\chi_{B'}}D_{BS}\big((\rho_A\otimes\xi_{B})\otimes(\sigma_{A'}\otimes\chi_{B'})\|\rho_{AB}\otimes\sigma_{A'B'}\big)\\
        &\eqt{(ii)} \min_{\xi_{B}}D_{BS}(\rho_A\otimes\xi_{B}\|\rho_{AB})+\min_{\chi_{B'}}D_{BS}(\sigma_{A'}\otimes\chi_{B'}\|\sigma_{A'B'})\\
        &=U_{BS}(A;B)_{\rho}+U_{BS}(A';B')_{\sigma},
    \ee
    where (i) follows from the choice of the ansatz $\xi_{BB'}=\xi_{B}\otimes\chi_{B'}$ and (ii) is due to the additivity of the BS relative entropy.
\end{proof}

In the setting of quantum comm\"unication theory, we can consequently define the $BS$ version of the umlaut information of a channel as follows. 

\begin{Def}[(Geometric umlaut information of a channel)]\label{def:geom_channel_lautum}
Let $\pazocal{N}:\mathcal{L}(\mathcal{H}_A)\to \mathcal{L}(\mathcal{H}_B)$ be a quantum channel and let $\mathcal{H}_{A'}$ be an auxiliary Hilbert space isomorphic to $\mathcal{H}_{A}$. Then the quantum umlaut information of $\pazocal{N}$ is defined as
\bb
U(\pazocal{N}) \coloneqq \sup_{\Psi_{A'A}} U(A';B)_{(\mathrm{Id}\otimes\NN)(\Psi)} = \sup_{\rho_{A'}} \min_{\sigma_B} D\left(\rho_{A'}\otimes \sigma_B \,\middle\|\, \rho_{A'}^{1/2} J_{A'B}^{(\pazocal{N})} \rho_{A'}^{1/2}\right) ,
\ee
where $\sigma_B$ is an arbitrary mixed state of $\mathcal{H}_B$ and $\Psi_{A'A}$ is an arbitrary input state of $\mathcal{H}_{A'}\otimes\mathcal{H}_A$, which can be assumed to be pure, without loss of generality, due to Lemma~\ref{lem:data_processing_bs}; the state $\rho_{A'}$ in the rightmost expression is the $A'$ marginal of $\Psi_{A'A}$, and $J_{A'B}^{(\pazocal{N})}$ is the (un-normalised) Choi--Jamio\l kowski matrix of $\pazocal{N}$.
\end{Def}

\begin{prop}[(Subadditivity of $U_{BS}(\pazocal{N})$)]\label{lem:subaddBS}
Let $\pazocal{N}_{A_1\to B_1}$ and $\pazocal{M}_{A_2\to B_2}$ be two quantum channels. Then
    \bb
        U_{BS}(\pazocal{N}\otimes\pazocal{M})\leq U_{BS}(\pazocal{N})+U_{BS}(\pazocal{M})
    \ee
    i.e.\ the channel geometric  umlaut information is sub-additive under tensor products.
\end{prop}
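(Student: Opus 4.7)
The plan is to upper bound $U_{BS}(\pazocal{N}\otimes\pazocal{M})$ by restricting the inner minimisation in the defining formula to product states $\sigma_{B_1B_2}=\sigma_{B_1}\otimes\sigma_{B_2}$, and then exploit a decoupling identity peculiar to the BS divergence. I first fix an arbitrary input $\rho_{A_1'A_2'}$ and set $\omega_{A_1'A_2'B_1B_2}\coloneqq\rho_{A_1'A_2'}^{1/2}\bigl(J_{A_1'B_1}^{(\pazocal{N})}\otimes J_{A_2'B_2}^{(\pazocal{M})}\bigr)\rho_{A_1'A_2'}^{1/2}$ for the joint channel output. Using the algebraic formula $D_{BS}(A\|B)=\Tr[A\log(A^{1/2}B^{-1}A^{1/2})]$ with $A=\rho_{A_1'A_2'}\otimes\sigma_{B_1}\otimes\sigma_{B_2}$ and $B=\omega$, the factors $\rho_{A_1'A_2'}^{\pm 1/2}$ appearing in $\omega^{-1}=\rho_{A_1'A_2'}^{-1/2}\bigl(J^{(\pazocal{N})}\otimes J^{(\pazocal{M})}\bigr)^{-1}\rho_{A_1'A_2'}^{-1/2}$ cancel when sandwiched by $A^{1/2}=\rho_{A_1'A_2'}^{1/2}\otimes\sigma_{B_1}^{1/2}\otimes\sigma_{B_2}^{1/2}$, leaving the factorisation
\[
A^{1/2}\omega^{-1}A^{1/2}\ =\ M_1\otimes M_2,\qquad M_i\coloneqq \bigl(\id_{A_i'}\otimes\sigma_{B_i}^{1/2}\bigr)\,\bigl(J^{(\pazocal{N}_i)}_{A_i'B_i}\bigr)^{-1}\bigl(\id_{A_i'}\otimes\sigma_{B_i}^{1/2}\bigr),
\]
where $\pazocal{N}_1=\pazocal{N}$ and $\pazocal{N}_2=\pazocal{M}$. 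Crucially, $M_1\otimes M_2$ no longer sees the correlations of the input $\rho_{A_1'A_2'}$.

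Taking logarithms splits $\log(M_1\otimes M_2)$ into a sum of tensor operators, and tracing against $A$ collapses each term to a single-system expression: the partial trace over the complementary factors converts $\rho_{A_1'A_2'}$ into its marginal $\rho_{A_i'}$ and absorbs $\Tr\sigma_{B_j}=1$. Comparing with the explicit form of the single-channel BS divergence against $\rho_{A_i'B_i}^{(\pazocal{N}_i)} = \rho_{A_i'}^{1/2}J^{(\pazocal{N}_i)}_{A_i'B_i}\rho_{A_i'}^{1/2}$ yields the decoupling identity
\[
D_{BS}\bigl(\rho_{A_1'A_2'}\otimes\sigma_{B_1}\otimes\sigma_{B_2}\,\big\|\,\omega\bigr) = D_{BS}\bigl(\rho_{A_1'}\otimes\sigma_{B_1}\,\big\|\,\rho_{A_1'B_1}^{(\pazocal{N})}\bigr) + D_{BS}\bigl(\rho_{A_2'}\otimes\sigma_{B_2}\,\big\|\,\rho_{A_2'B_2}^{(\pazocal{M})}\bigr),
\]
which is the heart of the argument and fails for the Umegaki divergence precisely because the cancellation of the $\rho_{A_1'A_2'}^{\pm 1/2}$ factors is peculiar to the BS formula.

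Minimising the right-hand side over $\sigma_{B_1}$ and $\sigma_{B_2}$ independently bounds it by $U_{BS}(A_1';B_1)_{\rho^{(\pazocal{N})}}+U_{BS}(A_2';B_2)_{\rho^{(\pazocal{M})}}\leq U_{BS}(\pazocal{N})+U_{BS}(\pazocal{M})$. Since restricting to product $\sigma_{B_1B_2}$ is suboptimal for the inner minimisation, I obtain $\min_{\sigma_{B_1B_2}}D_{BS}(\rho_{A_1'A_2'}\otimes\sigma_{B_1B_2}\|\omega)\leq U_{BS}(\pazocal{N})+U_{BS}(\pazocal{M})$ for every $\rho_{A_1'A_2'}$, and taking the supremum over inputs completes the proof. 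The main obstacle is handling supports and invertibility rigorously, since the computation of $\omega^{-1}$ and the cancellation of the $\rho_{A_1'A_2'}^{\pm 1/2}$ factors formally require strict positivity; this is most cleanly addressed by first establishing the decoupling identity for full-rank $\rho_{A_1'A_2'}$ and strictly positive Choi matrices, and then extending by continuity under full-rank perturbations (with the case where either side is $+\infty$ being trivial).
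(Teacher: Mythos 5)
Your proposal is correct and follows essentially the same route as the paper's proof: restrict the inner minimisation to product states $\sigma_{B_1}\otimes\sigma_{B_2}$, use the BS formula $D_{BS}(\rho_{A'}\otimes\sigma_B\,\|\,\rho_{A'}^{1/2}J\rho_{A'}^{1/2})=\Tr[(\rho_{A'}\otimes\sigma_B)\log(\sigma_B^{1/2}J^{-1}\sigma_B^{1/2})]$ so that the $\rho_{A_1'A_2'}^{\pm 1/2}$ factors cancel, and let the logarithm of the resulting tensor product split into two terms that each depend only on the marginals $\rho_{A_i'}$. Your additional remark on handling supports by a full-rank perturbation argument is a sensible regularity point that the paper leaves implicit.
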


\begin{proof} We compute
\begin{align}
U_{BS}(\pazocal{N}\otimes\pazocal{M})
&=\sup_{\rho_{A_1'A_2'}}U_{BS}(A_1'A_2';B_1B_2)_{\rho^{(\pazocal{N}\otimes\pazocal{M})}} \nonumber \\
&=\sup_{\rho_{A_1'A_2'}}\min_{\sigma_{B_1B_2}} D_{BS}\left(\rho_{A_1'A_2'}\otimes\sigma_{B_1B_2}\,\middle\|\,\rho_{A_1'A_2'}^{1/2}\left(J^{(\pazocal{N})}_{A_1'B_1}\otimes J^{(\pazocal{M})}_{A_2'B_2}  \right)\rho_{A_1'A_2'}^{1/2}\right) \nonumber \\
&=\sup_{\rho_{A_1'A_2'}}\min_{\sigma_{B_1B_2}} \Tr\left[(\rho_{A_1'A_2'}\otimes\sigma_{B_1B_2})\log\left(\sigma_{B_1B_2}^{1/2}\left(J^{(\pazocal{N})}_{A_1'B_1}\otimes J^{(\pazocal{M})}_{A_2'B_2}  \right)^{-1}\sigma_{B_1B_2}^{1/2}\right)\right] \nonumber \\
&\leqt{(i)}
\sup_{\rho_{A_1'A_2'}}\min_{\sigma_{B_1}\otimes\sigma_{B_2}} \Tr\left[(\rho_{A_1'A_2'}\otimes\sigma_{B_1}\otimes\sigma_{B_2})\log\left(\sigma_{B_1}^{1/2}\otimes\sigma_{B_2}^{1/2}\left(J^{(\pazocal{N})}_{A_1'B_1}\otimes J^{(\pazocal{M})}_{A_2'B_2}  \right)^{-1}\sigma_{B_1}^{1/2}\otimes\sigma_{B_2}^{1/2}\right)\right] \nonumber \\
&\eqt{(ii)}\sup_{\rho_{A'_1}}\min_{\sigma_{B_1}}\Tr\left[\left(\rho_{A_1'}\otimes\sigma_{B_1}\right)\log\left(\sigma_{B_1}^{1/2}\left(J_\pazocal{N}^{-1}\right)_{A_1'B_1}\sigma_{B_1}^{1/2}\right)\right] \nonumber \\
&\qquad + \sup_{\rho_{A'_2}} \min_{\sigma_{B_2}} \Tr\left[\left(\rho_{A_2'}\otimes\sigma_{B_2}\right)\log\left(\sigma_{B_2}^{1/2}\left(J_\pazocal{M}^{-1}\right)_{A_2'B_2}\sigma_{B_2}^{1/2}\right)\right] \nonumber \\
&=U_{BS}(\pazocal{N})+U_{BS}(\pazocal{M}),
\end{align}
where~(i) follows from the ansatz $\sigma_{B_1B_2}=\sigma_{B_1}\otimes\sigma_{B_2}$ and~(ii) is due to the additivity of the logarithm under tensor products combined with the fact that
\bb
\Tr\left[\left(\rho_{A_1'A_2'}\otimes\sigma_{B_1}\right)\log\left(\sigma_{B_1}^{1/2}\left(J_\pazocal{N}^{-1}\right)_{A_1'B_1}\sigma_{B_1}^{1/2}\right)\right]=\Tr\left[\left(\rho_{A_1'}\otimes\sigma_{B_1}\right)\log\left(\sigma_{B_1}^{1/2}\left(J_\pazocal{N}^{-1}\right)_{A_1'B_1}\sigma_{B_1}^{1/2}\right)\right]
\ee
and similarly for the other term, so that the maximisation over $\rho_{A_1'A_2'}$ reduces to two independent maximisations over the reduced states $\rho_{A_1'}$ and $\rho_{A_2'}$.
\end{proof}

\begin{Def}[(Regularised geometric umlaut information of a channel)] Let $\pazocal{N}:\mathcal{L}(\mathcal{H}_A)\to \mathcal{L}(\mathcal{H}_B)$ be a quantum channel. Its regularised geometric umlaut information is defined as
\bb\label{eq:reg_lautum_ch_BS}
    U^\infty_{BS}(\pazocal{N})\coloneqq\lim_{n\to\infty}\frac{1}{n}U_{BS}(\pazocal{N}^{\otimes n})\, .
\ee
\end{Def}

The existence of the limit~\eqref{eq:reg_lautum_ch_BS} is ensured by Fekete's lemma, as the sequence $U_{BS}(\pazocal{N}^{\otimes n})$ is subadditive according to Proposition~\ref{lem:subaddBS}.

\begin{lemma}\label{lem:infty_norm}
 Let $\pazocal{N}:\mathcal{L}(\mathcal{H}_A)\to \mathcal{L}(\mathcal{H}_B)$ be a quantum channel. Its geometric umlaut information can be written as
    \bb
        U_{BS}(\pazocal{N})=\min_{\sigma_B}\left\|\,\Tr_B\left[\sigma_B\log\left(\sigma_B^{1/2}\left(J^{(\pazocal{N})}_{A'B}\right)^{-1}\sigma_B^{1/2}\right)\right]\,\right\|_\infty,
    \ee
    where $J^{(\pazocal{N})}_{A'B}$ is the Choi--Jamio\l kowski state of $\pazocal{N}$.
\end{lemma}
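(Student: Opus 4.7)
The plan is to start from Definition~\ref{def:geom_channel_lautum}, simplify the Belavkin--Staszewski relative entropy so that the dependence on $\rho_{A'}$ becomes \emph{linear}, and then exchange the sup and the min. Assuming first that $\rho_{A'}$ is of full rank (the general case will follow by lower semi-continuity of $D_{BS}$ and density of full-rank states), I expand $D_{BS}(\rho\|\tau)=\Tr[\rho\log(\rho^{1/2}\tau^{-1}\rho^{1/2})]$ with $\rho=\rho_{A'}\otimes\sigma_B$ and $\tau=(\rho_{A'}^{1/2}\otimes\id_B)J^{(\pazocal{N})}_{A'B}(\rho_{A'}^{1/2}\otimes\id_B)$. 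Since $\tau^{-1}=(\rho_{A'}^{-1/2}\otimes\id_B)(J^{(\pazocal{N})}_{A'B})^{-1}(\rho_{A'}^{-1/2}\otimes\id_B)$, the outer factors $\rho_{A'}^{\pm 1/2}$ in $\rho^{1/2}\tau^{-1}\rho^{1/2}$ cancel pairwise, yielding the key identity
\[
D_{BS}\!\left(\rho_{A'}\otimes\sigma_B\,\middle\|\,\rho_{A'}^{1/2}J^{(\pazocal{N})}_{A'B}\rho_{A'}^{1/2}\right)
=\Tr\!\left[(\rho_{A'}\otimes\sigma_B)\,\log\!\left(\sigma_B^{1/2}\big(J^{(\pazocal{N})}_{A'B}\big)^{-1}\sigma_B^{1/2}\right)\right].
\]
In particular, defining $M(\sigma_B)\coloneqq \Tr_B[\sigma_B\log(\sigma_B^{1/2}(J^{(\pazocal{N})}_{A'B})^{-1}\sigma_B^{1/2})]$, the right-hand side equals $\Tr_{A'}[\rho_{A'}\,M(\sigma_B)]$, which is affine in $\rho_{A'}$ for each fixed $\sigma_B$.

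Next, I would exchange the supremum over $\rho_{A'}$ with the minimum over $\sigma_B$ via Lemma~\ref{lem:minmax}. Both state spaces are convex and compact; the objective is affine (hence convex and concave) in $\rho_{A'}$; and it is convex in $\sigma_B$, by joint convexity of $D_{BS}$ and the fact that the second argument $\tau$ does not depend on $\sigma_B$, so that the map $\sigma_B\mapsto \rho_{A'}\otimes\sigma_B$ is linear; lower semi-continuity in $\sigma_B$ follows from that of $D_{BS}$. Sion--Farkas minimax then yields
\[
U_{BS}(\pazocal{N})=\min_{\sigma_B}\,\sup_{\rho_{A'}}\,\Tr_{A'}[\rho_{A'}\,M(\sigma_B)].
\]

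Finally, non-negativity of $D_{BS}$ on pairs of states implies $\Tr_{A'}[\rho_{A'}M(\sigma_B)]\geq 0$ for every state $\rho_{A'}$, which forces $M(\sigma_B)\succeq 0$ as a Hermitian operator on $A'$. For positive semi-definite $M(\sigma_B)$, the supremum of $\Tr[\rho_{A'}M(\sigma_B)]$ over quantum states $\rho_{A'}$ equals its largest eigenvalue, which coincides with the operator norm $\|M(\sigma_B)\|_\infty$, giving exactly the formula in the statement.

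The step I expect to be the main obstacle is twofold: (i) justifying the minimax exchange rigorously (settled by applying Lemma~\ref{lem:minmax}, provided the hypotheses above are verified carefully), and (ii) handling the case where $\rho_{A'}$ or $J^{(\pazocal{N})}_{A'B}$ fails to be invertible. The first is bypassed by restricting to full-rank $\rho_{A'}$, whose image is dense in the state space, and invoking lower semi-continuity of $D_{BS}$; the second requires interpreting $(J^{(\pazocal{N})}_{A'B})^{-1}$ on its support, with the finiteness of $U_{BS}(\pazocal{N})$ for a given $\sigma_B$ dictated by a support-compatibility condition analogous to Lemma~\ref{lem:channel_umlaut_prop}.
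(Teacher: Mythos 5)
Your proof is correct, and it follows the same overall architecture as the paper's (minimax exchange, then evaluation of the inner supremum over $\rho_{A'}$ as an operator norm); the difference is that the paper disposes of the second step by citing the known closed-form expression for the channel Belavkin--Staszewski divergence [Fang--Fawzi, Theorem~3] applied to the replacer channel $\pazocal{R}^{\sigma_B}$ with Choi matrix $\id_{A'}\otimes\sigma_B$, whereas you reprove exactly the special case needed. Your self-contained route has two genuine advantages. First, the cancellation identity $D_{BS}\big(\rho_{A'}\otimes\sigma_B\,\big\|\,\rho_{A'}^{1/2}J^{(\pazocal{N})}_{A'B}\rho_{A'}^{1/2}\big)=\Tr\big[(\rho_{A'}\otimes\sigma_B)\log\big(\sigma_B^{1/2}(J^{(\pazocal{N})}_{A'B})^{-1}\sigma_B^{1/2}\big)\big]$ (which the paper itself uses later, in the subadditivity proof of $U_{BS}(\pazocal{N})$) makes the functional manifestly \emph{affine} in $\rho_{A'}$, so the hypotheses of Lemma~\ref{lem:minmax} are verified transparently --- the paper only says ``as in the proof of Lemma~\ref{lem:replacer}'', which for the Umegaki case rested on Proposition~\ref{prop:concavity} and is not restated for $D_{BS}$. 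Second, your final step correctly identifies the two small points one must check: $M(\sigma_B)$ is Hermitian (by cyclicity of $\Tr_B$ over the $B$ factor, $\Tr_B[(\id_{A'}\otimes\sigma_B)X]=\Tr_B[(\id_{A'}\otimes\sigma_B^{1/2})X(\id_{A'}\otimes\sigma_B^{1/2})]$) and positive semi-definite (from $D_{BS}\geq 0$ on pairs of states), which is what licenses replacing $\sup_{\rho_{A'}}\Tr[\rho_{A'}M(\sigma_B)]=\lambda_{\max}(M(\sigma_B))$ by $\|M(\sigma_B)\|_\infty$. The support caveats you flag at the end are real but routine, and are handled the same way the paper handles them elsewhere (interpreting inverses on supports, with infinite values allowed by the Farkas minimax).
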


\begin{proof}
    As in the proof of Lemma~\ref{lem:replacer} we apply the minimax principle to the definition of $U_{BS}(\pazocal{N})$ and we use~\cite[Theorem 3]{Fang2019} to conclude, noticing that the Choi--Jamio\l kowski matrix of the replacement  channel $\pazocal{R}^{\sigma_B}$ is $J_{A'B}^{(\pazocal{R}^{\sigma_B})}=\id_{A'}\otimes\sigma_B$.
\end{proof}

\begin{prop}[(An alternative formula for CQ-channels)]\label{prop:alternativa_geometric_cq}
Let $\pazocal{N}_{X\to B}(\,\cdot\,) = \sum_{x\in \mathcal{X}} \bra{x}\cdot \ket{x} \rho_x^B$ be a classical-quantum channel. Then, we have
    \bb
        U(\pazocal{N})=\min_{\sigma_B}\sup_{x\in\mathcal{X}}D\left(\sigma_{B}\middle\|\rho_x^{B}\right),\quad
        U_{BS}(\pazocal{N})=\min_{\sigma_B}\sup_{x\in\mathcal{X}}D_{BS}\left(\sigma_{B}\middle\|\rho_x^{B}\right).
    \ee    
\end{prop}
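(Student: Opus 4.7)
The plan is to prove the two formulas separately, each by reducing to a closed-form expression that has already been derived earlier in the paper. For the Umegaki variant, I would start from the explicit formula for classical-quantum channels established in Proposition~\ref{prop:add_CQ}, namely
\bb
U(\pazocal{N}) = \sup_{P_X}\left(-\log \Tr\exp\Big(\sum_{x\in\mathcal{X}} P_X(x)\log \rho_x^B\Big)\right).
\ee
For each fixed $P_X$, the Gibbs variational principle (Lemma~\ref{lemma:Gibbs}) applied with Hamiltonian $H = -\sum_x P_X(x)\log\rho_x^B$ rewrites the quantity inside the supremum as $\min_{\sigma_B}\bigl(-S(\sigma_B)+\Tr[\sigma_B H]\bigr) = \min_{\sigma_B}\sum_x P_X(x)\, D(\sigma_B\|\rho_x^B)$, because $-S(\sigma_B)-\Tr[\sigma_B\log\rho_x^B] = D(\sigma_B\|\rho_x^B)$. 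This presents $U(\pazocal{N})$ as the sup--min of a functional which is affine in $P_X$ on the compact probability simplex and convex and lower semi-continuous in $\sigma_B$ on the compact set of states; the hypotheses of Lemma~\ref{lem:minmax} are therefore satisfied and the two optimizations can be exchanged. Since the inner maximisation of an affine functional on the simplex is attained at an extreme point, $\sup_{P_X}\sum_x P_X(x)\, D(\sigma_B\|\rho_x^B) = \sup_{x\in\mathcal{X}} D(\sigma_B\|\rho_x^B)$, yielding the first identity.

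For the geometric version the cleanest route is through the $\|\cdot\|_\infty$-type representation from Lemma~\ref{lem:infty_norm},
\bb
U_{BS}(\pazocal{N}) = \min_{\sigma_B}\left\|\Tr_B\left[\sigma_B \log\left(\sigma_B^{1/2}\bigl(J^{(\pazocal{N})}_{A'B}\bigr)^{-1}\sigma_B^{1/2}\right)\right]\right\|_\infty.
\ee
For a CQ channel, the Choi matrix $J^{(\pazocal{N})}_{X'B} = \sum_x \ketbra{x}_{X'}\otimes \rho_x^B$ is block diagonal in the $X'$ basis, and this block structure is preserved under matrix inversion, the sandwich by $\id_{X'}\otimes\sigma_B^{1/2}$, and the operator logarithm, yielding $\log\bigl(\sigma_B^{1/2}(J^{(\pazocal{N})}_{X'B})^{-1}\sigma_B^{1/2}\bigr) = \sum_x \ketbra{x}_{X'}\otimes \log\bigl(\sigma_B^{1/2}(\rho_x^B)^{-1}\sigma_B^{1/2}\bigr)$. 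Multiplying by $\id_{X'}\otimes\sigma_B$ and tracing out $B$ then produces a diagonal operator on $X'$ whose $x$-th entry is exactly $\Tr\bigl[\sigma_B\log\bigl(\sigma_B^{1/2}(\rho_x^B)^{-1}\sigma_B^{1/2}\bigr)\bigr] = D_{BS}(\sigma_B\|\rho_x^B)$. Since the operator norm of a diagonal operator equals the largest (in absolute value) diagonal entry, this norm is $\sup_{x\in\mathcal{X}} D_{BS}(\sigma_B\|\rho_x^B)$, and the outer minimisation over $\sigma_B$ completes the derivation.

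The main technical care concerns support conditions. When $\supp \sigma_B \not\subseteq \bigcap_x \supp \rho_x^B$, some of the terms $D(\sigma_B\|\rho_x^B)$ or $D_{BS}(\sigma_B\|\rho_x^B)$ are $+\infty$, and the inverse $(\rho_x^B)^{-1}$ entering the BS calculation has to be interpreted as a generalized inverse on the common support; under the standard conventions both sides of each identity then consistently evaluate to $+\infty$, so the formulas remain valid. Aside from this book-keeping (and the minor check that Lemma~\ref{lem:infty_norm} applies verbatim to the CQ Choi matrix rather than only to strictly positive Choi states), both proofs are essentially mechanical once the right closed-form expressions are invoked, so no deeper obstacle is anticipated.
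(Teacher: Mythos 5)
Your proposal is correct and follows essentially the same route as the paper: the geometric half is identical (block-diagonality of the CQ Choi matrix inside Lemma~\ref{lem:infty_norm}, with the $\infty$-norm of the resulting diagonal operator picking out $\sup_x D_{BS}(\sigma_B\|\rho_x^B)$), and the Umegaki half uses the same three ingredients — the identification of the objective as $\sum_x P_X(x)\,D(\sigma_B\|\rho_x^B)$, the minimax exchange of Lemma~\ref{lem:minmax}, and extremality of the affine maximisation over the simplex — merely entered from the closed form of Proposition~\ref{prop:add_CQ} plus the Gibbs variational principle rather than from Lemma~\ref{lem:replacer}, which already packages the minimax step. Your attention to the support/infinite-value book-keeping is sound and does not change the argument.
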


\begin{proof}
Let us compute the Choi--Jamio\l kowski state of $\pazocal{N}$:
    \bb
        J^{(\pazocal{N})}_{X'B}&=\sum_{x,x'}\ket{x}\bra{x'}_{X'}\otimes\pazocal{N}_{X\to B}\left(\ket{x}\bra{x'}_{X}\right)\\
        &=\sum_{x\in\mathcal{X}}\ketbra{x}_{X'}\otimes\rho_{x}^{B},
    \ee
By Lemma~\ref{lem:replacer},
\bb
    U(\pazocal{N})&= \min_{\sigma_B}D(\pazocal{R}^{\sigma_B}_{X\rightarrow B}\|\pazocal{N})\\
    &=\min_{\sigma_B}\sup_{P_{X'}}D\left(P_{X'}^{1/2}\,J_{X'B}^{(\pazocal{R}^{\sigma_B})}\,P_{X'}^{1/2}\,\middle\|\,P_{X'}^{1/2}\,J_{X'B}^{(\pazocal{N})}\,P_{X'}^{1/2})\right)\\
    &\eqt{(i)}\min_{\sigma_B}\sup_{P_{X'}} D\left(P_{X'}\otimes\sigma_B \middle \|\sum_{x\in\mathcal{X}}P_{X'}(x)\ketbra{x}_{X'}\otimes\rho_x^B\right)\\
    &\eqt{(ii)}\min_{\sigma_B}\sup_{P_{X'}}\left(-S(\sigma_B)-\sum_{x\in\mathcal{X}}P_X(x)\Tr\left[\sigma_B\log\rho_x^{B}\right]\right)\\
    &=\min_{\sigma_B}\sup_{P_{X'}}\sum_{x\in\mathcal{X}}P_{X'}(x)D(\sigma_B\|\rho_x^B)\\
    &=\min_{\sigma_B}\sup_{x\in\mathcal{X}}D(\sigma_B\|\rho_x^B),
\ee
where in (i) we have used that $J_{X'B}^{(\pazocal{R}^{\sigma_B})}=\id_{X'}\otimes\sigma_B$ and in (ii) we have expanded the expression of the relative entropy, getting a cancellation of the entropy of $P_X$. 
Now, we want to get a similar identity for the geometric umlaut information of $\pazocal{N}$ using Lemma~\ref{lem:infty_norm}. Let us compute
    \bb
        \log\left(\sigma_{B}^{1/2}\left(J^{(\pazocal{N})}_{X'B}\right)^{-1}\sigma_{B}^{1/2}\right)=-\log\left(\sigma_{B}^{-1/2}J^{(\pazocal{N})}_{X'B}\sigma_{B}^{-1/2}\right)&=
        \sum_{x\in\mathcal{X}}\ketbra{x}_{X'}\otimes\log\left(\sigma_{B}^{1/2}\left(\rho_{x}^{B}\right)^{-1}\sigma_{B}^{1/2}\right).
    \ee
    Therefore,
    \bb
    \Tr_{B}\left[\sigma_{B}\log\left(\sigma_{B}^{1/2}\left(J^{(\pazocal{N})}_{X'B}\right)^{-1}\sigma_{B}^{1/2}\right)\right]&=
        \sum_{x\in\mathcal{X}}\ketbra{x}_{X'} \otimes \Tr_{B}\left[\sigma_{B}\log\left(\sigma_{B}^{1/2}\left(\rho_{x}^{B}\right)^{-1}\sigma_{B}^{1/2}\right)\right],
    \ee
    so, by Lemma~\ref{lem:infty_norm}, we conclude that
    \bb
        U_{BS}(\pazocal{N})&=\min_{\sigma_B}\sup_{x\in\mathcal{X}}\left\|\Tr_{B}\left[\sigma_{B}\log\left(\sigma_{B}^{1/2}\left(\rho_{x}^{B}\right)^{-1}\sigma_{B}^{1/2}\right)\right]\right\|_{\infty}\\
        &=\min_{\sigma_B}\sup_{x\in\mathcal{X}}D_{BS}(\sigma_{B}\|\rho_x^{B}).
    \ee
    This concludes the proof.
\end{proof}

\begin{prop} \label{prop:hierarchy_umlauts}
Let $\pazocal{N}:\mathcal{L}(\mathcal{H}_A)\to \mathcal{L}(\mathcal{H}_B)$ be a quantum channel. Then, we have
    \bb\label{eq:U_bs1}
        U(\pazocal{N})\leq U^\infty(\pazocal{N})\leq U_{BS}^\infty(\pazocal{N}) \leq  U_{BS}(\pazocal{N}).
    \ee
In particular, $U(\pazocal{N})$ and $U_{BS}(\pazocal{N})$ provide single-letter lower and upper bounds, respectively, to the activated, non-signalling error exponent of $\pazocal{N}$ at zero rate:
\bb\label{eq:U_bs2}
    U(\pazocal{N})\leq \errnsa{\pazocal{N}} \leq  U_{BS}(\pazocal{N}).
\ee
\end{prop}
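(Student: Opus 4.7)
The plan is to establish the four-term chain of inequalities in~\eqref{eq:U_bs1} by combining three ingredients already developed in the paper: super-additivity of $U$ at the channel level (Lemma~\ref{lem:super_add}), sub-additivity of $U_{BS}$ at the channel level (Proposition~\ref{lem:subaddBS}), and the pointwise state-level inequality $U(A;B)_\rho\leq U_{BS}(A;B)_\rho$ (Lemma~\ref{lem:BS_ineq}). Once~\eqref{eq:U_bs1} is in place, the second statement~\eqref{eq:U_bs2} follows by invoking Theorem~\ref{thm:exact_tho}, which identifies the middle quantity $U^\infty(\pazocal{N})$ with $E^{\rm NS,a}(0^+,\pazocal{N})$.

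For the leftmost inequality $U(\pazocal{N})\leq U^\infty(\pazocal{N})$, I would iterate Lemma~\ref{lem:super_add} to get $U(\pazocal{N}^{\otimes n})\geq n\,U(\pazocal{N})$ for every $n\geq 1$, divide by $n$, and take the limit, which exists by Fekete's lemma as already observed right after the definition of $U^\infty$. For the rightmost inequality $U^\infty_{BS}(\pazocal{N})\leq U_{BS}(\pazocal{N})$, I would analogously iterate Proposition~\ref{lem:subaddBS} to get $U_{BS}(\pazocal{N}^{\otimes n})\leq n\,U_{BS}(\pazocal{N})$, divide by $n$, and take the limit (whose existence, guaranteed again by Fekete, was recorded just after Definition~5.18).

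The only step that requires a small extra argument is the middle inequality $U^\infty(\pazocal{N})\leq U^\infty_{BS}(\pazocal{N})$. Here I would first upgrade the state-level bound of Lemma~\ref{lem:BS_ineq} to the channel level: for any input state $\rho_{A'}$, applying Lemma~\ref{lem:BS_ineq} to $\rho^{(\pazocal{N})}_{A'B}$ yields $U(A';B)_{\rho^{(\pazocal{N})}}\leq U_{BS}(A';B)_{\rho^{(\pazocal{N})}}$, and taking the supremum over $\rho_{A'}$ on both sides gives $U(\pazocal{N})\leq U_{BS}(\pazocal{N})$. Applying this bound to $\pazocal{N}^{\otimes n}$ in place of $\pazocal{N}$, dividing by $n$, and sending $n\to\infty$ delivers the desired inequality between the regularisations.

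Finally, for~\eqref{eq:U_bs2}, I would simply insert the identity $E^{\rm NS,a}(0^+,\pazocal{N})=U^\infty(\pazocal{N})$ from Theorem~\ref{thm:exact_tho} into the chain~\eqref{eq:U_bs1}, keeping only the outermost terms. I do not anticipate any serious obstacle: all three additivity/monotonicity facts are already proved in the preceding text, so the proof is essentially a short bookkeeping argument rather than a technical calculation.
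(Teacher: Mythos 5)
Your proposal is correct and follows essentially the same route as the paper's proof: it combines the state-level inequality of Lemma~\ref{lem:BS_ineq} (lifted to channels by taking the supremum over inputs and then regularised), super-additivity of $U$ from Lemma~\ref{lem:super_add}, sub-additivity of $U_{BS}$ from Proposition~\ref{lem:subaddBS}, and finally Theorem~\ref{thm:exact_tho} to obtain~\eqref{eq:U_bs2}. No gaps.
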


\begin{proof} By Lemma~\ref{lem:BS_ineq}, for any choice of $\rho_{A'}$,
\bb
        &U(A';B)_{\rho_{A'B}^{(\pazocal{N})}}\leq U_{BS}(A';B)_{\rho_{A'B}^{(\pazocal{N})}}\\
        &\qquad\Longrightarrow\quad \sup_{\rho_{A'}}U(A';B)_{\rho_{A'B}^{(\pazocal{N})}}\leq \sup_{\rho_{A'}} U_{BS}(A';B)_{\rho_{A'B}^{(\pazocal{N})}}\\
        &\quad\;\iff \quad U(\pazocal{N})\leq U_{BS}(\pazocal{N}).
    \ee
    Regularising the previous inequality we get
    \bb
         U^\infty(\pazocal{N})\leq U_{BS}^\infty(\pazocal{N}).
    \ee
    Since $U_{BS}^\infty(\pazocal{N})\leq U_{BS}(\pazocal{N})$ by Proposition~\ref{lem:subaddBS} and $U(\pazocal{N})\leq U^\infty(\pazocal{N})$, we have completed the proof of~\eqref{eq:U_bs1}. Then~\eqref{eq:U_bs2} follows from Theorem~\ref{thm:exact_tho}.
\end{proof} 

\begin{figure}
    \centering
\begin{tikzpicture}
    \draw[thick, ->] (2,0) -- (15,0);
    \draw[thick, gray, ->] (8.5,1.6) -- (15,1.6);
    \draw[thick,gray] (7.5,0) -- (8.5,1.6);

    \draw[thick] (2.5,0.2) -- (2.5,-0.2);
    \draw[thick, blue] (4.5,0.2) -- (4.5,-0.2); 
    \draw[thick] (6.5,0.2) -- (6.5,-0.2); 
    \draw[thick, gray] (10.5,1.4) -- (10.5,1.8); 
    \draw[thick, gray] (12.5,1.4) -- (12.5,1.8);
    \draw[thick, gray] (14.5,1.4) -- (14.5,1.8);
    \draw[thick] (9.5,0.2) -- (9.5,-0.2); 
    \draw[thick] (11.5,0.2) -- (11.5,-0.2);
    \draw[thick] (13.5,0.2) -- (13.5,-0.2);

    \node[blue, above] at (4.5,0.2) {$\ell_{2,u_{2}}(\pazocal{N})$};
    \node[above] at (6.5,0.2) {$U(\pazocal{N})$};
    \node[above] at (9.5,0.2) {$U^\infty(\pazocal{N})$};
    \node[above] at (11.5,0.2) {$U^\infty_{BS}(\pazocal{N})$};
    \node[above] at (13.5,0.2) {$U_{BS}(\pazocal{N})$};
    \node[above, gray] at (10.5,1.8) {$L(\pazocal{N})$};
    \node[above, gray] at (12.5,1.8) {$L_{BS}(\pazocal{N})$};
    \node[above, gray] at (14.5,1.8) {$L^{\infty}_{BS}(\pazocal{N})$};
    \node[blue, below] at (6.5,-0.2) {\errnsa{\pazocal{N}}};
    \node[below] at (9.5,-0.2) {\errnsa{\pazocal{N}}};
    \node[below] at (2.5,-0.2) {\errpl{\pazocal{N}}};

    \draw[thick, blue] (6.5,1.3) -- (6.5,0.9);
    \draw[thick, blue] (2,1.1) -- (6.5,1.1);
    \node[above, blue] at (4.25,1.1) {$\ell_{k,q}(\pazocal{N})$};


    \node[above] at (16,0.8) {information};
    \node[above] at (16,0.4) {measures};
    \node[above] at (16,-0.6) {error};
    \node[above] at (16,-1.1) {exponents};

\end{tikzpicture}
 
    \caption{A pictorial comparison of the information measures and of the error exponents that were discussed in this paper. $\pazocal{N}_{A\to B}$ is a quantum channel; for any $k\geq 1$, $q\in\mathbb{R}^k$ is a vector such that $\sum_{i=1}^kq_i=1$ and $u_2 =(1/2, 1/2)$. The inequalities and identities involving objects represented in blue have to be considered only if $\pazocal{N}$ is a classical-quantum channel.}
    \label{fig:recap}
\end{figure}
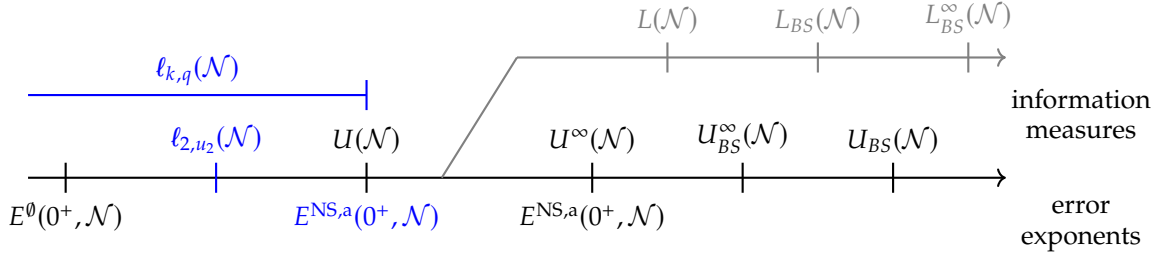

We did not explore in detail the quantum generalisation of Palomar and Verd\'u's lautum information~\cite{Lautum_08} to the quantum case, but a few elementary upper bounds can be immediately established. 
Given a bipartite state $\rho_{AB}\in\mathcal{D}(\mathcal{H}_A\otimes\mathcal{H}_B)$, we call
    \bb
        L(A\!:\!B)_{\rho}&\coloneqq D(\rho_A\otimes\rho_B\|\rho_{AB})\\
        L_{BS}(A\!:\!B)_{\rho}&\coloneqq D_{BS}(\rho_A\otimes\rho_B\|\rho_{AB})
    \ee
where $\rho_A=\Tr_B[\rho_{AB}]$ and $\rho_B=\Tr_A[\rho_{AB}]$. Due to the additivity of the Umegaki and Belavkin--Staszewski relative entropies under tensor products, it is clear that each corresponding lautum information is additive under tensor product, namely, given two bipartite states $\rho_{AB}\in\mathcal{D}(\mathcal{H}_A\otimes\mathcal{H}_B)$ and $\sigma_{A'B'}\in\mathcal{D}(\mathcal{H}_{A'}\otimes\mathcal{H}_{B'})$, it holds that
\bb
    L(AA'\!:\!BB')_{\rho\otimes\sigma}=L(A\!:\!B)_{\rho}+L(A'\!:\!B')_{\sigma},\qquad  L_{BS}(AA'\!:\!BB')_{\rho\otimes\sigma}=L_{BS}(A\!:\!B)_{\rho}+L_{BS}(A'\!:\!B')_{\sigma}.
\ee
The corresponding formulation for a quantum channel $\mathcal{N}_{A\to B}$ would be
\bb
        L(\pazocal{N})&\coloneqq\sup_{\rho_{A'}}L(A'\!:\!B)_{\rho_{A'B}^{(\mathcal{N})}},\\
        L_{BS}(\pazocal{N})&\coloneqq\sup_{\rho_{A'}}L_{BS}(A'\!:\!B)_{\rho_{A'B}^{(\mathcal{N})}},
    \ee
where, as usual, $\rho_{A'B}^{(\mathcal{N})}\coloneqq\rho_{A'}^{1/2}\,J_{A'B}^{(\pazocal{N})}\,\rho_{A'}^{1/2}$. In particular, if we consider $n$ uses of the channel and we choose $\rho_{A'\,^n}=\rho_A^{\otimes n}$ for any arbitrary $\rho_A\in\mathcal{D}(\mathcal{H}_A)$, then we have $\rho_{A'\,^nB^n}^{(\mathcal{N}^{\otimes n})}\coloneqq\left(\rho_{A'}^{1/2}\,J_{A'B}^{(\pazocal{N})}\,\rho_{A'}^{1/2}\right)^{\otimes n}$, yielding the lower bound
\bb
    L(\pazocal{N}^{\otimes n})=\sup_{\rho_{A'\,^n}}L(A'\,^n\!:\!B^n)_{\rho_{A'\,^nB^n}^{(\mathcal{N}^{\otimes n})}}\geq \sup_{\rho_{A'}}L(A'\,^n\!:\!B^n)_{\left(\rho_{A'B}^{(\mathcal{N})}\right)^{\otimes n}}\eqt{(a)}nL(\pazocal{N}),
\ee
where in (a) we have leveraged the additivity of $L$. The same holds for the BS formulation of $L$. In particular, calling $L^\infty(\mathcal{N})$ and $L^\infty_{BS}(\mathcal{N})$ the regularisations of $L(\mathcal{N})$ and $L_{BS}(\mathcal{N})$, respectively, we have
\bb
    L(\mathcal{N}) \leq L^\infty(\mathcal{N}) \coloneqq \lim_{n\to\infty}\frac{1}{n}L(\mathcal{N}^{\otimes n}),\qquad  L_{BS}(\mathcal{N}) \leq L^\infty_{BS}(\mathcal{N}) \coloneqq \lim_{n\to\infty}\frac{1}{n}L_{BS}(\mathcal{N}^{\otimes n}).
\ee
Since $D(\rho\|\sigma)\leq D_{BS}(\rho\|\sigma)$, we immediately have that every (Umegaki) lautum information measure is smaller than the corresponding Belavkin--Staszewski information measure. 
Furthermore, by their very definition, every umlautum information measure is smaller than the corresponding lautum information measure. Therefore, we conclude with the following inequalities
\bb
    U(\pazocal{N})\leq L(\pazocal{N})\leq 
    \begin{cases}
        L_{BS}(\pazocal{N})\\
        L^\infty(\pazocal{N})
    \end{cases}
    \hspace{-0.8em}\leq L_{BS}^\infty(\pazocal{N}),\qquad U^\infty(\pazocal{N})\leq L^\infty(\pazocal{N})\leq L^\infty_{BS}(\pazocal{N}).
\ee
Some of them are schematically represented in Figure~\ref{fig:recap}, together with the other information measures and error exponents discussed in this paper.


\section{Outlook}\label{sec:outlook} 

In this work, we generalised the umlaut information~\cite{Filippo25} to the quantum setting. The quantum umlaut information bears many similarities to its classical counterpart, admitting a closed-form expression and being additive under tensor product. We showed that this resemblance extends to the case of classical-quantum channels, where the channel umlaut information is additive for parallel channel uses, and gains an operational interpretation as the zero-rate error exponent in activated non-signalling--assisted communication.
The similarities with the additive and well-behaved classical umlaut information break down for general quantum channels, however. We showed in particular that the channel umlaut information is not additive in general, necessitating the use of its regularised form to recover an operational interpretation as a zero-rate error exponent in activated non-signalling--assisted coding. This latter scenario is known to correspond to the well-known sphere packing bound in the classical setting~\cite{Polyanskiy13} and in the classical-quantum setting~\cite{Oufkir2024Oct}. The findings of this work indicate that regularisation is needed in order to prove a general quantum sphere packing bound, which is an interesting problem to be explored in future work.

An intriguing aspect of our findings is that we do \emph{not} obtain a single-letter solution for the zero-rate exponent for general quantum channels, even though the corresponding channel capacities (asymptotic rates with vanishing error) \emph{are} given by the single-letter quantum mutual information~\cite{Bennett2002,leung_2015,wang2019-2} --- assuming entanglement or non-signalling assistance. This goes in a completely different direction than the recent result from~\cite{lami2024asymptotic}: there, a problem that only admits a multi-letter, regularised solution at the level of asymptotic rates became single letter for the zero-rate error exponent, 
which suggested that looking at such exponents could make asymptotic problems simpler to characterise and more easily yield single-letter expressions. This curious discrepancy may mean that such intuition does not carry over to quantum channel manipulation questions.

A property of the umlaut information --- and, more generally, zero-rate exponents --- is that it can be infinite for some channels, meaning that an arbitrarily large number of messages can be sent using activated non-signalling strategies with error vanishing super-exponentially fast. 
As discussed in Section~\ref{sec:operational_int} following the development from \cite{duan_2016,duan_2017}, a diverging (regularised) umlaut information in fact even implies that there is non-vanishing zero-error non-signalling capacity. This means that there is a strong and precise dichotomy between the two settings, and it would be interesting to further explore these connections to zero-error information theory. 

Yet another open problem is whether umlaut information also governs the zero-rate error exponents in the more practically meaningful setting of entanglement assistance~\cite{Bennett2002}, and whether the need for activation in the non-signalling--assisted strategies used in our work can be removed, following e.g.\ recent ideas in~\cite{Oufkir2024Oct}.
 
Finally, classically~\cite{Filippo25}, the task of list decoding provides an operational interpretation of the umlaut information in the \emph{unassisted} zero-rate large-list setting. A quantum generalisation of list decoding is still an open problem, even in the classical-quantum case. We conjecture that the family of lower bounds we provide in Proposition~\ref{prop:inequality} constitute a family of multivariate fidelities that can be used in order to prove list decoding error exponents.

\acknowledgements
We thank Mark M.\ Wilde for correspondence on the topics of this paper. MB acknowledges funding from the European Research Council (ERC Grant Agreement No.~948139) and the Excellence Cluster Matter and Light for Quantum Computing (ML4Q). LL acknowledges financial support from the European Union under the European Research Council (ERC Grant Agreement No.~101165230) and from MIUR (Ministero dell'Istruzione, dell'Universit\`a e della Ricerca) through the project `Dipartimenti di Eccellenza 2023--2027' of the `Classe di Scienze' department at the Scuola Normale Superiore. MT is supported by the Ministry of Education through grant T2EP20124-0005 and by the National Research Foundation, Singapore through the National Quantum Office, hosted in A*STAR, under its Centre for Quantum Technologies Funding Initiative (S24Q2d0009). AO  acknowledges funding by the European Research Council (ERC Grant Agreement No. 948139).

\bibliography{biblio}


\appendix

\section{Proof of Proposition~\ref{prop:closed_alpha}}\label{app:proof_closed_alpha}
    Let us rewrite
\bb
    U_\alpha(A;B)_{\rho}&=\min_{\sigma_B}\frac{1}{\alpha-1}\log\Tr\left[(\rho_A\otimes\sigma_B)^\alpha\rho_{AB}^{1-\alpha}\right]\\
    &=\frac{1}{\alpha-1}\log \max_{\sigma_B}\Tr\left[\sigma_B^\alpha\Tr_A\left[\rho_A^\alpha\rho_{AB}^{1-\alpha}\right]\right].
\ee
Let $X\coloneqq \sigma_B^\alpha$ and $Y\coloneqq \Tr_A\left[\rho_A^\alpha\rho_{AB}^{1-\alpha}\right]$. Then
\bb
    X,Y\geq 0, \qquad \|X\|_{1/\alpha}=\left(\Tr[X^{1/\alpha}]\right)^\alpha=1.
\ee 
Then, by Hölder's inequality
\bb
    U_\alpha(A;B)_{\rho}
    &=\frac{1}{\alpha-1}\log \max_{\substack{X\geq 0\\ \|X\|_{1/\alpha}}=1}\Tr\left[XY\right]\eqt{(i)}\frac{1}{\alpha-1}\log \|Y\|_{\frac{1}{1-\alpha}}=-\log \big\|\Tr_A\left[\rho_A^\alpha\rho_{AB}^{1-\alpha}\right]\big\|_{\frac{1}{1-\alpha}}^{\frac{1}{1-\alpha}},
\ee
where the equality in (i) is achieved by
\bb
X = \frac{Y^{\frac{\alpha}{1-\alpha}}}{\|Y\|_{\frac{1}{1-\alpha}}^{\frac{\alpha}{1-\alpha}}} \geq 0\, ,
\ee
i.e.
\bb
\sigma_B=\frac{\left(\Tr_A\left[\rho_A^\alpha\rho_{AB}^{1-\alpha}\right]\right)^{\frac{1}{1-\alpha}}}{\Tr\left[\left(\Tr_A\left[\rho_A^\alpha\rho_{AB}^{1-\alpha}\right]\right)^{\frac{1}{1-\alpha}}\right]}\eqcolon \uml{\rho}_{B}^{(\alpha)}.
\ee
Let us call $\uml{\rho}_{B}^{(\alpha)}$ $\alpha$-umlaut-marginal of $\rho_{AB}$.
If we consider any product state $\rho_{AB}\otimes\sigma_{A'B'}$, it is easy to verify that its $\alpha$-umlaut reduced state is the tensor product of $\uml{\rho}_{B}^{(\alpha)}$ and $\uml{\sigma}_{B'}^{(\alpha)}$. Since the Petz--R\'{e}nyi quantum relative entropies are additive, this implies the additivity of $U_\alpha$.


\section{Covariant channels} \label{app:covariant_channels}

This appendix is devoted to the proof of Proposition~\ref{prop:group}, which plays an important role in simplifying the numerical calculation of the single-copy umlaut information of group covariant channels, in turn a key step in establishing additivity violations.

\begin{proof}[Proof of Proposition~\ref{prop:group}] Let $\{U_A(g)\}_{g\in G}$ and $\{V_B(g)\}_{g\in G}$ denote two unitary representations of a finite group $G$ in $\mathcal{H}_A$ and $\mathcal{H}_B$. Let $\pazocal{U}_{g}:\mathcal{L}(\mathcal{H}_A)\to\mathcal{L}(\mathcal{H}_A)$ be the pure unitary channel
    \bb
        \pazocal{U}_{g}(\,\cdot\,)\coloneqq U_A (g) \,\cdot \, U_A^\dagger(g),
    \ee
and, similarly, let $\pazocal{V}_{g}:\mathcal{L}(\mathcal{H}_B)\to\mathcal{L}(\mathcal{H}_B)$ be the pure unitary channel
    \bb
        \pazocal{V}_{g}(\,\cdot\,)\coloneqq V_B(g) \,\cdot \, V_B^\dagger(g).
    \ee
Now, whenever we consider a channel $\pazocal{N}:\mathcal{L}(\mathcal{H}_A)\to\mathcal{L}(\mathcal{H}_B)$ which is $G$-covariant according to the above representations, i.e.\ 
$\pazocal{N}\pazocal{U}_g=\pazocal{V}_g\pazocal{N}$, we have
    \bb
        D(\pazocal{R}^{\sigma_B}\|\pazocal{N})\eqt{(i)}D(\pazocal{V}_{g}\pazocal{R}^{\sigma_B}\pazocal{U}_{-g}\|\pazocal{V}_{g}\pazocal{N}\pazocal{U}_{-g})\eqt{(ii)}D(\pazocal{V}_{g}\pazocal{R}^{\sigma_B}\pazocal{U}_{-g}\|\pazocal{V}_{g}\pazocal{V}_{-g}\pazocal{N})\eqt{(iii)}D(\pazocal{V}_{g}\pazocal{R}^{\sigma_B}\|\pazocal{N}),
    \ee
    for any $g\in G$ and $\sigma_B\in D(\mathcal{H}_B)$, where
    \begin{itemize}
        \item (i) holds since $D(\pazocal{U}_2\pazocal{M}\pazocal{U}_1\|\pazocal{U}_2\pazocal{N}\pazocal{U}_1)=D(\pazocal{M}\|\pazocal{N})$ is true for arbitrary channels $\pazocal{M}$, $\pazocal{N}$ and pure unitary channels $\pazocal{U}_1$, $\pazocal{U}_2$;
        \item (ii) holds since $\pazocal{N}$ is assumed to be covariant with respect to $G$;
        \item (iii) holds since $\pazocal{R}^{\sigma_B}\pazocal{M}=\pazocal{R}^{\sigma_B}$ for any channel $\pazocal{M}$.
    \end{itemize}
    As a consequence, by the convexity of the channel relative entropy,
    \bb\label{eq:convexity}
        D\left(\,\frac{1}{|G|}\sumno_{g\in G}\pazocal{V}_g\pazocal{R}^{\sigma_B}\,\middle\|\,\pazocal{N}\,\right)\leq \frac{1}{|G|}\sum_{g\in G}D(\pazocal{V}_g\pazocal{R}^{\sigma_B}\|\pazocal{N})=D(\pazocal{R}^{\sigma_B}\|\pazocal{N}).
    \ee 
    In particular, defining $\sigma^{(G)}_B$ as the symmetrised version of any state $\sigma_B$, i.e.
    \bb
        \sigma_B^{(G)}\coloneqq\frac{1}{|G|}\sum_{g\in G}\pazocal{V}_g\sigma_B,
    \ee
    which is $G$-invariant according to the representation $\{U_B(g)\}_{g\in G}$
    \bb
        \pazocal{V}_{g}\sigma_B^{(G)}=\sigma_B^{(G)}\qquad \forall g\in G,
    \ee
    we immediately see that
    \bb
        \frac{1}{|G|}\sum_{g\in G}\pazocal{V}_g\pazocal{R}^{\sigma_B}=\pazocal{R}^{\frac{1}{|G|}\sum_{g\in G}\pazocal{V}_g\sigma_B}=\pazocal{R}^{\sigma_B^{(G)}},
    \ee
    Eq.~\eqref{eq:convexity} ensures that the minimisation over a generic $\sigma_B$ can be restricted to a minimisation over the set of $G$-invariant states $\sigma_B^{(G)}$:
    \bb\label{eq:minG}
        U(\pazocal{N})=\min_{\sigma_B^{(G)}}D(\pazocal{R}^{\sigma_B^{(G)}}\|\pazocal{N}).
    \ee
    Such family of $G$-invariant states engenders replacer channels which are $G$-covariant. Indeed, since $\pazocal{R}^{\sigma_B^{(G)}}\pazocal{U}_g=\pazocal{R}^{\sigma_B^{(G)}}$ and $\pazocal{V}_g\pazocal{R}^{\sigma_B^{(G)}}=\pazocal{R}^{\pazocal{V}_g\sigma_B^{(G)}}=\pazocal{R}^{\sigma_B^{(G)}}$,
    it clearly holds that
    \bb
        \pazocal{R}^{\sigma_B^{(G)}}\pazocal{U}_g=\pazocal{V}_g\pazocal{R}^{\sigma_B^{(G)}}
    \ee
    for any $g\in G$.
    This means that, in Eq.~\eqref{eq:minG}, both the channel $\pazocal{R}^{\sigma_B^{(G)}}$ and $\pazocal{N}$ are $G$-covariant, so we can apply~\cite[Proposition II.4]{Leditzky2018} to rewrite the definition of channel relative entropy restricting the maximisation to the set of $G$-invariant states $\rho_{A'}^{(G)}$:
    \bb\label{eq:maxG}
        D(\pazocal{R}^{\sigma_B^{(G)}}\|\pazocal{N})=\sup_{\rho_{A'}^{(G)}}D\big((\mathrm{Id}_{A'}\otimes\pazocal{R}^{\sigma_B^{(G)}})(\phi_{A'A}^{(G)})\|(\mathrm{Id}_{A'}\otimes\pazocal{N}_{A\to B})(\phi_{A'A}^{(G)})\big),
    \ee
    where 
    \bb
        \phi_{A'A}^{(G)}\coloneqq \left(\rho_{A'}^{(G)}\right)^{1/2}\Phi_{A'A}\left(\rho_{A'}^{(G)}\right)^{1/2},\qquad \Phi_{A'A}\coloneqq\sum_{ij}\ket{ii}_{A'A}\bra{jj}_{A'A}.
    \ee
    Therefore,
    \bb
        U(\pazocal{N})&\eqt{(iv)}\min_{\sigma_B^{(G)}}\sup_{\rho_{A'}^{(G)}}D\big((\mathrm{Id}_{A'}\otimes\pazocal{R}^{\sigma_B^{(G)}})(\phi_{A'A}^{(G)})\|(\mathrm{Id}_{A'}\otimes\pazocal{N}_{A\to B})(\phi_{A'A}^{(G)})\big)\\
        &\eqt{(v)}\sup_{\rho_{A'}^{(G)}}\min_{\sigma_B^{(G)}}D\big((\mathrm{Id}_{A'}\otimes\pazocal{R}^{\sigma_B^{(G)}})(\phi_{A'A}^{(G)})\|(\mathrm{Id}_{A'}\otimes\pazocal{N}_{A\to B})(\phi_{A'A}^{(G)})\big)\\
        &\eqt{(vi)}\sup_{\rho_{A'}^{(G)}}\min_{\sigma_B}D\big(\rho_{A'}^{(G)}\otimes\sigma_B\|(\rho_{A'}^{(G)})^{1/2}\, J^{(\pazocal{N})}_{A'A}\,(\rho_{A'}^{(G)})^{1/2} \big)\\
        &\eqt{(vii)}\sup_{\rho_{A'}^{(G)}}\left(-S(\rho_{A'}^{(G)})-\log\Tr\left[\exp\Tr_{A'}\left[\rho_{A'}^{(G)}\log\left((\rho_{A'}^{(G)})^{1/2}\, J^{(\pazocal{N})}_{A'A}\,(\rho_{A'}^{(G)})^{1/2}\right)\right]\right]\right),\\
    \ee
    where
    \begin{itemize}
        \item in (iv) we have combined~\eqref{eq:minG} with~\eqref{eq:maxG};
        \item in (v) we have used the minimax theorem as in the proof of Lemma~\ref{lem:replacer};
        \item be enlarging the set of the minimisers to any generic $\sigma_B$, in (vi) we should have an inequality, but we will soon show that the equality holds; in (vi) we have also introduced the Choi--Jamio\l kowski state of $\pazocal{N}$.
        \item in (vii) we have used the closed-form expression provided by Proposition~\ref{prop:closed_lautum} and we have skipped a few $\id_B$ in order to simplify the notation.
    \end{itemize}
    The only remaining step to conclude is to prove that (vi) is indeed an equality. Let us start by noticing that, for any fixed $\rho_{A'}^{(G)}$, $\sigma_B$ and $g\in G$, we have 
    \bb\label{eq:rel_ent_group}
        D\big(\rho_{A'}^{(G)}\otimes \pazocal{V}_g\sigma_B\|(&\mathrm{Id}_{A'}\otimes\pazocal{N}_{A\to B})(\phi_{A'A}^{(G)})\big)\\
        &\eqt{(a)}D\big(\rho_{A'}^{(G)}\otimes \sigma_B\|(\mathrm{Id}_{A'}\otimes \pazocal{V}_{-g}\pazocal{N}_{A\to B})(\phi_{A'A}^{(G)})\big)\\
        &\eqt{(b)}D\big(\rho_{A'}^{(G)}\otimes \sigma_B\|(\mathrm{Id}_{A'}\otimes \pazocal{N}_{A\to B}\pazocal{U}_{-g})(\phi_{A'A}^{(G)})\big)\\
        &\eqt{(c)}D\big(\rho_{A'}^{(G)}\otimes \sigma_B\|(\rho_{A'}^{(G)})^{1/2}(\mathrm{Id}_{A'}\otimes \pazocal{N}_{A\to B}\pazocal{U}_{-g})(\Phi_{A'A})(\rho_{A'}^{(G)})^{1/2}\big),
    \ee
    where
    \begin{itemize}
        \item in (a) we have used the invariance of the quantum relative entropy under joint unitary conjugation, i.e.\ $D(\rho\|\sigma)=D(U\rho U^\dagger\|U\sigma U^\dagger)$;
        \item (b) follows from the $G$-covariance of $\pazocal{N}$;
        \item (c) is simply a restatement of the definition of $\phi_{A'A}^{(G)}$.
    \end{itemize}

Let $\{\ket{\psi_\lambda}_{A'}\}_{\lambda=1,\dots,\dim\mathcal{H}_{A'}}$ be a the basis diagonalising $\rho^{(G)}_{A'}$, namely
\bb\label{eq:diagonal_form}
    \rho^{(G)}_{A'}=\sum_{\lambda=1}^{\dim\mathcal{H}_{A'}}\mu_\lambda \ketbra{\psi_\lambda}_{A'}
\ee
for suitable real numbers $\mu_\lambda\geq 0$.
Let us apply the transpose trick in that basis:
    \bb\label{eq:transpose_trick}
        (\id_{A'}\otimes U_{A}(-g))\sum_\lambda\sum_{i=1}^{\dim \mathcal{H}_\lambda}\ket{\psi_\lambda}_{A'}\ket{\psi_\lambda}_{A}
        &=(U^\intercal_{A'}(-g)\otimes \id_{A} )\sum_\lambda\sum_{i=1}^{\dim \mathcal{H}_\lambda}\ket{\psi_\lambda}_{A'}\ket{\psi_\lambda}_{A}\\
        &= (U^\ast_{A'}(g)\otimes \id_{A} )\sum_\lambda\sum_{i=1}^{\dim \mathcal{H}_\lambda}\ket{\psi_\lambda}_{A'}\ket{\psi_\lambda}_{A}.
    \ee
    Calling $\pazocal{U}^\ast_{g}:\mathcal{L}(\mathcal{H}_{A'})\to\mathcal{L}(\mathcal{H}_{A'})$ be the pure unitary channel
    \bb
        \pazocal{U}^\ast_{g}(\,\cdot\,)\coloneqq U_{A'}^\ast (g) \,\cdot \, U_{A'}^\intercal(g),
    \ee
    we immediately see that
    \bb\label{eq:just_d}
        (\mathrm{Id}_{A'}\otimes \pazocal{N}_{A\to B}\pazocal{U}_{-g})(\Phi_{A'A})=(\pazocal{U}^\ast_{g}\otimes \pazocal{N}_{A\to B})(\Phi_{A'A})
    \ee
    as a consequence of~\eqref{eq:transpose_trick}.
    Taking the complex conjugate and transpose in the eigenbasis of $\rho_{A'}^{(G)}$, it is clear that $\left(\rho_{A'}^{(G)}\right)^\ast = \rho_{A'}^{(G)}$ due to the form~\eqref{eq:diagonal_form}, and therefore
    \bb\label{eq:just_g}
        U_{A'} (g) \rho_{A'}^{(G)} U_{A'}^\dagger(g)=\rho_{A'}^{(G)}\qquad\xrightarrow{\ast}\qquad
        U_{A'}^\ast (g) \rho_{A'}^{(G)} U_{A'}^\intercal(g)=\rho_{A'}^{(G)} \qquad\forall\,g\in G,
    \ee
    i.e.\ $\pazocal{U}_g(\rho_{A'}^{(G)})=\rho_{A'}^{(G)}$. As a consequence, we have
    \bb
        U_{A'}^\ast (g) (\rho_{A'}^{(G)})^{1/2} =(\rho_{A'}^{(G)})^{1/2}U_{A'}^\ast(g),\qquad U_{A'}^\intercal (g) (\rho_{A'}^{(G)})^{1/2} =(\rho_{A'}^{(G)})^{1/2}U_{A'}^\intercal(g)\qquad\forall\,g\in G,
    \ee
    and therefore
    \bb\label{eq:just_e}
        (\rho_{A'}^{(G)})^{1/2}(\pazocal{U}^\ast_{g}\otimes \pazocal{N}_{A\to B})(\Phi_{A'A})(\rho_{A'}^{(G)})^{1/2}&=(\pazocal{U}^\ast_{g}\otimes \mathrm{Id}_{A})\left((\rho_{A'}^{(G)})^{1/2}(\mathrm{Id}_{A'}\otimes \pazocal{N}_{A\to B})(\Phi_{A'A})(\rho_{A'}^{(G)})^{1/2}\right)\\
        &=(\pazocal{U}^\ast_{g}\otimes \mathrm{Id}_{A})\left((\rho_{A'}^{(G)})^{1/2}J^{(\pazocal{N})}_{A'A}(\rho_{A'}^{(G)})^{1/2}\right).
    \ee
    Eq.~\ref{eq:rel_ent_group} becomes
    \bb\label{eq:group_invariance}
    D\big(\rho_{A'}^{(G)}\otimes \pazocal{V}_g\sigma_B\|(&\mathrm{Id}_{A'}\otimes\pazocal{N}_{A\to B})(\phi_{A'A}^{(G)})\big)\\
    &\eqt{(d)}D\big(\rho_{A'}^{(G)}\otimes \sigma_B\|(\rho_{A'}^{(G)})^{1/2}(\pazocal{U}^\ast_{g}\otimes \pazocal{N}_{A\to B})(\Phi_{A'A})(\rho_{A'}^{(G)})^{1/2}\big)\\
    &\eqt{(e)}D\left(\rho_{A'}^{(G)}\otimes \sigma_B\middle\|(\pazocal{U}^\ast_{g}\otimes \mathrm{Id}_{A})\left((\rho_{A'}^{(G)})^{1/2}J^{(\pazocal{N})}_{A'A}(\rho_{A'}^{(G)})^{1/2}\right)\right)\\
    &\eqt{(f)}D\big((\pazocal{U}^\ast_{-g}\otimes \mathrm{Id}_{A})(\rho_{A'}^{(G)}\otimes \sigma_B)\|(\rho_{A'}^{(G)})^{1/2}J^{(\pazocal{N})}_{A'A}(\rho_{A'}^{(G)})^{1/2}\big)\\
    &\eqt{(g)}D\big(\rho_{A'}^{(G)}\otimes \sigma_B\|(\rho_{A'}^{(G)})^{1/2}J^{(\pazocal{N})}_{A'A}(\rho_{A'}^{(G)})^{1/2}\big).
     \ee
     where
     \begin{itemize}
         \item (d) follows from~\eqref{eq:just_d} and (e) follows from~\eqref{eq:just_e};
         \item in (f) we have used the invariance of the quantum relative entropy under joint unitary conjugation;
         \item in (g) we have used the invariance of $\rho_{A'}^{(G)}$ under the action of $\pazocal{U}^\ast_g$, as in~\eqref{eq:just_g}.
     \end{itemize}
     So, by convexity of the quantum relative entropy, for any $\sigma_B$ 
     \bb
        D\big(\rho_{A'}^{(G)}\otimes \sigma_B^{(G)}\|(\mathrm{Id}_{A'}\otimes\pazocal{N}_{A\to B})(\phi_{A'A}^{(G)})\big)&\leq \frac{1}{|G|}\sum_{g\in G}D\big(\rho_{A'}^{(G)}\otimes \pazocal{V}_g\sigma_B\|(\mathrm{Id}_{A'}\otimes\pazocal{N}_{A\to B})(\phi_{A'A}^{(G)})\big)\\
        &\eqt{(h)}D\big(\rho_{A'}^{(G)}\otimes \sigma_B\|(\rho_{A'}^{(G)})^{1/2}J^{(\pazocal{N})}_{A'A}(\rho_{A'}^{(G)})^{1/2}\big),
     \ee
     where in (h) we have used~\eqref{eq:group_invariance}. This means that enlarging the set of candidate minimisers as in (vi) does not produce a strict inequality, since $G$-invariant states yield the same minimum. 
\end{proof}


\section{Proof of the weak additivity of the regularised exponents $\opnsa{\pazocal{N}}$ and $\errnsa{\pazocal{N}}$}
\label{app:reg}
In this appendix we prove that $\opnsa{\pazocal{N}}$ and $\errnsa{\pazocal{N}}$ are weakly additive and in particular:
\bb
\opnsa{\pazocal{N}}=\lim_{k\to \infty}\frac{1}{k}\opnsa{\pazocal{N}^{\otimes k}}\, , \quad  \errnsa{\pazocal{N}}=\lim_{k\to \infty}\frac{1}{k}\errnsa{\pazocal{N}^{\otimes k}}\, .
\ee

Let $M>0$ and let $m\leq n$. Then $\epsilon^{\rm NS,a}(\new{M,}\pazocal{N}^{\otimes m}\old{, R})\geq \epsilon^{\rm NS,a}(\new{M,}\pazocal{N}^{\otimes n}\old{, R})$ since any code using $m$ copies of $\pazocal{N}$ with message size at least $M$ can be seen as a code using $n$ copies of $\pazocal{N}$ --- acting non-trivially just on the first $m$ copies of the channel --- having message size at least $M$. Let $(n_\ell)_{\ell\in\mathbb{N}}$ be the subsequence achieving the liminf in~\eqref{eq:T_ast}:
    \bb
        \opnsa{\pazocal{N}}=
        \liminf_{M\to\infty}\lim_{\ell\to \infty}-\frac{1}{n_\ell}\log\epsilon^{\rm NS,a}(\new{M,}\pazocal{N}^{\otimes n_\ell}\old{,\log M}).
    \ee
    Fixed $k\geq 1$, for any $\ell\geq 1$ let us write $n_\ell=km^{(k)}_\ell +r^{(k)}_\ell$, where $0\leq r^{(k)}_\ell \leq k-1$ and $m_\ell^{(k)} \in \mathbb{N}$. Such representation is unique for any fixed $k\geq 1$. Then
    \bb\label{eq:ineqRk}
        -\frac{1}{km^{(k)}_\ell}\log\epsilon^{\rm NS,a}(\new{M,}\pazocal{N}^{\otimes km^{(k)}_\ell}\old{,\log M})
        &\leq -\frac{1}{km^{(k)}_\ell}\log\epsilon^{\rm NS,a}\left(\new{M,}\pazocal{N}^{\otimes km^{(k)}_\ell+r^{(k)}_\ell}\old{,\log M}\right)\\
        &=-\frac{km^{(k)}_\ell +r^{(k)}_\ell}{km^{(k)}_\ell}\frac{1}{n_\ell}\log\epsilon^{\rm NS,a}\left(\new{M,}\pazocal{N}^{\otimes n_\ell}\old{,\log M}\right).
    \ee
    Therefore,
    \bb
        \frac{1}{k}\opnsa{\pazocal{N}^{\otimes k}}&=\liminf_{M\to\infty}\liminf_{m\to \infty}-\frac{1}{km}\log\epsilon^{\rm NS,a}\left(\new{M,}\pazocal{N}^{\otimes km}\old{,\log M}\right)\\
        &\leq \liminf_{M\to\infty}\liminf_{\ell\to \infty}-\frac{1}{km_\ell^{(k)}}\log\epsilon^{\rm NS,a}\left(\new{M,}\pazocal{N}^{\otimes km_\ell^{(k)}}\old{,\log M}\right)\\
        &\leq \liminf_{M\to\infty}\liminf_{\ell\to \infty} -\frac{km^{(k)}_\ell +r^{(k)}_\ell}{km^{(k)}_\ell}\frac{1}{n_\ell}\log\epsilon^{\rm NS,a}\left(\new{M,}\pazocal{N}^{\otimes n_\ell}\old{,\log M}\right)\\
        &\eqt{(viii)}\,\opnsa{\pazocal{N}}
    \ee
    where in (viii) we have used that both the limits
    \bb
        \lim_{\ell\to \infty} \frac{km^{(k)}_\ell +r^{(k)}_\ell}{km^{(k)}_\ell}&=1  
    \ee
    and
    \bb
        \liminf_{M\to\infty}\lim_{\ell\to \infty} -\frac{1}{n_\ell}\log\epsilon^{\rm NS,a}\left(\new{M,}\pazocal{N}^{\otimes n_\ell}\old{,\log M}\right)&=\opnsa{\pazocal{N}}
    \ee
    exist and their product is well defined. The converse bound is simpler, since 
    \bb\label{eq:converseT}
        \frac{1}{k}\opnsa{\pazocal{N}^{\otimes k}}&=\liminf_{M\to\infty}\liminf_{m\to \infty}-\frac{1}{km}\log\epsilon^{\rm NS,a}\left(\new{M,}\pazocal{N}^{\otimes km}\old{,\log M}\right)\\
        &\geqt{(ix)} \liminf_{M\to\infty}\liminf_{n\to \infty}-\frac{1}{n}\log\epsilon^{\rm NS,a}\left(\new{M,}\pazocal{N}^{\otimes n}\old{,\log M}\right)\\
        &=\opnsa{\pazocal{N}},
    \ee
    where in (ix) we have relabeled $mk =n$ and we have relaxed the constraint that $n$ must be a multiple of $k$, obtaining a smaller liminf.
    The two inequalities imply that $\opnsa{\,\cdot\,}$ is weakly additive and, in particular,
    \bb
        \opnsa{\pazocal{N}}=\lim_{k\to \infty}\frac{1}{k}\opnsa{\pazocal{N}^{\otimes k}}.
    \ee
    To deal with $\errnsa{\,\cdot\,}$, let us write as before
    \bb
        \opnsa{\pazocal{N}}=
        \liminf_{R\to 0}\lim_{\ell\to \infty}-\frac{1}{n_\ell}\log\epsilon^{\rm NS,a}(\new{\exp(Rn_\ell),}\pazocal{N}^{\otimes n_\ell}\old{,Rn_\ell}).
    \ee
    for a suitable subsequence $(n_\ell)_{\ell\in\mathbb{N}}$ achieving the liminf, and, for any fixed $k\geq 1$, we introduce $n_\ell=km^{(k)}_\ell +r^{(k)}_\ell$, where $0\leq r^{(k)}_\ell \leq k-1$ and $m_\ell^{(k)} \in \mathbb{N}$. We will call
    \bb
        \xi^{(k)} \coloneqq \sup_{\ell}\frac{n_\ell}{k m^{(k)}_\ell} = \sup_{\ell}\frac{km^{(k)}_\ell +r^{(k)}_\ell}{k m^{(k)}_\ell} \in [1,2).
    \ee
    Then
    \bb
        \frac{1}{k}\errnsa{\pazocal{N}^{\otimes k}}&=\liminf_{R\to 0}\liminf_{m\to \infty}-\frac{1}{km}\log\epsilon^{\rm NS,a}(\new{\exp(Rm),}\pazocal{N}^{\otimes km}\old{, Rm})\\
        &\leq \liminf_{R\to 0 }\liminf_{\ell\to \infty}-\frac{1}{km_\ell}\log\epsilon^{\rm NS,a}\left(\new{\exp\left(Rkm^{(k)}_\ell\right),}\pazocal{N}^{\otimes km^{(k)}_\ell}\old{,Rkm^{(k)}_\ell}\right)\\
        &\eqt{(x)}\liminf_{R' \to 0 }\liminf_{\ell\to \infty}-\frac{1}{km_\ell}\log\epsilon^{\rm NS,a}\left(\new{\exp\left(R'\xi^{(k)} km^{(k)}_\ell\right),}\pazocal{N}^{\otimes km^{(k)}_\ell}\old{,R'\xi^{(k)} km^{(k)}_\ell}\right)\\
        &\leqt{(xi)} \liminf_{R' \to 0 }\liminf_{\ell\to \infty}-\frac{1}{km_\ell}\log\epsilon^{\rm NS,a}\left(\new{\exp\left(R'n_\ell\right),}\pazocal{N}^{\otimes km^{(k)}_\ell}\old{,R'n_\ell}\right)\\
        &\leqt{(xii)} \liminf_{R'\to 0}\liminf_{\ell\to \infty} -\frac{km^{(k)}_\ell +r^{(k)}_\ell}{km^{(k)}_\ell}\frac{1}{n_\ell}\log\epsilon^{\rm NS,a}\left(\new{\exp(R'n_\ell),}\pazocal{N}^{\otimes n_\ell}\old{,R'n_\ell}\right)\\
        &=\errnsa{\pazocal{N}},
    \ee
    where in (x) we have reparameterised $R=R'\xi^{(k)}$, in (xi) we have used that $\xi^{(k)}  km^{(k)}_\ell\geq n_\ell$ by the very definition of $\xi^{(k)}$ and then we have leveraged the monotonicity of $\epsilon^{\rm NS,a}$ with respect to the \new{message size}\old{communication rate}, which stems from its definition, and in (xii) we have used~\eqref{eq:ineqRk}. The converse bound is analogous to~\eqref{eq:converseT}. This proves the additivity of $\errnsa{\,\cdot\,}$. In particular,
    \bb
        \errnsa{\pazocal{N}}=\lim_{k\to \infty}\frac{1}{k}\errnsa{\pazocal{N}^{\otimes k}}\, .
    \ee


\section{Proof of the unassisted zero-rate error exponent of CQ-channels (Proposition~\ref{thm:cq-0-rate})}
\label{app:cq-0-rate}

In this section we present a proof (similar to the one of~\cite{Holevo2000} and~\cite{Dalai_2013}) of Proposition~\ref{thm:cq-0-rate} which we restate.

\begin{thm*}[\cite{Holevo2000, Dalai_2013}]
    Let $\pazocal{N}_{X\to B}(\,\cdot\,) = \sum_{x\in \mathcal{X}} \bra{x}\cdot \ket{x} \rho_x^B$ be a classical-quantum channel. Then, we have
\bb
\errpl{\pazocal{N}} = \sup_{P_X} \sum_{x_1,x_2\in\mathcal{X}}P_{X}(x_1)P_{X}(x_2)\left(-\log \Tr\left[\sqrt{\rho_{x_1}^B}\sqrt{\rho_{x_2}^B}\right]\right).
\ee 
\end{thm*}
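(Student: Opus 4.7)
The plan is to prove the equality by establishing the two inequalities separately: achievability via random coding with the pretty good measurement (PGM), and converse via a Nussbaum-Szkola reduction to the classical Shannon-Gallager-Berlekamp (SGB) zero-rate bound. Setting $E_P^\ast \coloneqq \sum_{x_1,x_2} P_X(x_1)P_X(x_2)\bigl(-\log \Tr[\sqrt{\rho_{x_1}^B}\sqrt{\rho_{x_2}^B}]\bigr)$, the target identity reads $\errpl{\pazocal{N}} = \sup_{P_X} E_P^\ast$.

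For the achievability direction $\errpl{\pazocal{N}} \geq \sup_{P_X} E_P^\ast$, I would fix $P_X$ and, for each $n$, draw $M_n$ codewords i.i.d.\ from $P_X^{\otimes n}$ with $M_n$ growing subexponentially (so that $r = \tfrac{\log M_n}{n} \to 0^+$), decoding via the PGM. A pairwise error bound---Barnum-Knill or a Hayashi-Nagaoka-type quantum union bound---controls the average error by a sum over codeword pairs of overlaps $\Tr[\sqrt{\rho_{x_m^n}^B}\sqrt{\rho_{x_{m'}^n}^B}]$. Since these factorise as $\Tr[\sqrt{\rho_{x^n}^B}\sqrt{\rho_{\tilde x^n}^B}] = \prod_{i=1}^n \Tr[\sqrt{\rho_{x_i}^B}\sqrt{\rho_{\tilde x_i}^B}]$, the strong law of large numbers applied to the i.i.d.\ codewords yields that, with probability tending to $1$, every pair satisfies
\bb
-\tfrac{1}{n}\log \Tr[\sqrt{\rho_{x_m^n}^B}\sqrt{\rho_{x_{m'}^n}^B}] \geq E_P^\ast - \epsilon .
\ee
The subexponential union-bound prefactor $M_n^2$ contributes nothing to the exponent in the limit; a standard expurgation then produces a deterministic code with error exponent at least $E_P^\ast - \epsilon$, and sending $\epsilon\to 0$ and optimising over $P_X$ closes this direction.

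For the converse $\errpl{\pazocal{N}} \leq \sup_{P_X} E_P^\ast$, given any $n$-letter unassisted code inducing output states $\rho_m \coloneqq \pazocal{N}^{\otimes n}(\mathcal{E}_n(m))$, I would apply the Nussbaum-Szkola mapping pairwise: each $(\rho_m, \rho_{m'})$ produces classical distributions $(p^{(m,m')}, q^{(m,m')})$ whose Bhattacharyya coefficient equals $\Tr[\sqrt{\rho_m}\sqrt{\rho_{m'}}]$ and whose classical binary testing error lower bounds the quantum one. Together with a two-step expurgation isolating a sub-code whose pairwise overlap reflects the \emph{average} rather than worst-case behaviour, and with the quantum Fuchs-van de Graaf lower bound $P_e \geq \tfrac{1}{4}\Tr[\sqrt{\rho_m}\sqrt{\rho_{m'}}]^2$, the classical SGB zero-rate converse then delivers the desired inequality.

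The main obstacle is this converse step. Two subtleties must be handled: first, the expurgation must be carried out so that the resulting exponent depends on $\mathbb{E}[-\log \Tr[\,\cdot\,]]$ rather than the cruder $-\log \mathbb{E}[\Tr[\,\cdot\,]]$ that a naive pairwise bound would produce; and second, the factor-of-two loss from Fuchs-van de Graaf and the $M_n^2$ union-bound prefactor must both be shown to vanish after normalisation by $n$ in the zero-rate limit $r\to 0^+$. A possibly cleaner alternative would be to specialise the quantum sphere-packing bound of Dalai and Mosonyi-Ogawa at $R=0$, but the Nussbaum-Szkola route is the most elementary self-contained derivation in the spirit of the appendix.
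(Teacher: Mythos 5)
Your achievability route differs from the paper's but is salvageable. The paper bounds $\epsilon_m \le \sum_{m'\neq m}\Tr[\sqrt{\rho_{x(m)}}\sqrt{\rho_{x(m')}}]$ via an explicit square-root-type decoder and then applies Gallager's $1/s$-moment expurgation, letting $s\to\infty$ to turn $-s\log\mathbb{E}[(\Tr[\cdot])^{1/s}]$ into the desired $\mathbb{E}[-\log\Tr[\cdot]]$; you instead propose concentration of the pairwise Bhattacharyya exponent plus a union bound over all pairs. Two caveats: the strong law of large numbers is not enough to control $M_n^2$ (even subexponentially many) pairs simultaneously --- you need a quantitative deviation bound (Hoeffding/Cram\'er), which in turn requires the per-letter variables $-\log\Tr[\sqrt{\rho_x}\sqrt{\rho_y}]$ to be bounded, failing when some overlap vanishes on the support of $P_X\times P_X$. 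Also, Barnum--Knill controls the error by the \emph{fidelity} $\|\sqrt{\rho_m}\sqrt{\rho_{m'}}\|_1$ rather than $\Tr[\sqrt{\rho_m}\sqrt{\rho_{m'}}]$, so you need Holevo's sharper pairwise bound (the one the paper rederives) to get the right quantity.

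The genuine gap is in your converse. The Fuchs--van de Graaf bound $P_e \ge \tfrac{1}{4}\Tr[\sqrt{\rho_m}\sqrt{\rho_{m'}}]^2$ cannot deliver the stated equality: the \emph{square} doubles the exponent, so after taking $-\tfrac{1}{n}\log$ it yields only $\errpl{\pazocal{N}} \le 2\sup_{P_X}E_P^{\ast}$, which does not meet the achievability bound. Your remark that the "factor-of-two loss" vanishes after normalisation by $n$ confuses the harmless multiplicative constant $\tfrac14$ (which does vanish) with the exponent doubling caused by squaring (which does not). This is exactly why the paper does not use a fidelity-based lower bound: it lower-bounds $\epsilon_m+\epsilon_{m'}$ by $\Tr[\rho_m\wedge\rho_{m'}]$, passes to the Nussbaum--Szko\l a distributions via the Audenaert et al.\ inequality $\Tr[\rho\wedge\sigma]\ge\tfrac12\Tr[P\wedge Q]$ (which loses only a constant, not an exponent factor), and then lower-bounds the classical $\Tr[P\wedge Q]$ by $\exp\bigl(-\inf_{0<\alpha<1}(1-\alpha)D_\alpha(P\|Q)-O(\sqrt{n})\bigr)$ using Blahut's Chebyshev/auxiliary-distribution method and the variational formula for the R\'enyi divergence, before invoking the SGB/Blinovsky subcode extraction. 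If you drop the Fuchs--van de Graaf step and carry the Nussbaum--Szko\l a reduction all the way through the classical Chernoff-type lower bound, your converse becomes the paper's argument; as written, it proves only a factor-two-loose upper bound.
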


\begin{proof}
Let $\pazocal{N}_{X\to B}(\cdot) = \sum_{x\in \mathcal{X}} \bra{x}\cdot \ket{x} \rho_x^B$ be a classical-quantum channel. For positive semi-definite operators $A$ and $B$, we denote by $A\wedge B = A-(A-B)_+$ their non commutative minimum. 

\textbf{Achievability.} We start with a simplified proof of the achievability. Compared to the proof of~\cite{Holevo2000}, the main difference lies in the proof of the following upper bound on the probability of error of sending a message $m\in [M]$:
\bb
\epsilon_m \le \sum_{m'\neq m} \Tr\left[\sqrt{\rho^B_{x(m)}}\sqrt{\rho^B_{x(m')}}\right],
\ee
for any encoding map $m\in [M]\mapsto x(m)\in \mathcal{X}$. We will use the following decoder 
\bb
Y(m) = \frac{\sqrt{\rho^B_{x(m)}}}{\sqrt{\rho^B_{x(m)}}+ \sum_{m'\neq m} \sqrt{\rho^B_{x(m')}}},
\ee
with the quotient of Beigi and Tomamichel~\cite{Beigi2024Aug}, defined as
\bb
    \frac AB\coloneqq\int_0^\infty \frac{1}{\lambda+B}A\frac{1}{\lambda+B}d\lambda.
\ee
 This quotient allows to use the operator inequality $\frac{A}{A+B}\le \frac{A}{B}$ for any $A\ge 0$ and $B>0$. The error probability can be bounded as follows 
\bb\label{app-eq-eps-m}
\epsilon_m &= \Tr\left[ \rho^B_{x(m)} (\id - Y(m))\right]
\\&= \Tr\left[ \rho^B_{x(m)} \frac{\sum_{m'\neq m} \sqrt{\rho^B_{x(m')}}}{\sqrt{\rho^B_{x(m)}}+ \sum_{m'\neq m} \sqrt{\rho^B_{x(m')}}}\right]
\\&\le  \Tr\left[ \rho^B_{x(m)} \frac{\sum_{m'\neq m} \sqrt{\rho^B_{x(m')}}}{\sqrt{\rho^B_{x(m)}}}
\right]
\\&= \sum_{m'\neq m} \Tr\left[ \sqrt{\rho^B_{x(m)}}  \sqrt{\rho^B_{x(m')}}
\right].
\ee  
The remaining of the proof uses  the expurgation technique~\cite{gallager1968information} as in~\cite{Holevo2000}. 

Let $P_X \in \mathcal{P}(\mathcal{X})$ be a probability distribution. Let $n\in \mathbb{N}^+$,  $r>0$, $M=\floor{\exp(rn)}$ and   $s>1$. 
We consider a random coding where each $x^n(m) \sim P_X^{\times n}$ for $m\in [M]$.  From the initial code, we can find a set $\mathcal{C}\subset [M]$ of size $|\mathcal{C}|\ge \frac{M}{2}-1$ such that for all $m\in \mathcal{C}$ we have that~\cite{gallager1968information}
\bb
\epsilon_m \le 2^s\,\mathbb{E}^s\left[ \epsilon_m^{1/s}\right].
\ee
We use the notation $\rho^{\otimes n}_{x^n(m)} = \rho^B_{x_1(m)}\otimes \rho^B_{x_2(m)}\otimes \cdots \otimes \rho^B_{x_n(m)}$ and the inequality~\eqref{app-eq-eps-m} to further upper bound the error probability as follows 
\bb
\epsilon_m &\le 2^s\,\mathbb{E}^s\left[ \epsilon_m^{1/s}\right]
\\&\le 2^s\,\mathbb{E}^s\left[\sum_{m'\neq m} \Tr^{1/s}\left[ \sqrt{\rho^{\otimes n}_{x^n(m)}}  \sqrt{\rho^{\otimes n}_{x^n(m')}}\right]\right]
\\&= 2^s\,(M-1)^s \left[\sum_{x,y} P_X(x)P_X(y)\left(\Tr\left[ \sqrt{\rho^{B}_{x}}  \sqrt{\rho^{B}_{y}}\right]\right)^{1/s} \right]^{ns}.
\ee 
Hence 
\bb
\liminf_{n\to \infty}-\frac{1}{n}\log\frac{1}{|\mathcal{C}|}\sum_{m\in \mathcal{C}}\epsilon_m &\ge -sr -s\log\sum_{x,y} P_X(x)P_X(y)\left(\Tr\left[ \sqrt{\rho^{B}_{x}}  \sqrt{\rho^{B}_{y}}\right]\right)^{1/s}.
\ee 
Taking $r\to 0^+$ then  $s\to \infty$, we deduce that: 
\bb 
 \errpl{\pazocal{N}} \ge \sup_{P_X} \sum_{x, y}P_{X}(x)P_{X}(y)\left(-\log \Tr\left[\sqrt{\rho_{x}^B}\sqrt{\rho_{y}^B}\right]\right). 
\ee

\textbf{Converse.} The converse proof have  similarities with the proof of~\cite[Theorem 4]{Dalai_2013} (see also~\cite{Nussbaum2009}). In particular, we use the Nussbaum-Szko\l a probability distributions to lift the problem to the classical setting. Moreover,~\cite{Dalai_2013} proved a similar inequality to~\eqref{app-eq-aud} which was proved previously by~\cite{Audenaert2008}. We employ Chebyshev's inequality in order to turn the minimum between probability distributions to a minimum between scalars, a method used by Blahut~\cite{blahut74}.  This allows us to use a variational formulation of the $\alpha$-R\'enyi divergence while~\cite{Dalai_2013} employed directly the tilted probability distribution which is actually the optimiser of this variational formulation~\cite{vanErven2014}.

In the block-length setting, we code over $\pazocal{N}^{\otimes n}$ where $n\in \mathbb{N}^+$.  We consider an unassisted  strategy with a codebook $\{x^{n}(m)\}_{m\in [M]}$ and a decoding POVM $\{Y(m)\}_{m\in [M]}$. For any $m, m'\in [M]$, we have that 
\bb 
\epsilon_{m} + \epsilon_{m'} &= \Tr\left[\rho^{\otimes n}_{x^n(m)}  \cdot ( \id -Y(m) )\right] + \Tr\left[\rho^{\otimes n}_{x^n(m')}  \cdot (\id-Y(m') )\right]
\\&\ge \Tr\left[\rho^{\otimes n}_{x^n(m)}  \cdot (\id-Y(m) )\right] + \Tr\left[ \rho^{\otimes n}_{x^n(m')}  \cdot Y(m) \right]
\\&\ge \Tr\left[ \rho^{\otimes n}_{x^n(m)} \wedge \rho^{\otimes n}_{x^n(m')}\right],
\ee
where we use  $\Tr[A\wedge B] \coloneqq \min_{0\le O\le \id} \Tr[AO] + \Tr[B(\id - O)]$ and the notation $\rho^{\otimes n}_{x^n(m)} = \rho^B_{x_1(m)}\otimes \rho^B_{x_2(m)}\otimes \cdots \otimes \rho^B_{x_n(m)}$. 

Given the  eigenvalue decomposition of $\rho_x= \sum_{a} \lambda^{x}_a \ketbra{\phi_a^x}$, we define  
$(p^{x,y},q^{x,y})$ as the Nussbaum-Szko\l a probability distributions for the pair of states $(\rho_x, \rho_{y})$:
\begin{align}\label{eq:NS-one-shot}
p^{x,y}_{a,b} = \lambda^{x}_a  |\spr{\phi^{x}_a}{\phi^{y}_b}|^2 \; \text{ and }\; q^{x,y}_{a,b}  = \lambda^{y}_b  |\spr{\phi^{x}_a}{\phi^{y}_b}|^2, \quad \forall (a,b).
\end{align}
For the pair of states $(\rho^{\otimes n}_{x_n(m)}, \rho^{\otimes n}_{x_n(m')})$, the Nussbaum-Szko\l a probability distributions are 
\bb
P_{a^n, b^n}^{m,m'} &= p^{x_1(m),x_1(m')}_{a_1,b_1}\times  p^{x_2(m),x_2(m')}_{a_2,b_2} \times \dots \times  p^{x_n(m),x_n(m')}_{a_n,b_n}, \quad \forall (a^n,b^n),
\\
Q_{a^n, b^n}^{m,m'} &= q^{x_1(m),x_1(m')}_{a_1,b_1}\times  q^{x_2(m),x_2(m')}_{a_2,b_2} \times \dots \times  q^{x_n(m),x_n(m')}_{a_n,b_n}, \quad \forall (a^n,b^n).
\ee
Using~\cite[Proposition 2]{Audenaert2008}, we have that 
\bb\label{app-eq-aud}
\Tr\left[ \rho^{\otimes n}_{x^n(m)} \wedge \rho^{\otimes n}_{x^n(m')}\right] \ge \frac{1}{2}\Tr\left[ P^{m,m'} \wedge Q^{m,m'}\right].
\ee
For a set of probability distributions $r = \{r^{x, y}\}_{x, y \in \mathcal{X}}$, we define 
\bb
R^{m,m'} = r^{x_1(m),x_1(m')}\times  r^{x_2(m),x_2(m')} \times \dots \times  r^{x_n(m),x_n(m')}, 
\ee 
and the good events 
\bb
\mathcal{G}^{m,m'}_{r,p}&=\left\{(a^n,b^n) : \log\left(\tfrac{P^{m,m'}_{a^n,b^n} }{R^{m,m'}_{a^n,b^n} } \right)\ge - {D(R^{m,m'}\|P^{m,m'})}-\sqrt{4{ \operatorname{Var}(R^{m,m'}\|P^{m,m'})} } \right\}, 
\\\mathcal{G}^{m,m'}_{r,q}&=\left\{(a^n,b^n) : \log\left(\tfrac{Q^{m,m'}_{a^n,b^n} }{R^{m,m'}_{a^n,b^n} } \right)\ge - {D(R^{m,m'}\|Q^{m,m'})}-\sqrt{4{ \operatorname{Var}(R^{m,m'}\|Q^{m,m'})} } \right\}.
\ee
By Chebyshev's inequality and union bound, we have that 
\bb\label{Chebyshev 2}
\sum_{ a^n,b^n } R^{m,m'}_{a^n,b^n} \id\{ (a^n,b^n)\in \mathcal{G}^{m,m'}_{r,p} \cap \mathcal{G}^{m,m'}_{r,q} \}\ge \frac{1}{2}.
\ee
Hence 
\begin{align}
\epsilon_{m} + \epsilon_{m'} &\ge  \Tr\left[ \rho^{\otimes n}_{x^n(m)} \wedge \rho^{\otimes n}_{x^n(m')}\right] \ge \frac{1}{2}\Tr\left[ P^{m,m'} \wedge Q^{m,m'}\right] 
\\&\ge \sup_{r} \frac{1}{2} \sum_{ a^n,b^n } R^{m,m'}_{a^n,b^n} \id\{ (a^n,b^n)\in \mathcal{G}^{m,m'}_{r,p} \cap \mathcal{G}^{m,m'}_{r,q} \} \min\left( \frac{P^{m,m'}_{a^n,b^n}}{R^{m,m'}_{a^n,b^n}}, \frac{Q^{m,m'}_{a,b}}{R^{m,m'}_{a,b}}\right)
\\&\ge  \sup_{r\in \pazocal{R}} \frac{1}{4}\exp\min\bigg(-{D(R^{m,m'}\|P^{m,m'})}-\sqrt{4{ \operatorname{Var}(R^{m,m'}\|P^{m,m'})}} , \label{eq:var1}
\\&\qquad  - {D(R^{m,m'}\|Q^{m,m'})}-\sqrt{4{ \operatorname{Var}(R^{m,m'}\|Q^{m,m'})}}\bigg), \label{eq:var2}
\end{align}
where we restrict the maximisation over $r\in \pazocal{R}$ that satisfy $\supp(r^{x,y}) \subset \supp(p^{x,y})\cap  \supp(q^{x,y})$ for all $x,y\in \mathcal{X}$. For such $r\in \pazocal{R}$, the variance terms in~\eqref{eq:var1} and~\eqref{eq:var2} can be bounded as follows:
\bb 
\operatorname{Var}(R^{m,m'}\|P^{m,m'})&= \sum_{x,y \in \mathcal{X}}  n_{x,y}^{x^n(m), x^n(m')}\operatorname{Var}(r^{x,y}\|p^{x,y})\le nA, \\
\operatorname{Var}(R^{m,m'}\|Q^{m,m'})&= \sum_{x,y \in \mathcal{X}} n_{x,y}^{x^n(m), x^n(m')}\operatorname{Var}(r^{x,y}\|q^{x,y})\le nA,
\ee
where $n_{x,y}^{x^n(m), x^n(m')} = \sum_{i=1}^n \id\{(x_i^n(m), x_i^n(m'))=(x,y)\}$ and $A$ is a constant that depends only on the channel $\pazocal{N}$. Noticing that we have a similar decomposition of the Kullback-Leibler divergence: 
    \bb 
D(R^{m,m'}\|P^{m,m'})&= \sum_{x,y \in \mathcal{X}}  n_{x,y}^{x^n(m), x^n(m')}D(r^{x,y}\|p^{x,y}), \\
D(R^{m,m'}\|Q^{m,m'})&= \sum_{x,y \in \mathcal{X}} n_{x,y}^{x^n(m), x^n(m')}D(r^{x,y}\|q^{x,y}),
\ee
we deduce that 
\bb
&\epsilon_{m} + \epsilon_{m'} 
\\&\ge \frac{\exp(-2\sqrt{nA}) }{4}\sup_{r \in \pazocal{R}}\exp\min\bigg(-\sum_{x,y}n_{x,y}^{x^n(m), x^n(m')} {D(r^{x,y}\|p^{x,y})}, -\sum_{x,y}n_{x,y}^{x^n(m), x^n(m')} {D(r^{x,y}\|q^{x,y})}\bigg)
\\&\eqt{(i)} \frac{\exp(-2\sqrt{nA})}{4}\sup_{r \in \pazocal{R}}\exp\inf_{0< \alpha< 1}\bigg(-\alpha\sum_{x,y}n_{x,y}^{x^n(m), x^n(m')} {D(r^{x,y}\|p^{x,y})} -(1-\alpha)\sum_{x,y}n_{x,y}^{x^n(m), x^n(m')} {D(r^{x,y}\|q^{x,y})}\bigg)
\\&\eqt{(ii)} \frac{\exp(-2\sqrt{nA})}{4}\exp\inf_{0< \alpha< 1}\sup_{r \in \pazocal{R}}\bigg(-\alpha\sum_{x,y}n_{x,y}^{x^n(m), x^n(m')} {D(r^{x,y}\|p^{x,y})} -(1-\alpha)\sum_{x,y}n_{x,y}^{x^n(m), x^n(m')} {D(r^{x,y}\|q^{x,y})}\bigg)
\\ &\eqt{(iii)}  \frac{\exp(-2\sqrt{nA})}{4}\exp\inf_{0< \alpha< 1} \bigg( -(1-\alpha) \sum_{x,y}n_{x,y}^{x^n(m), x^n(m')} {D_{\alpha}(p^{x,y} \| q^{x,y}  )}\bigg)
\\ &= \frac{\exp(-2\sqrt{nA})}{4}\exp\inf_{0 < \alpha < 1} \bigg( -(1-\alpha) \sum_{x,y}n_{x,y}^{x^n(m), x^n(m')} {D_{\alpha}(\rho_x  \| \rho_y  )}\bigg),
\ee
where in (i) we used $\min(a,b) = \inf_{0< \alpha< 1}( \alpha a+ (1-\alpha) b)$; in (ii), we used Sion's minimax theorem~\cite{Sion}; in (iii) we used the variational property of the $\alpha$-R\'enyi divergence~\cite{vanErven2014}. 

At this stage we can follow either~\cite{SHANNON1967522} or~\cite{Blinovsky2001} 
to show that there exists a probability distribution $P_X\in \mathcal{P}(\mathcal{X})$ and  a set $\mathcal{C}\subset [M]$ such that for all $m, m'\in \mathcal{C}$:
\bb
\inf_{0 < \alpha < 1} \bigg( -(1-\alpha) \sum_{x,y}n_{x,y}^{x^n(m), x^n(m')} {D_{\alpha}(\rho_x  \| \rho_y  )}\bigg)
\ge 
-n\sum_{x,y}   P_{X}(x)P_{X}(y)  \frac{1}{2} D_{1/2}(\rho_x\| \rho_y) - o(n),
\ee
which implies 
\bb 
\errpl{\pazocal{N}} &\le \liminf_{r\to 0^+}\liminf_{n\to \infty}-\frac{1}{n}\log \frac{1}{\floor{\exp(rn)}}\sum_{m=1}^{\floor{\exp(rn)}}\epsilon_m
\\&\le \liminf_{r\to 0^+}\liminf_{n\to \infty}-\frac{1}{n}\log \frac{1}{2\floor{\exp(rn)}}\sum_{m,m'\in \mathcal{C}}\epsilon_m+\epsilon_{m'}
\\&\le \sup_{P_X} \sum_{x,y}   P_{X}(x)P_{X}(y)  \frac{1}{2} D_{1/2}(\rho_x\| \rho_y)
\\&= \sup_{P_X} \sum_{x, y}P_{X}(x)P_{X}(y)\left(-\log \Tr\left[\sqrt{\rho_{x}^B}\sqrt{\rho_{y}^B}\right]\right).
\ee 
This concludes the proof.
\end{proof}


\end{document}